\newcommand\EFFACE[1]{}
\newtheorem{theorem}{Theorem}
\newtheorem{lemma}[theorem]{Lemma}
\newtheorem{corollary}[theorem]{Corollary}
\newtheorem{observation}[theorem]{Observation}
\newenvironment{proof}{
\par
\noindent {\bf Proof.}\rm}{\mbox{}\hfill$\square$\par\vskip 3mm}
\def\NNN{\mathbb{N}}
\def\diam{{\rm diam}}
\def\rad{{\rm rad}}
\def\cost{{\rm cost}}
\let\@fnsymbol\@arabic
\begin{document}

%%%%%%%%%%%%%%%%%%%%%%%%%%%%%%%%%%%%%%%%%%%%%%%% \def FIGURES POUR CATERPILLARS...

%% Grappes sans label : 0, 1, 1plus, 2, 2plus, 3, 3plus
\newcommand\CTzero[2]{
\node[scale=0.7,draw,circle,fill=black] (v0) at (#1,#2){};
}

\newcommand\CTun[2]{
\node[scale=0.7,draw,circle,fill=black] (v0) at (#1,#2){};
\node[scale=0.7,draw,circle,fill=black] (v1) at (#1,#2-2){};
\draw[thick] (v0) to (v1);
}

\newcommand\CTzeroplus[2]{
\node[scale=0.7,draw,circle,fill=black] (v0) at (#1,#2){};
\node[scale=0.7,draw,circle,fill=black] (v1) at (#1,#2-2){};
\draw[thick,dotted] (v0) to (v1);
}

\newcommand\CTunplus[2]{
\node[scale=0.7,draw,circle,fill=black] (v0) at (#1,#2){};
\node[scale=0.7,draw,circle,fill=black] (v1) at (#1-0.2,#2-2){};
\draw[thick] (v0) to (v1);
\draw[thick,dotted] (#1+0.1,#2-2) -- ++(0.4,0);
}

\newcommand\CTdeux[2]{
\node[scale=0.7,draw,circle,fill=black] (v0) at (#1,#2){};
\node[scale=0.7,draw,circle,fill=black] (v1) at (#1-0.3,#2-2){};
\node[scale=0.7,draw,circle,fill=black] (v2) at (#1+0.3,#2-2){};
\draw[thick] (v0) to (v1);
\draw[thick] (v0) to (v2);
}

\newcommand\CTdeuxmoins[2]{
\node[scale=0.7,draw,circle,fill=black] (v0) at (#1,#2){};
\node[scale=0.7,draw,circle,fill=black] (v1) at (#1-0.3,#2-2){};
\node[scale=0.7,draw,circle,fill=black] (v2) at (#1+0.3,#2-2){};
\draw[thick] (v0) to (v1);
\draw[thick,dashed] (v0) to (v2);
}

\newcommand\CTdeuxplus[2]{
\node[scale=0.7,draw,circle,fill=black] (v0) at (#1,#2){};
\node[scale=0.7,draw,circle,fill=black] (v1) at (#1-0.5,#2-2){};
%\node[scale=0.7,draw,circle,fill=black] (v2) at (#1,#2-2){};
\node[scale=0.7,draw,circle,fill=black] (v3) at (#1+0.5,#2-2){};
\draw[thick] (v0) to (v1);
%\draw[thick] (v0) to (v2);
\draw[thick] (v0) to (v3);
\draw[thick,dotted] (#1-0.15,#2-2) -- ++(0.4,0);
}

\newcommand\CTtrois[2]{
\node[scale=0.7,draw,circle,fill=black] (v0) at (#1,#2){};
\node[scale=0.7,draw,circle,fill=black] (v1) at (#1-0.6,#2-2){};
\node[scale=0.7,draw,circle,fill=black] (v2) at (#1,#2-2){};
\node[scale=0.7,draw,circle,fill=black] (v3) at (#1+0.6,#2-2){};
\draw[thick] (v0) to (v1);
\draw[thick] (v0) to (v2);
\draw[thick] (v0) to (v3);
}

\newcommand\CTtroisplus[2]{
\node[scale=0.7,draw,circle,fill=black] (v0) at (#1,#2){};
\node[scale=0.7,draw,circle,fill=black] (v1) at (#1-0.8,#2-2){};
\node[scale=0.7,draw,circle,fill=black] (v2) at (#1-0.2,#2-2){};
\node[scale=0.7,draw,circle,fill=black] (v3) at (#1+0.8,#2-2){};
\draw[thick] (v0) to (v1);
\draw[thick] (v0) to (v2);
\draw[thick] (v0) to (v3);
\draw[thick,dotted] (#1+0.15,#2-2) -- ++(0.4,0);
}

%% Grappes avec label : 0, 1, 1plus, 2, 2plus, 3, 3plus
\newcommand\CTzeroL[3]{
\node[scale=0.7,draw,circle,fill=black] (v0) at (#1,#2){};
\node[below] at (#1,#2+1) {#3};
}

\newcommand\CTunL[3]{
\node[scale=0.7,draw,circle,fill=black] (v0) at (#1,#2){};
\node[scale=0.7,draw,circle,fill=black] (v1) at (#1,#2-2){};
\draw[thick] (v0) to (v1);
\node[above] at (#1,#2-3) {#3};
}

\newcommand\CTunplusL[3]{
\node[scale=0.7,draw,circle,fill=black] (v0) at (#1,#2){};
\node[scale=0.7,draw,circle,fill=black] (v1) at (#1-0.2,#2-2){};
\node[above] at (#1-0.2,#2-3) {#3};
\draw[thick] (v0) to (v1);
\draw[thick,dotted] (#1+0.1,#2-2) -- ++(0.4,0);
}

\newcommand\CTdeuxL[4]{
\node[scale=0.7,draw,circle,fill=black] (v0) at (#1,#2){};
\node[scale=0.7,draw,circle,fill=black] (v1) at (#1-0.3,#2-2){};
\node[scale=0.7,draw,circle,fill=black] (v2) at (#1+0.3,#2-2){};
\draw[thick] (v0) to (v1);
\draw[thick] (v0) to (v2);
\node[above] at (#1-0.3,#2-3) {#3};
\node[above] at (#1+0.3,#2-3) {#4};
}

\newcommand\CTdeuxplusL[4]{
\node[scale=0.7,draw,circle,fill=black] (v0) at (#1,#2){};
\node[scale=0.7,draw,circle,fill=black] (v1) at (#1-0.5,#2-2){};
%\node[scale=0.7,draw,circle,fill=black] (v2) at (#1,#2-2){};
\node[scale=0.7,draw,circle,fill=black] (v3) at (#1+0.5,#2-2){};
\draw[thick] (v0) to (v1);
%\draw[thick] (v0) to (v2);
\draw[thick] (v0) to (v3);
\node[above] at (#1-0.5,#2-3) {#3};
\node[above] at (#1+0.5,#2-3) {#4};
\draw[thick,dotted] (#1-0.15,#2-2) -- ++(0.4,0);
}

\newcommand\CTtroisL[5]{
\node[scale=0.7,draw,circle,fill=black] (v0) at (#1,#2){};
\node[scale=0.7,draw,circle,fill=black] (v1) at (#1-0.6,#2-2){};
\node[scale=0.7,draw,circle,fill=black] (v2) at (#1,#2-2){};
\node[scale=0.7,draw,circle,fill=black] (v3) at (#1+0.6,#2-2){};
\draw[thick] (v0) to (v1);
\draw[thick] (v0) to (v2);
\draw[thick] (v0) to (v3);
\node[above] at (#1-0.6,#2-3) {#3};
\node[above] at (#1,#2-3) {#4};
\node[above] at (#1+0.6,#2-3) {#5};
}

\newcommand\CTtroisplusL[5]{
\node[scale=0.7,draw,circle,fill=black] (v0) at (#1,#2){};
\node[scale=0.7,draw,circle,fill=black] (v1) at (#1-0.8,#2-2){};
\node[scale=0.7,draw,circle,fill=black] (v2) at (#1-0.2,#2-2){};
\node[scale=0.7,draw,circle,fill=black] (v3) at (#1+0.8,#2-2){};
\draw[thick] (v0) to (v1);
\draw[thick] (v0) to (v2);
\draw[thick] (v0) to (v3);
\node[above] at (#1-0.8,#2-3) {#3};
\node[above] at (#1-0.2,#2-3) {#4};
\node[above] at (#1+0.8,#2-3) {#5};
\draw[thick,dotted] (#1+0.15,#2-2) -- ++(0.4,0);
}

\def\CATERPILLAR{
\CTun{0}{0}
\CTzero{2}{0}
\CTdeux{4}{0}
\CTun{6}{0}
\CTun{8}{0}
\CTdeux{10}{0}
\CTun{12}{0}
\CTzero{14}{0}
\CTtrois{16}{0}
% spine
\draw[thick] (0,0) to (16,0);
%
%\CTunplus{18}{0}
%\CTdeuxplus{20}{0}
%\CTtroisplus{22}{0}
}

\def\CATERPILLARSAVE{
% sommets spine
\node[draw,circle,fill=black] (v0) at (0,0){};
\node[draw,circle,fill=black] (v1) at (2,0){};
\node[draw,circle,fill=black] (v2) at (4,0){};
\node[draw,circle,fill=black] (v3) at (6,0){};
\node[draw,circle,fill=black] (v4) at (8,0){};
\node[draw,circle,fill=black] (v5) at (10,0){};
\node[draw,circle,fill=black] (v6) at (12,0){};
\node[draw,circle,fill=black] (v7) at (14,0){};
\node[draw,circle,fill=black] (v8) at (16,0){};

% feuilles
\node[draw,circle,fill=black] (f01) at (0,-2){};
\node[draw,circle,fill=black] (f21) at (3.4,-2){};
\node[draw,circle,fill=black] (f22) at (4.6,-2){};
\node[draw,circle,fill=black] (f31) at (6,-2){};
\node[draw,circle,fill=black] (f41) at (8,-2){};
\node[draw,circle,fill=black] (f51) at (9.4,-2){};
\node[draw,circle,fill=black] (f52) at (10.6,-2){};
\node[draw,circle,fill=black] (f61) at (12,-2){};
\node[draw,circle,fill=black] (f81) at (15.2,-2){};
\node[draw,circle,fill=black] (f82) at (16,-2){};
\node[draw,circle,fill=black] (f83) at (16.8,-2){};

% aretes
\draw[thick] (v0) to (v8);

\draw[thick] (v0) to (f01);
\draw[thick] (v2) to (f21);
\draw[thick] (v2) to (f22);
\draw[thick] (v3) to (f31);
\draw[thick] (v4) to (f41);
\draw[thick] (v5) to (f51);
\draw[thick] (v5) to (f52);
\draw[thick] (v6) to (f61);
\draw[thick] (v8) to (f81);
\draw[thick] (v8) to (f82);
\draw[thick] (v8) to (f83);
} % CATERPILLAR
%%%%%%%%%%%%%%%%%%%%%%%%%%%%%%%%%%%%%%%%%%%%%%%%%%%%%%%%%%%%%%%%%%%%%%%%%%%%%%%%%%%%%%%%%%%%%%%%%%%%%
%\begin{document}

%\vspace*{2cm}

\title{On the Broadcast Independence Number of Caterpillars}

%\label{adr-algerie}
\author{Messaouda AHMANE~\thanks{Faculty of Mathematics, Laboratory L'IFORCE, University of Sciences and Technology
Houari Boumediene (USTHB), B.P.~32 El-Alia, Bab-Ezzouar, 16111 Algiers, Algeria.}
\and Isma BOUCHEMAKH~\footnotemark[1]
\and \'Eric SOPENA~\thanks{Univ. Bordeaux, Bordeaux INP, CNRS, LaBRI, UMR 5800, F-33400 Talence, France.}
}

\maketitle

\abstract{
Let $G$ be a simple undirected graph.
A broadcast on $G$ is
a function $f : V(G)\rightarrow\NNN$ such that $f(v)\le e_G(v)$ holds for every vertex $v$ of $G$, 
where $e_G(v)$ denotes the eccentricity of $v$ in $G$, that is, the maximum distance from $v$ to any other vertex of $G$.
The cost of $f$ is the value $\cost(f)=\sum_{v\in V(G)}f(v)$.
A broadcast $f$ on $G$ is independent if for every two distinct vertices $u$ and $v$ in $G$, $d_G(u,v)>\max\{f(u),f(v)\}$,
where $d_G(u,v)$ denotes the distance between $u$ and $v$ in $G$.
The broadcast independence number of $G$ is then defined as the maximum cost of an independent broadcast on $G$.
 
In this paper, we study independent broadcasts of caterpillars and give an explicit formula for the
 broadcast independence number of caterpillars having no pair of adjacent trunks, %vertices with degree 2. 
a trunk being an internal spine vertex with degree~2.
}%abstract

\medskip

\noindent
{\bf Keywords:} Independence; Distance; Broadcast independence; Caterpillar.

\noindent
{\bf MSC 2010:} 05C12, 05C69.

%%%%%%%%%%%%%%%%%%%%%%%%%%%%%%%%%%%%%%%%%%%%%%%%%%%%%%%%%%%%%%%%%%%%%%%%%%%%%%%%%%%%%%%%%%%%%%%%%%%%%%%%%%%%%%%%%%%%%%%%%%%%%
%%%%%%%%%%%%%%%%%%%%%%%%%%%%%%%%%%%%%%%%%%%%%%%%%%%%%%%%%%%%%%%%%%%%%%%%%%%%%%%%%%%%%%%%%%%%%%%%%%%%%%%%%%%%%%%%%%%%%%%%%%%%%
%%%%%%%%%%%%%%%%%%%%%%%%%%%%%%%%%%%%%%%%%%%%%%%%%%%%%%%%%%%%%%%%%%%%%%%%%%%%%%%%%%%%%%%%%%%%%%%%%%%%%%%%%%%%%%%%%%%%%%%%%%%%%
%%%%%%%%%%%%%%%%%%%%%%%%%%%%%%%%%%%%%%%%%%%%%%%%%%%%%%%%%%%%%%%%%%%%%%%%%%%%%%%%%%%%%%%%%%%%%%%%%%%%%%%%%%%%%%%%%%%%%%%%%%%%%
%%%%%%%%%%%%%%%%%%%%%%%%%%%%%%%%% INTRODUCTION
\section{Introduction}

All the graphs we consider in this paper are simple and loopless undirected graphs. 
We denote by $V(G)$ and $E(G)$ the set of vertices and the set of edges of a graph $G$, respectively.

For any two vertices $u$ and $v$ of $G$,  the \emph{distance} $d_G(u,v)$ between $u$ and $v$ in $G$
is the length (number of edges) of a shortest path joining $u$ and $v$.
The \emph{eccentricity} $e_G(v)$ of a vertex $v$ in $G$
the maximum distance from $v$ to any other vertex of $G$. 
The minimum eccentricity in $G$ is the \emph{radius} $\rad(G)$ of $G$, while the maximum eccentricity in $G$ is the
\emph{diameter} $\diam(G)$ of $G$. 
Two vertices $u$ and $v$ with $d_G(u, v) = \diam(G)$ are said to be \emph{antipodal}.

A function $f : V(G)\rightarrow\{0,\dots,\diam(G)\}$ is a \emph{broadcast} if for every vertex $v$ of $G$, $f(v)\le e_G(v)$.
The value $f(v)$ is  called the \emph{$f$-value of $v$}.
Given a broadcast $f$ on $G$, an \emph{$f$-broadcast vertex} is a vertex $v$ with $f(v)>0$.
The set of all $f$-broadcast vertices is denoted $V_f^+$.
If $u\in V_f^+$ is a broadcast vertex, $v\in V(G)$ and $d_G(u,v)\le f(u)$, we say that \emph{$u$ $f$-dominates $v$}.
In particular, every $f$-broadcast vertex $f$-dominates itself.
The \emph{cost} $\cost(f)$ of a broadcast $f$ on $G$ is given by
$$\cost(f)=\sum_{v\in V(G)}f(v)=\sum_{v\in V_f^+}f(v).$$

A broadcast $f$ on $G$ is a \emph{dominating broadcast} if every vertex of $G$
is $f$-dominated by some vertex of $V_f^+$.
The minimum cost of a dominating broadcast on $G$ is the \emph{broadcast domination number} of $G$,
denoted $\gamma_b(G)$.
A %dominating 
broadcast $f$ on $G$ is an \emph{independent broadcast} if every $f$-broadcast vertex 
is $f$-dominated only by itself.
The maximum cost of an independent broadcast on $G$ is the \emph{broadcast independence number} of $G$,
denoted $\beta_b(G)$.
An independent broadcast on $G$ with cost $\beta$ is an independent \emph{$\beta$-broadcast}.
An independent $\beta_b(G)$-broadcast on $G$ is an \emph{optimal} independent broadcast.
Note here that any optimal independent broadcast is necessarily a dominating broadcast.

The notions of broadcast domination and broadcast independence were introduced by D.J.~Erwin
in his Ph.D. thesis~\cite{E01} under the name of \emph{cost domination} and \emph{cost independence}, respectively.
During the last decade,
broadcast domination has been investigated by several authors, 
see e.g.~\cite{BHHM04,BB10,BS11,BS09,CHM11,DDH09,HL06,HM09,LM15,MW13,S08,SK14},
while independent broadcast domination has attracted much less attention.

In particular, Seager considered in~\cite{S08} broadcast domination of caterpillars.
She characterized caterpillars with broadcast domination number equal to their domination number,
and caterpillars with broadcast domination number equal to their radius.
Blair, Heggernes, Horton and Manne proposed in~\cite{BHHM04} an $O(nr)$-algorithm for computing
the broadcast domination number of a tree of order $n$ with radius $r$.

However, determining the independent broadcast number of trees seems to be a difficult problem.
We propose in this paper a first step in this direction, by studying a subclass of the
class of caterpillars. Recall that a caterpillar is a tree such that deleting all its pendent
vertices leaves a simple path called the spine. The subclass we will consider is the subclass of caterpillars
% no pair of adjacent vertices with degree 2.
having no pair of adjacent trunks, %vertices with degree 2. 
a trunk being an internal spine vertex with degree~2.

We now review a few results on independent broadcast numbers.
Let $G$ be a graph  and $A\subset V(G)$, $|A|\ge 2$, be a set of pairwise antipodal vertices in $G$.
The function $f$ defined by $f(u)=\diam(G)-1$ for every vertex $u\in A$,
and $f(v)=0$ for every vertex $v\not\in A$,
is clearly an independent $|A|(\diam(G)-1)$-broadcast on $G$. 
%Hence, we have the following:

\begin{observation}[Dunbar, Erwin, Haynes, Hedetniemi and Hedetniemi~\cite{DEHHH06}]\mbox{}\\
For every graph $G$ of order at least 2
and every set $A\subset V(G)$, $|A|\ge 2$, of pairwise antipodal vertices in $G$, $\beta_b(G)\ge |A|(\diam(G)-1)$.
In particular, for every tree $T$, $\beta_b(T)\ge 2(\diam(G)-1)$.
\label{obs:2(d-1)}
\end{observation}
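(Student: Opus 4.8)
The plan is to prove the inequality by exhibiting an explicit independent broadcast of cost exactly $|A|(\diam(G)-1)$, namely the function $f$ already suggested above: set $f(u)=\diam(G)-1$ for every $u\in A$, and $f(v)=0$ for every $v\in V(G)\setminus A$. First I would check that $f$ is indeed a broadcast, i.e.\ that $f(v)\le e_G(v)$ for every $v$. This is immediate when $v\notin A$, and when $u\in A$ it follows from the fact that an antipodal vertex has eccentricity equal to $\diam(G)$, so $f(u)=\diam(G)-1\le\diam(G)=e_G(u)$; note that since $G$ has order at least $2$ we have $\diam(G)\ge 1$, so $f$ takes non-negative values and is well defined.

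Next I would verify that $f$ is independent, using the characterisation that every $f$-broadcast vertex must be $f$-dominated only by itself. Since $V_f^+\subseteq A$, fix $u\in V_f^+$ and suppose some $w\in V_f^+$ with $w\ne u$ $f$-dominates $u$, that is, $d_G(u,w)\le f(w)=\diam(G)-1$. As $u$ and $w$ are two distinct members of the set $A$ of pairwise antipodal vertices, we have $d_G(u,w)=\diam(G)$, a contradiction. Hence $f$ is an independent broadcast, and its cost is $\cost(f)=\sum_{u\in A}(\diam(G)-1)=|A|(\diam(G)-1)$, which yields $\beta_b(G)\ge|A|(\diam(G)-1)$. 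For the tree statement, I would simply take $A$ to consist of the two endpoints of a longest path of $T$: these are at mutual distance $\diam(T)$, hence pairwise antipodal, and applying the bound with $|A|=2$ gives $\beta_b(T)\ge 2(\diam(T)-1)$.

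The argument is essentially a one-line construction, so there is no genuine obstacle here; the only point deserving a moment's care is the verification that $f$ is a legitimate broadcast, which is precisely where the antipodality hypothesis (eccentricity equal to the diameter) is used, together with the harmless degenerate case $\diam(G)=1$, in which the asserted inequality reads $\beta_b(G)\ge 0$ and holds trivially.
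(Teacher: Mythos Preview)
Your proof is correct and follows exactly the paper's approach: the paper exhibits the same function $f$ (value $\diam(G)-1$ on $A$, zero elsewhere) in the paragraph immediately preceding the observation and simply asserts it is an independent $|A|(\diam(G)-1)$-broadcast. You have merely filled in the routine verifications that the paper left implicit.
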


An independent broadcast $f$ on a graph $G$ is \emph{maximal independent} if there
is no independent broadcast $f'\neq f$ such that $f'(v)\ge f(v)$ for every
vertex $v\in V(G)$.
In~\cite{E01}, D.J.~Erwin proved the following result (see also~\cite{DEHHH06}).

\begin{theorem}[Erwin~\cite{E01}]\mbox{}\\
Let $f$ be an independent broadcast on $G$.
If $V^+_f=\{v\}$, then $f$ is maximal independent if and only if $f(v)=e_G(v)$.
If $|V^+_f|\ge 2$, then $f$ is maximal independent if and only if the following two
conditions are satisfied:
\begin{enumerate}
\item $f$ is dominating, and
\item for every $v\in V^+_f$, $f(v) = \min\big\{d_G(v,u):\ u\in V^+_f\setminus\{v\}\big\} - 1$.
\end{enumerate}
\label{th:maximality}
\end{theorem}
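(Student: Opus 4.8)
The plan is to treat the two cases $|V^+_f|=1$ and $|V^+_f|\ge 2$ separately, and in each to prove both directions by a \emph{perturbation} argument: to witness that $f$ is not maximal I will exhibit an independent broadcast $f'\neq f$ with $f'\ge f$ at every vertex, and to prove maximality I will show that every such $f'$ must violate independence, i.e.\ must make some $f'$-broadcast vertex $f'$-dominated by a different $f'$-broadcast vertex.

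\emph{Case $V^+_f=\{v\}$.} If $f(v)<e_G(v)$, I raise $f(v)$ by one; the result is still a broadcast with a single broadcast vertex, hence trivially independent, so $f$ is not maximal. Conversely, assume $f(v)=e_G(v)$ and let $f'\ge f$ with $f'\neq f$ be a broadcast. The broadcast constraint forces $f'(v)=e_G(v)$ (note $f(v)=e_G(v)\ge 1$, so $v\in V^+_{f'}$), hence the extra weight lies on some $w\neq v$; then $d_G(v,w)\le e_G(v)=f'(v)$, so $v$ $f'$-dominates $w$, contradicting independence of $f'$. Thus $f$ is maximal.

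\emph{Case $|V^+_f|\ge 2$.} For necessity I prove the two conditions by two independent perturbations. Independence of $f$ already gives $f(v)\le \min\{d_G(v,u):u\in V^+_f\setminus\{v\}\}-1$ for every $v\in V^+_f$; if this were strict for some $v$, raising $f(v)$ by one would keep the broadcast constraint (the minimising distance is at most $e_G(v)$, so the new value is at most $e_G(v)-1$) and keep independence (only pairs through $v$ are affected, and the assumed slack is exactly what is needed), contradicting maximality, which yields condition~2. If $f$ were not dominating, an undominated vertex $w$ lies outside $V^+_f$ and satisfies $d_G(w,u)\ge f(u)+1\ge 2$ for every $u\in V^+_f$, so setting $f'(w)=1$ gives a strictly larger independent broadcast, contradicting maximality, which yields condition~1. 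For sufficiency, assume conditions~1--2, let $f'\ge f$ with $f'\neq f$, and pick $w$ with $f'(w)>f(w)$. If $w\in V^+_f$, condition~2 supplies $u_0\in V^+_f\setminus\{w\}$ with $d_G(w,u_0)=f(w)+1\le f'(w)$, so $w$ $f'$-dominates $u_0$; if $w\notin V^+_f$, condition~1 supplies $u_0\in V^+_f$ (necessarily distinct from $w$) with $d_G(u_0,w)\le f(u_0)\le f'(u_0)$, so $u_0$ $f'$-dominates $w$. In both cases an $f'$-broadcast vertex is $f'$-dominated by a different $f'$-broadcast vertex, so $f'$ is not independent, and $f$ is maximal.

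The proof uses no deep idea. The points requiring care are the bookkeeping of which vertices are broadcast vertices of $f'$ after each modification, and checking that the broadcast constraint $f'(x)\le e_G(x)$ survives — in particular, in the condition~2 perturbation, observing that the nearest other broadcast vertex lies within distance $e_G(v)$ of $v$, which is precisely what keeps the raised value admissible. The main (and only minor) obstacle is thus organising these local case checks cleanly rather than any conceptual difficulty.
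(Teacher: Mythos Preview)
The paper does not actually prove this theorem: it is quoted from Erwin's thesis~\cite{E01} (see also~\cite{DEHHH06}) and stated without proof, so there is no ``paper's own proof'' against which to compare your argument.

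Your proof is correct. The perturbation approach you use --- raising a single value to witness non-maximality, and in the converse direction showing that any strictly larger broadcast must have two broadcast vertices within reach of one another --- is exactly the natural argument for this statement. All the bookkeeping you flag (the eccentricity bound when you raise $f(v)$ in the condition~2 perturbation, and the fact that $V^+_f\subseteq V^+_{f'}$ whenever $f'\ge f$) is handled correctly. One tiny point you might make explicit: in the sufficiency direction with $w\in V^+_f$, you also need $f'(w)\le e_G(w)$ for $f'$ to be a broadcast at all, but since you are assuming $f'$ \emph{is} a broadcast this is given; similarly, in the condition~1 perturbation you implicitly use $e_G(w)\ge 1$, which holds because $G$ has at least two vertices. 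These are routine and your write-up is sound.
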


Erwin proved that $\beta_b(P_n)=2(n-2)=2(\diam(P_n)-1)$ for every path $P_n$
of length $n\ge 3$~\cite{E01}.
In~\cite{BZ14}, Bouchemakh and Zemir determined the independent broadcast number of %considered the case of 
square grids.

\begin{theorem}[Bouchemakh and Zemir~\cite{BZ14}]\mbox{}\\
Let $G_{m,n}$ denote the square grid with $m$ rows and $n$ columns, $m\ge 2$, $n\ge 2$. We then have:
\begin{enumerate}
%\item $\beta_b(G_{2,n})=2(n-1)$,
%\item $\beta_b(G_{3,n})=2n$,
%\item $\beta_b(G_{4,n})=2(n+1)$,
\item $\beta_b(G_{m,n})=2(m+n-3)=2(\diam(G_{m,n})-1)$ if $m\le 4$,
\item $\beta_b(G_{5,5})=15$, $\beta_b(G_{5,6})=16$, and
\item $\beta_b(G_{m,n})=\left\lceil\frac{mn}{2}\right\rceil$ for every $m,n$, $5\le m\le n$, $(m,n)\neq (5,5),(5,6)$.
\end{enumerate}
\end{theorem}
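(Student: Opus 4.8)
The plan is to establish, in each regime, a lower bound by an explicit independent broadcast and a matching upper bound by a packing argument. For the lower bounds: when $m\le 4$, Observation~\ref{obs:2(d-1)} applied to the antipodal pair of opposite corners $(1,1),(m,n)$ gives $\beta_b(G_{m,n})\ge 2(\diam(G_{m,n})-1)=2(m+n-3)$. When $5\le m\le n$, use that $G_{m,n}$ is bipartite with colour classes of sizes $\lceil mn/2\rceil\ge\lfloor mn/2\rfloor$: setting $f\equiv 1$ on the larger class and $f\equiv 0$ elsewhere is independent, since any two broadcast vertices lie in the same class and are therefore at distance $\ge 2>\max\{f(u),f(v)\}$, so $\beta_b(G_{m,n})\ge\lceil mn/2\rceil$. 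For the two sporadic grids one exhibits broadcasts of the claimed cost directly: for $G_{5,6}$ the corner construction already yields $16$, while for $G_{5,5}$ one needs a genuinely mixed broadcast combining a few vertices of $f$-value~$2$ with several of $f$-value~$1$ to reach $15$ (this small configuration is found and verified by hand).

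For the upper bound, let $f$ be an optimal independent broadcast; by Theorem~\ref{th:maximality} we may assume $f$ is maximal, hence dominating, and if $|V^+_f|\ge 2$ then $f(v)+1$ is the distance from $v$ to its nearest other broadcast vertex. I would then run a charging argument. Assign to each $v\in V^+_f$ a \emph{territory} $T_v$ --- for instance the set of vertices that are strictly closer to $v$ than to any other broadcast vertex, ties broken by a fixed vertex ordering --- so that the sets $T_v$ are pairwise disjoint and $\sum_{v}|T_v|\le mn$. Since $d(u,v)\ge f(v)+1$ for every other broadcast vertex $u$, the ball $B(v,\lfloor f(v)/2\rfloor)$ is contained in $T_v$, so $|T_v|$ is at least the size of that ball; in the interior of the grid this is $\approx\frac12 f(v)^2$, so large $f$-values are very expensive in territory. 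One then needs an efficiency inequality --- concretely something like $|T_v|\ge 2f(v)$ for every $v$ with $f(v)\ge1$ --- after which $\sum_v 2f(v)\le mn$ gives $\cost(f)\le mn/2$ and hence $\le\lceil mn/2\rceil$. Establishing this inequality in the case $f(v)=1$, where the ball estimate degenerates, is the delicate part and calls for a short local discharging analysis showing that a $1$-broadcast vertex always owns at least one extra vertex in a suitable sense.

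The main obstacle, and the reason for the case split and the two exceptions, is the boundary of the grid: a high-value broadcast vertex placed in or near a corner has a ball far smaller than the interior estimate $\frac12 f(v)^2$, so such vertices escape the quadratic penalty. This is exactly why in the thin grids $m\le 4$ the optimum is realised by two large corner broadcasts rather than by an all-ones independent set, and there I would argue separately: on a grid of width at most~$4$ one shows by a direct structural argument that at most two broadcast vertices can carry an $f$-value comparable to the radius, and that any additional small-value vertices cannot make up the difference, yielding $\cost(f)\le 2(m+n-3)$. For $m\ge 5$ the packing estimate should be strong enough to give $\lceil mn/2\rceil$ for all grids except the two smallest, where the boundary still matters but only marginally; the residual cases $G_{5,5}$ and $G_{5,6}$ I would close by a finite case analysis (bounding the number and $f$-values of broadcast vertices and then enumerating), obtaining $\beta_b(G_{5,5})=15$ and $\beta_b(G_{5,6})=16$ and completing the proof.
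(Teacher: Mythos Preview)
This theorem is not proved in the paper at all: it is quoted from Bouchemakh and Zemir~\cite{BZ14} as background in the introduction, and the paper offers no argument for it. There is therefore no ``paper's own proof'' to compare your proposal against; any assessment of your sketch would have to be made against~\cite{BZ14} itself, not against the present article.

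That said, a brief remark on your sketch. The lower bounds are fine and standard. The upper bound via territories is the right shape of argument, but the key inequality $|T_v|\ge 2f(v)$ does not follow from the ball estimate alone for small values: with $f(v)=1$ the ball $B(v,0)$ has a single vertex, and with $f(v)=3$ the ball $B(v,1)$ has at most five vertices while you need six. You acknowledge the $f(v)=1$ issue, but the $f(v)\in\{2,3\}$ cases near the boundary are equally delicate, and ``a short local discharging analysis'' is hiding essentially all of the work. The thin-grid upper bound (``at most two broadcast vertices can carry an $f$-value comparable to the radius'') is also asserted rather than argued. So while your outline is plausible, it is not yet a proof; the actual argument in~\cite{BZ14} is where the substance lies.
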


In this paper, we determine the broadcast independence
number of  caterpillars
having no pair of adjacent trunks.
The paper is organised as follows. We introduce in the next section the
main definitions and a few preliminary results on independent broadcasts
of caterpillars.
We then consider in Section~\ref{sec:no-adjacent-trunks} the case of caterpillars
having no pair of adjacent trunks
and prove our main result, which gives an explicit
formula for %computing 
the broadcast independence number of such caterpillars.
We finally propose a few directions for future research in Section~\ref{sec:discussion}.

%%%%%%%%%%%%%%%%%%%%%%%%%%%%%%%%%%%%%%%%%%%%%%%%%%%%%%%%%%%%%%%%%%
%%%%%%%%%%%%%%%%%%%%%%%%%%%%%%%%%%%%%%%%%%%%%%%%%%%%%%%%%%%%%%%%%%
%%%%%%%%%%%%%%%%%%%%%%%%%%%%%%%%%%%%%%%%%%%%%%%%%%%%%%%%%%%%%%%%%%
%%%%%%%%%%%%%%%%%%%%%%%%%%%%%%%%%%%%%%%%%%%%%%%%%%%%%%%%%%%%%%%%%%
%%%%%%%%%%%%%%%%%%%%%%%%%%%%%%%%%%%%%%%%%%%%%%%%%%%%%%%%%%%%%%%%%%
%%%%%%%%%%%%%%%%%%%%%%%%%%%%%%%%%%%%%%%%%%%%%%%%%%%%%%%%%%%%%%%%%% PRELIMINARIES
\section{Preliminaries}
\label{sec:preliminaries}

Let $G$ be a graph and $H$ be a subgraph of $G$.
Since $d_H(u,v)\ge d_G(u,v)$ for every two vertices $u,v\in V(H)$,
every independent broadcast $f$ on $G$ satisfying $f(u)\le e_H(u)$ for every
vertex $u\in V(H)$ is an independent broadcast on $H$.
Hence we have:

\begin{observation}
If $H$ is a subgraph of $G$ and $f$ is an independent broadcast on $G$ 
satisfying $f(u)\le e_H(u)$ for every vertex $u\in V(H)$, then the restriction
$f_H$ of $f$ to $V(H)$ is an independent broadcast on $H$.
\label{obs:subgraph}
\end{observation}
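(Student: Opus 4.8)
The plan is to check directly that $f_H$ satisfies the two defining properties of an independent broadcast on $H$. For the first property, that $f_H$ is a broadcast on $H$, I would simply observe that $f_H(u) = f(u) \le e_H(u)$ for every $u \in V(H)$ is precisely the hypothesis; in particular $f_H(u) \le e_H(u) \le \diam(H)$, so $f_H$ takes its values in $\{0,\dots,\diam(H)\}$ as required of a broadcast.

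For the second property, that $f_H$ is independent on $H$, I would fix two distinct vertices $u,v \in V(H)$ and chain three facts: (i) $d_H(u,v) \ge d_G(u,v)$, which is the monotonicity of distance under passing to a subgraph recalled just before the statement (with the usual convention that $d_H(u,v)=\infty$ when $u$ and $v$ lie in different connected components of $H$); (ii) $d_G(u,v) > \max\{f(u),f(v)\}$, since $f$ is an independent broadcast on $G$; and (iii) $f(u)=f_H(u)$ and $f(v)=f_H(v)$ by definition of the restriction. Combining these yields $d_H(u,v) \ge d_G(u,v) > \max\{f(u),f(v)\} = \max\{f_H(u),f_H(v)\}$, which is exactly the independence condition for $f_H$ on $H$.

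There is essentially no obstacle here: the claim follows immediately from unwinding the definitions once the distance inequality $d_H \ge d_G$ is in hand, and that inequality is already stated in the paragraph preceding the observation. The only mild point worth a word is the convention on distances between vertices lying in distinct components of $H$, which causes no trouble since $\infty$ exceeds any finite $f$-value, so the strict inequality above still holds in that case.
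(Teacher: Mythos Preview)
Your proof is correct and follows exactly the approach the paper takes: the paper does not give a separate proof of this observation but simply records it as an immediate consequence of the distance inequality $d_H(u,v)\ge d_G(u,v)$ stated in the preceding sentence, which is precisely the chain of inequalities you write out.
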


\medskip

A {\em caterpillar of length} $k\ge 0$ is a tree 
such that removing all leaves gives a path of length $k$, called the \emph{spine}.
Following the terminology of~\cite{S08}, a non-leaf vertex is called a \emph{spine vertex}
and, more precisely, a \emph{stem} 
if it is adjacent to a leaf and a \emph{trunk} otherwise.
A leaf adjacent to a stem $v$ is a \emph{pendent neighbour} of $v$.
We will always draw caterpillars with the spine on a horizontal line,
so that we can speak about the leftmost of rightmost spine vertex of a caterpillar.

Note that a caterpillar of length~0 is nothing but a star $K_{1,n}$, for some $n\ge 1$.
The independent broadcast number of a star is easy to determine.

\begin{observation}
For every integer $n\ge 1$, $\beta_b(K_{1,n})=n$.
\label{obs:star}
\end{observation}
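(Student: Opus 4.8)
The plan is to establish the two inequalities $\beta_b(K_{1,n})\ge n$ and $\beta_b(K_{1,n})\le n$ separately. Throughout, when $n\ge 2$ I write $c$ for the centre of $K_{1,n}$ and $v_1,\dots,v_n$ for its leaves, and I record the eccentricities $e_G(c)=1$ and $e_G(v_i)=2$ for every $i$. For the lower bound, I exhibit an explicit independent broadcast of cost $n$: for $n\ge 2$, let $f(v_i)=1$ for every $i$ and $f(c)=0$. This is a valid broadcast since $f(v_i)=1\le 2=e_G(v_i)$, and it is independent because the $f$-broadcast vertices are exactly the leaves and $d_G(v_i,v_j)=2>1=\max\{f(v_i),f(v_j)\}$ for all $i\ne j$, so no $f$-broadcast vertex is $f$-dominated by another one. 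For $n=1$, assigning $f$-value $1$ to one of the two vertices of $K_{1,1}$ gives an independent broadcast of cost $1$. Hence $\beta_b(K_{1,n})\ge n$ in all cases.

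For the upper bound, let $f$ be an arbitrary independent broadcast on $K_{1,n}$ with $n\ge 2$, and distinguish two cases according to whether $c\in V_f^+$. If $f(c)\ge 1$, then $f(c)=1$ because $e_G(c)=1$; since every leaf is at distance $1$ from $c$, the vertex $c$ then $f$-dominates all leaves, so by independence no leaf can be an $f$-broadcast vertex, whence $f(v_i)=0$ for every $i$ and $\cost(f)=1\le n$. If $f(c)=0$, I note that a leaf $v_i$ with $f(v_i)=2$ would $f$-dominate $c$ and every other leaf, forcing $v_i$ to be the unique $f$-broadcast vertex and giving $\cost(f)=2\le n$; otherwise every leaf has $f$-value at most $1$, so $\cost(f)=\sum_i f(v_i)\le n$. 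In both cases $\cost(f)\le n$. For $n=1$, two vertices at distance $1$ each with $f$-value $1$ would $f$-dominate each other, so at most one vertex of $K_{1,1}$ is an $f$-broadcast vertex and $\cost(f)\le 1=n$. Combining the two bounds yields $\beta_b(K_{1,n})=n$.

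The argument is elementary and presents no real obstacle; the only points needing a little care are correctly identifying the eccentricities ($1$ at the centre, $2$ at the leaves when $n\ge 2$) and, in the subcase $f(c)=0$, ruling out the configuration where a single leaf is assigned its full eccentricity $2$ — this is exactly the place where the trivial cases $n=1$ (where no leaf can reach $f$-value $2$) and $n\ge 2$ must be distinguished, although for $n\ge 2$ the bound $\cost(f)\le 2\le n$ still goes through.
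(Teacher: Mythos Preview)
Your proof is correct and follows the same idea as the paper: the lower bound uses exactly the broadcast the paper describes (assigning $1$ to every pendent vertex, or to one vertex of $K_{1,1}$), and your case analysis for the upper bound simply makes rigorous what the paper leaves as an implicit claim of optimality. The paper does not spell out the upper bound at all, so your argument is more complete than the original, but not genuinely different in approach.
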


Indeed, an optimal broadcast $f$ of $K_{1,n}$ is obtained by setting to~1 the $f$-value
of every pendent vertex of $K_{1,n}$, if $n>1$, or of one of the two vertices of $K_{1,1}$.
Therefore, in the rest of the paper, we will only consider caterpillars of length $k\ge 1$.

Let $\NNN^*=\NNN\setminus\{0\}$.
We denote by $CT(\lambda_0,\dots,\lambda_k)$, $k\ge 1$, 
with $(\lambda_0,\dots,\lambda_k)\in\NNN^*\times\NNN^{k-1}\times\NNN^*$,  
the caterpillar of length $k\ge 1$ with spine $v_0\dots v_k$ such that each spine vertex
$v_i$ has $\lambda_i$ pendent neighbours.
Note that for any caterpillar $CT$ of length $k\ge 1$, $\diam(CT)=k+2$.
For every $i$ such that $\lambda_i>0$, $0\le i\le k$, we denote by
$\ell_i^1,\dots,\ell_i^{\lambda_i}$ the pendent neighbours of~$v_i$. 
Moreover, we denote by $CT[a,b]$, $0\le a\le b\le k$, the subgraph %\emph{sub-caterpillar}
of $CT$ induced by vertices $v_a,\dots,v_b$ and their pendent neighbours.
%(note that, according to our definition of caterpillars,
% $CT[a,b]$ is a sub-caterpillar of $CT$ if and only if $\lambda_a>0$ and $\lambda_b>0$).
The caterpillar $CT(1,0,2,1,1,2,1,0,3)$ is depicted in Figure~\ref{fig:caterpillar}.

\begin{figure}
\begin{center}
\begin{tikzpicture}[scale=0.9,domain=0:17,x=0.7cm,y=0.7cm]
\CATERPILLAR
\end{tikzpicture}
\caption{\label{fig:caterpillar}The caterpillar $CT(1,0,2,1,1,2,1,0,3)$}
\end{center}
\end{figure}
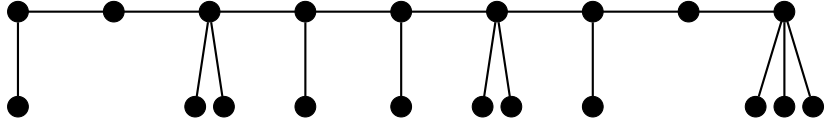

Let $f$ be an independent broadcast on a caterpillar $CT=CT(\lambda_0,\dots,\lambda_k)$.
We denote by $f^*$ the associated mapping from $\{v_0,\dots,v_k\}$ to $\NNN$ 
defined by 
$$f^*(v_i)=f(v_i)+\sum_{j=1}^{j=\lambda_i}f(\ell_i^j),\ \mbox{if}\  \lambda_i>0,\ \ 
\mbox{and}\ \ f^*(v_i)=f(v_i)\ \mbox{otherwise,}$$
for every $i$, $0\le i\le k$. 
Intuitively speaking, when $\lambda_i>0$, $f^*(v_i)$ gives the ``weight''
of the star-graph consisting of the vertex $v_i$ together with its pendent neighbours.

We will say that two independent broadcasts $f_1$ and $f_2$ on $CT$ are \emph{similar}
whenever $f_1^*=f_2^*$. Observe that any two similar independent broadcasts have the same cost.

From Observation~\ref{obs:2(d-1)}, we get that $\beta_b(CT)\ge 2(k+1)$
for every caterpillar $CT=CT(\lambda_0,\dots,\lambda_k)$.
In particular, the function $f_c$ on $V(CT)$ defined by
$f_c(\ell_0^1)=f_c(\ell_k^1)=k+1$ and $f_c(u)=0$ for every vertex
$u\in V(CT)\setminus\{\ell_0^1,\ell_k^1\}$ is an independent
broadcast on $CT$ with cost $2(k+1)$.

In the following, we will call any independent broadcast $f$ similar to $f_c$ 
and such that $|V^+_f|=2$ a
\emph{canonical} independent broadcast.

The following lemma shows that, for any caterpillar $CT=CT(\lambda_0,\dots,\lambda_k)$,
no independent 
broadcast  $f$ on $CT$ with $f(v)>0$ for some stem $v$ %with $\lambda_i\neq 0$ 
can be optimal.

\begin{lemma}
If $CT=CT(\lambda_0,\dots,\lambda_k)$ is a caterpillar of length $k\ge 1$ and $f$ is an
independent broadcast on $CT$ with $f(v_i)>0$ for some stem $v_i$, $0\le i\le k$, then
there exists an independent broadcast $f'$ on $CT$ with $\cost(f')>\cost(f)$.
\label{lem:stem-0}
\end{lemma}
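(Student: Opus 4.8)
The plan is to take an independent broadcast $f$ with $f(v_i)>0$ for some stem $v_i$ and modify it locally so that the cost strictly increases while independence is preserved. The key observation is that a stem $v_i$ has eccentricity at most... actually more usefully, a stem has a pendent neighbour $\ell_i^1$ at distance~1, so any broadcast vertex $v_i$ with $f(v_i)\ge 1$ already $f$-dominates $\ell_i^1$, and $\ell_i^1$ could instead carry a larger $f$-value since $e_{CT}(\ell_i^1)=e_{CT}(v_i)+1$. So the first move I would try is: if $v_i$ is a stem with $f(v_i)=t>0$, replace $f(v_i)$ by~$0$ and set $f(\ell_i^1)=t+1$ (if $\ell_i^1$ was not already a broadcast vertex; note that independence of $f$ forbids $f(\ell_i^1)>0$ when $f(v_i)>0$ since $d(v_i,\ell_i^1)=1>\max\{f(v_i),f(\ell_i^1)\}$ would fail, so indeed $f(\ell_i^1)=0$). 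This defines $f'$ with $\cost(f')=\cost(f)+1$. I would then need to check that $f'$ is still an independent broadcast: the eccentricity constraint $f'(\ell_i^1)=t+1\le e_{CT}(\ell_i^1)$ holds because $t\le e_{CT}(v_i)$ and $e_{CT}(\ell_i^1)=e_{CT}(v_i)+1$; and independence must be verified for all pairs involving $\ell_i^1$.

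The independence check is where the real work lies, so let me sketch it. For any other broadcast vertex $w\in V^+_f\setminus\{v_i\}$, we had $d_{CT}(v_i,w)>\max\{f(v_i),f(w)\}=\max\{t,f(w)\}$ in $f$, hence $d_{CT}(v_i,w)\ge t+1$ and $d_{CT}(v_i,w)\ge f(w)+1$. Moving to $\ell_i^1$, there are two cases. If $w$ is not a pendent neighbour of $v_i$, then $d_{CT}(\ell_i^1,w)=d_{CT}(v_i,w)+1$, so $d_{CT}(\ell_i^1,w)\ge t+2>t+1=f'(\ell_i^1)$ and $d_{CT}(\ell_i^1,w)=d_{CT}(v_i,w)+1>f(w)+1>f(w)=f'(w)$; both independence inequalities hold. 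If $w=\ell_i^j$ is another pendent neighbour of $v_i$ with $j\ne 1$, then $d_{CT}(v_i,w)=1$, which forces $t<1$, contradicting $t>0$; so this case does not occur. Hence $f'$ is independent and we are done. I should double-check the degenerate possibility that removing $v_i$ from $V^+_f$ could break \emph{domination} — but domination is not required for $f'$ merely to be an independent broadcast of larger cost, and the lemma only asserts existence of such an $f'$, so I do not need $f'$ to be dominating (and indeed Theorem~\ref{th:maximality} is not needed here).

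One subtlety to handle carefully: the definition of broadcast in the paper requires $f:V(G)\to\{0,\dots,\diam(G)\}$, so I must also confirm $f'(\ell_i^1)=t+1\le\diam(CT)=k+2$; since $t\le e_{CT}(v_i)\le\diam(CT)-1=k+1$ would give $t+1\le k+2$, but actually I should be more careful because $v_i$ is an internal-ish stem and its eccentricity is typically well below the diameter — in any case $t+1\le e_{CT}(\ell_i^1)\le\diam(CT)$ holds, so the range constraint is automatic from the eccentricity bound. The main obstacle, then, is not any single hard estimate but rather making sure the case analysis on the location of other broadcast vertices is exhaustive (pendent neighbour of $v_i$ versus everything else) and that the strict inequalities are genuinely strict; the pendent-neighbour case collapses immediately by the independence of $f$, which is the crucial leverage. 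I expect the whole argument to be a page of elementary distance bookkeeping with no genuine difficulty once the right local swap is identified.
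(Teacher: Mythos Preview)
Your proposal is correct and follows exactly the same approach as the paper: shift the broadcast value from the stem $v_i$ to one of its pendent neighbours $\ell_i^1$, incrementing it by~1, and use the fact that $d_{CT}(\ell_i^1,u)=d_{CT}(v_i,u)+1$ for every $u\neq\ell_i^1$ to verify independence. Your write-up is in fact more careful than the paper's, which simply asserts that $f'$ is an independent broadcast without spelling out the eccentricity check or the (vacuous) case of other pendent neighbours of $v_i$.
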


\begin{proof}
Since $f(v_i)>0$ and $f$ is an independent broadcast, we have $f(\ell_i^j)=0$ for every $j$,
$1\le j\le\lambda_i$.
Consider the function $f'$ defined by 
$f'(v_i)=0$, $f'(\ell_i^1)=f(v_i)+1$ and $f'(u)=f(u)$ for every vertex $u\in V(CT)\setminus\{v_i,\ell_i^1\}$.
Since $d_{CT}(\ell_i^1,u)=d_{CT}(v_i,u)+1$ for every vertex $u\in V(CT)\setminus\{\ell_i^1\}$, we get that
$f'$ is an independent broadcast on $CT$. Moreover, we clearly have $\cost(f')=\cost(f)+1$.
\end{proof}

The following lemma shows that for every optimal independent broadcast on a
caterpillar, at least one pendent vertex of each of the end-vertices of the spine
is a broadcast vertex.

%\attention{modifier pour rajouter max non canonique et $f$(grappe)}

\begin{lemma}
Let $CT=CT(\lambda_0,\dots,\lambda_k)$ be a caterpillar of length $k\ge 1$.
If $f$ is an optimal independent broadcast on $CT$, then 
$f^*(v_0)-f(v_0)\neq 0$ and $f^*(v_k)-f(v_k)\neq 0$.
%$$\sum_{j=1}^{j=\lambda_0}f(\ell_0^j)\sum_{j=1}^{j=\lambda_k}f(\ell_k^j)\neq 0.$$
\label{lem:end-leaves}
\end{lemma}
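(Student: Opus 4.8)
The plan is to argue by contradiction, exploiting the left--right symmetry of caterpillars: it suffices to prove $f^*(v_0)-f(v_0)\ne 0$. Since $\lambda_0\ge 1$, the vertex $v_0$ is a stem, so Lemma~\ref{lem:stem-0} applied to the optimal (hence non-improvable) broadcast $f$ forces $f(v_0)=0$; thus $f^*(v_0)-f(v_0)=\sum_{j=1}^{\lambda_0}f(\ell_0^j)$, and I must show this sum is positive. Suppose instead that $f(\ell_0^j)=0$ for all $j$, so that no vertex of the star formed by $v_0$ and its pendent neighbours is an $f$-broadcast vertex. First note that $|V_f^+|\ge 2$: an optimal independent broadcast is clearly maximal independent, so if $V_f^+=\{v\}$ then Theorem~\ref{th:maximality} gives $\cost(f)=e_{CT}(v)\le\diam(CT)=k+2<2(k+1)\le\beta_b(CT)$ (the last inequality by Observation~\ref{obs:2(d-1)}), contradicting optimality.

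Now let $w$ be an $f$-broadcast vertex minimising $d_{CT}(v_0,w)$; by assumption $w$ lies outside the star at $v_0$, so $r:=d_{CT}(v_0,w)\ge 1$, and since $f$ is dominating, $w$ must $f$-dominate $\ell_0^1$, whence $f(w)\ge d_{CT}(w,\ell_0^1)=r+1\ge 2$. Let $v_i$ be the spine vertex nearest to $w$ (so $i=r$ if $w$ lies on the spine, and $i=r-1$ if $w$ is a pendent neighbour of $v_i$). Using the minimality of $d_{CT}(v_0,w)$ and the independence of $f$ (which forbids any broadcast vertex within distance $f(w)$ of $w$), one checks that for every $f$-broadcast vertex $u\ne w$ the $v_0$--$u$ geodesic passes through $v_i$, so $d_{CT}(v_0,u)=i+d_{CT}(v_i,u)$; moreover independence gives $d_{CT}(v_i,u)\ge f(w)+1$ when $w=v_i$ and $d_{CT}(v_i,u)\ge f(w)$ when $w$ is a pendent neighbour of $v_i$. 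Since $d_{CT}(v_0,u)\le e_{CT}(v_0)=k+1$, this yields $i+f(w)\le k$ in the first case and $i+f(w)\le k+1$ in the second. The idea is then to \emph{slide $w$ to $\ell_0^1$}: define $f'$ by $f'(w)=0$, $f'=f$ elsewhere, and $f'(\ell_0^1)=(i+1)+f(w)$ if $w=v_i$, resp.\ $f'(\ell_0^1)=i+f(w)$ if $w$ is a pendent neighbour of $v_i$. Using the inequalities above one verifies that $f'(\ell_0^1)\le k+1<k+2=e_{CT}(\ell_0^1)$ (so $f'$ is a broadcast), that every vertex $f$-dominated by $w$ lies within distance $f'(\ell_0^1)$ of $\ell_0^1$ (so $f'$ is dominating), and that $d_{CT}(\ell_0^1,u)>\max\{f'(\ell_0^1),f(u)\}$ for every $u\in V_f^+\setminus\{w\}$ (so $f'$ is independent). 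Since $\cost(f')=\cost(f)-f(w)+f'(\ell_0^1)>\cost(f)$, this contradicts the optimality of $f$.

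The step I expect to be the real obstacle is the single case that escapes this argument, namely when $w$ is a pendent neighbour of $v_i$ and $i+f(w)=k+1$, since then $f'(\ell_0^1)$ would have to exceed $e_{CT}(\ell_0^1)$. Here I would argue that tightness forces $d_{CT}(v_0,u)=k+1=e_{CT}(v_0)$ for \emph{every} broadcast vertex $u\ne w$, so each such $u$ is a pendent neighbour of $v_k$; consequently $f(w)=k-i+1$ and $V_f^+=\{w\}\cup S$ with $\emptyset\ne S\subseteq\{\ell_k^1,\dots,\ell_k^{\lambda_k}\}$. If $|S|=1$, maximality gives $\cost(f)=f(w)+f(\ell_k^m)=2(k-i+1)\le 2k<\beta_b(CT)$, contradicting optimality. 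If $|S|\ge 2$, then the broadcast vertices of $S$ all have $f$-value $1$, so (for $f$ to dominate the remaining pendent neighbours of $v_k$) one must have $S=\{\ell_k^1,\dots,\ell_k^{\lambda_k}\}$; replacing $w$ by $\ell_0^1$ with $f$-value $k+1$ while keeping the vertices of $S$ then produces an independent dominating broadcast of cost $\cost(f)+i>\cost(f)$, a final contradiction. Hence $\sum_j f(\ell_0^j)>0$, i.e.\ $f^*(v_0)-f(v_0)\ne 0$, and the statement for $v_k$ follows by symmetry.
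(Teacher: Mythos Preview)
Your argument is correct and follows the same ``slide the dominating broadcast vertex to $\ell_0^1$'' idea as the paper's proof: the paper simply takes the $f$-broadcast vertex $u$ that dominates $\ell_0^1$ (which coincides with your closest vertex $w$), sets $f'(\ell_0^1)=f(u)+i$ if $u=\ell_i^j$ and $f'(\ell_0^1)=f(u)+i+1$ if $u=v_i$, and observes that the cost strictly increases.

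One remark: your final paragraph is unnecessary, and the sentence introducing it is actually false. When $w$ is a pendent neighbour of $v_i$ and $i+f(w)=k+1$, you set $f'(\ell_0^1)=i+f(w)=k+1$; since $\lambda_k\ge 1$ there is a pendent neighbour $\ell_k^1$ of $v_k$ at distance $k+2$ from $\ell_0^1$, so $e_{CT}(\ell_0^1)=k+2$ and the broadcast condition $f'(\ell_0^1)\le e_{CT}(\ell_0^1)$ is satisfied. Independence is also fine in this case (for any $u\in V_f^+\setminus\{w\}$ you already have $d_{CT}(\ell_0^1,u)=1+d_{CT}(v_0,u)\ge 1+i+f(w)>f'(\ell_0^1)$), and the cost still increases by $i\ge 1$. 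So the ``obstacle'' never arises, and your main sliding argument already covers all cases; the extra analysis you carry out is correct but redundant.
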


\begin{proof}
%Suppose to the contrary that $f(v_1)=x>0$ and $f(\ell_1^j)=0$ for every $j$, $1\le j\le\lambda_1$.
%The function $f'$ defined by $f'(v_1)=0$, $f'(\ell_1^1)=x+1$
%and $f'(u)=f(u)$ for every vertex $u\in V(CT)\setminus\{v_1,\ell_1^1\}$ is clearly
%an independent broadcast on $CT$, contradicting the optimality of $f$.
%
%The case $f(v_k)=x>0$ and $f(\ell_k^j)=0$ for every $j$, $1\le j\le\lambda_k$ follows by symmetry.
We know by Lemma~\ref{lem:stem-0} that $f(v_0)=0$.
Suppose, contrary to the statement of the lemma, that $f(\ell_0^j)=0$ for every $j$, $1\le j\le\lambda_0$.
Let $u$ be the $f$-broadcast vertex that dominates $\ell_0^1$ and let $f(u)=x$.
By Lemma~\ref{lem:stem-0}, $u$ is either a leaf or a trunk.

If $u$ is a leaf, say $u=\ell_i^j$, $1\le i\le k$, $1\le j\le\lambda_i$,
let $f'$ be the mapping defined by $f'(\ell_0^1)=x+i$, $f'(u)=0$ and $f'(u')=f(u')$
for every vertex $u'\in V(CT)\setminus\{\ell_0^1,u\}$.
Note that every vertex which was $f$-dominated by $u$ is now $f'$-dominated by $\ell_0^1$.
The mapping $f'$ is thus an independent $(\cost(f)+i)$-broadcast on $CT$, contradicting the optimality of~$f$.

If $u$ is a trunk, say $u=v_i$, $1\le i\le k-1$, we similarly define a mapping $f'$
by letting $f'(\ell_0^1)=x+i+1$, $f'(u)=0$ and $f'(u')=f(u')$
for every vertex $u'\in V(CT)\setminus\{\ell_0^1,u\}$.
The mapping $f'$ is thus an independent $(\cost(f)+i+1)$-broadcast on $CT$, again contradicting the optimality of~$f$.
%Again, $f'$ is an independent broadcast that contradicts the optimality of~$f$.

The case $f(\ell_k^j)=0$ for every $j$, $1\le j\le\lambda_k$, follows by symmetry.
\end{proof}

Observe that Lemma~\ref{lem:end-leaves} can be extended to trees as follows:

\begin{lemma}
Let $T$ be tree and $T'$ be a subtree of $T$, of order at least 2, with root $r$.
Let $f$ be an optimal independent broadcast on $T$.
If $r$ is an $f$-broadcast vertex, then $T'$ contains at least one other $f$-broadcast vertex.
In particular, if $T'$ is a subtree of height~1 (that is, $e_{T'}(r)=1$), then $f(r)=0$.
\label{lem:end-leaves-tree}
\end{lemma}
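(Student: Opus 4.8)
The plan is to reduce the general tree statement to the situation already handled in Lemma~\ref{lem:end-leaves}, by essentially the same argument but carried out inside the subtree $T'$ hanging from $r$. Suppose $r$ is an $f$-broadcast vertex, say $f(r)=x\ge 1$, and suppose for contradiction that $r$ is the only $f$-broadcast vertex in $T'$. Since $T'$ has order at least $2$, it contains a vertex $w\neq r$; as $f$ is optimal it is dominating (Theorem~\ref{th:maximality} applies since an optimal broadcast is maximal), so every vertex of $T'$ is $f$-dominated, and by assumption the only candidate dominator inside $T'$ is $r$ itself. But any vertex $u$ dominating a vertex of $T'\setminus\{r\}$ from outside $T'$ must pass through $r$, hence also $f$-dominates $r$; since $u\neq r$ (it lies outside $T'$), this contradicts independence of $f$ at the broadcast vertex $r$. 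Wait --- more carefully: $r$ being a broadcast vertex, independence forces $r$ to be dominated only by itself, so no external $u$ can reach $r$; thus no external vertex can reach any vertex of $T'$ at all. Therefore every vertex of $T'$ must be dominated by $r$, i.e.\ $e_{T'}(r)\le f(r)$.

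From here I would mimic the ``shift to a pendent leaf'' move of Lemma~\ref{lem:stem-0}/Lemma~\ref{lem:end-leaves}. Let $P = r u_1 u_2 \cdots u_d$ be a longest path in $T'$ starting at $r$, so $d = e_{T'}(r)\ge 1$ and $u_d$ is a leaf of $T$. Consider the subtree $T'' = T' \setminus \{u_d\}$ (remove the single leaf $u_d$), and note $e_{T''}(r) \le d$ still, actually we will not need this. The idea is: since $r$ dominates all of $T'$ and nothing outside $T'$ interacts with $T'$, we may ``free up'' the weight $x$ sitting at $r$ and reinvest it more cheaply. Concretely, I want to replace the broadcast on $T'$ by a canonical-type broadcast that pays $2(\diam(T')-1)$ or more, using Observation~\ref{obs:2(d-1)} applied to $T'$: pick two antipodal leaves of $T'$ (they exist since $T'$ has a path of length $\diam(T')\ge 1$), put value $\diam(T')-1$ on each and $0$ elsewhere on $T'$, keeping $f$ unchanged off $T'$. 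One checks this is still an independent broadcast on $T$: distances among the two new broadcast vertices and to the outside are at least as large as the distances from $r$, so domination relations involving $r$ are inherited, and independence off $T'$ is untouched.

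The arithmetic to close the contradiction: the current contribution of $f$ on $V(T')$ is exactly $f(r)=x$, since $r$ is the only broadcast vertex there; for this to be optimal we would need $x \ge 2(\diam(T')-1)$, otherwise the swap strictly increases the cost, contradicting optimality. But we also have the constraint $x = f(r)\le e_T(r)$, and more importantly $e_{T'}(r)\le x$ from the domination analysis above. Since in a tree $\diam(T') \ge e_{T'}(r)+1$ would generally fail --- actually $\diam(T') \ge e_{T'}(r)$ always, and $\diam(T')$ can be as large as $2 e_{T'}(r)$ --- I need the sharper observation that there is a second deep leaf. Let me instead finish as in Lemma~\ref{lem:end-leaves}: since $e_{T'}(r)\le x$ and $T'$ has a path from $r$ of length $e_{T'}(r)$ ending at a leaf $z$, the two vertices $r$ and $z$ are at distance $e_{T'}(r)$; now define $f'$ by $f'(z) = f(r) + (\text{the distance from } r \text{ to the far side through the rest of } T)$ if $r$ is non-isolated-in-$T$ toward outside, exactly paralleling the two cases (leaf / trunk continuation) in Lemma~\ref{lem:end-leaves}. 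In the ``$e_{T'}(r)=1$'' special case this is immediate: then $T'$ is a star with center $r$ (height $1$), its leaves cannot be broadcast vertices dominating anything outside, and if $f(r)\ge 1$ we simply move that weight-plus-one onto one leaf $\ell$ of $r$ via $f'(\ell)=f(r)+1$, $f'(r)=0$, which is independent (all distances from $\ell$ exceed those from $r$) and strictly cheaper-beating --- i.e.\ costlier --- so $f$ was not optimal; hence $f(r)=0$.

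The main obstacle I anticipate is the clean verification that the modified broadcast $f'$ is genuinely independent in $T$ after the reinvestment, because unlike the caterpillar case we no longer have an ambient linear order and must argue via the tree-metric inequality ``every path from a vertex of $T'$ to a vertex outside $T'$ passes through $r$'' to transport $r$'s domination obligations onto the new broadcast vertex(es). Once that transport lemma is nailed down, the cost comparison and the two-case split (the dominator of a deep leaf of $T'$ being a leaf versus an internal vertex, relative to how far one can push the new value toward the other end of $T$) are routine and follow the template of Lemma~\ref{lem:end-leaves} verbatim. The very last sentence --- the height-$1$ case giving $f(r)=0$ --- is then the degenerate instance and needs no separate work beyond quoting Lemma~\ref{lem:stem-0}'s mechanism.
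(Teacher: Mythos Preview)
Your proposal is a plan rather than a proof, and it has a genuine gap. The ``shift weight from $r$ to a neighbour $\ell$ in $T'$ and add one'' move, which you invoke both in the general case and in the height-$1$ case, requires the eccentricity check $f(r)+1\le e_T(\ell)$. You assert that ``all distances from $\ell$ exceed those from $r$'', but this is only true for vertices outside $T'$; it says nothing about $e_T(\ell)$ unless you already know that some vertex far from $r$ lies outside $T'$. In particular, if $r$ happens to be the \emph{unique} $f$-broadcast vertex in all of $T$, then $f(r)$ may equal $e_T(r)$, and the shift can fail to produce a legal broadcast. You never rule this case out.

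The paper's proof avoids this by a clean three-way split on $f(r)$ against $t'=e_{T'}(r)$ and $\overline{t'}=e_{T-(T'-r)}(r)$: if $f(r)<t'$ one simply \emph{adds} a new broadcast vertex deep in $T'$ (no shift needed); if $f(r)\ge\overline{t'}$ then $r$ dominates all of $T$, so $r$ is the unique broadcast vertex and $\cost(f)\le\diam(T)<2(\diam(T)-1)$, contradicting Observation~\ref{obs:2(d-1)}; only in the remaining range $t'\le f(r)<\overline{t'}$ does one perform the shift-by-one, and there the inequality $\overline{t'}>f(r)$ guarantees $e_T(\ell)=\overline{t'}+1>f(r)+1$ for any neighbour $\ell$ of $r$ in $T'$. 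Your domination argument (that no external vertex can dominate inside $T'$, hence $e_{T'}(r)\le f(r)$) is correct and in fact eliminates the first of these cases, but you still need the second case explicitly before the shift is valid. The attempt to substitute a canonical broadcast on $T'$ does not close the gap either, as you yourself note: the arithmetic $x\ge 2(\diam(T')-1)$ versus $x\ge e_{T'}(r)$ does not yield a contradiction.
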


\begin{proof}
Suppose to the contrary that $f(r)>0$ and $f(u)=0$ for every vertex $u\in V(T')\setminus\{r\}$.
Let $t'=e_{T'}(r)$ and $\overline{t'}=e_{T-(T'-r)}(r)$. 

If $f(r)<t'$, the independent broadcast $f'$ given by $f'(v)=f(r)$ for some vertex $v$ in $T'$ with $d_{T'}(r,v)=t'$
and $f'(u)=f(u)$ for every vertex $u\in V'(T)\setminus\{v\}$ is such that $\cost(f')=\cost(f)+f(r)$, 
contradicting the optimality of $f$.

If $f(r)\ge\overline{t'}$, then $r$ is the unique $f$-broadcast vertex, which implies
$\cost(f)<2(\diam(T)-1)$, again contradicting the optimality of $f$ by Observation~\ref{obs:2(d-1)}.

Hence $\overline{t'}>f(r)\ge t'$.
Let now $v$ be any neighbour of $r$ in $T'$.
Since $\overline{t'}>f(r)\ge t'$, we have $e_T(v)=e_T(r)+1=\overline{t'}+1>f(r)+1$.
The function $f'$ defined by $f'(r)=0$, $f'(v)=f(r)+1$
and $f'(u)=f(u)$ for every vertex $u\in V(T)\setminus\{r,v\}$ is therefore
an independent broadcast on $T$ with $\cost(f')=\cost(f)+1$, contradicting the optimality of $f$.

This completes the proof.
\end{proof}

%%%%%%%%%%%%%%%%%%%%%%%%%%%%%%%%%%%%%%%%%%%%%%%%%%%%%%%%%%%%%%%%%%
%%%%%%%%%%%%%%%%%%%%%%%%%%%%%%%%%%%%%%%%%%%%%%%%%%%%%%%%%%%%%%%%%%
%%%%%%%%%%%%%%%%%%%%%%%%%%%%%%%%%%%%%%%%%%%%%%%%%%%%%%%%%%%%%%%%%%
%%%%%%%%%%%%%%%%%%%%%%%%%%%%%%%%%%%%%%%%%%%%%%%%%%%%%%%%%%%%%%%%%%
%%%%%%%%%%%%%%%%%%%%%%%%%%%%%%%%%%%%%%%%%%%%%%%%%%%%%%%%%%%%%%%%%%
%%%%%%%%%%%%%%%%%%%%%%%%%%%%%%%%%%%%%%%%%%%%%%%%%%%%%%%%%%%%%%%%%% Caterpillars with no adjacent trunks
\section{Caterpillars with no pair of adjacent trunks}
\label{sec:no-adjacent-trunks}

In this section we determine the broadcast independence number
of caterpillars with no pair of adjacent trunks.
We first introduce some notation and useful lemmas.

We say that an independent broadcast $f$ of a caterpillar $CT$ is
an \emph{optimal non-canonical} independent broadcast on $CT$ if 
\begin{itemize}
\item[(i)] $|V^+_f|\neq 2$ or $f^*\neq f_c^*$ ($f$ is non-canonical), and
\item[(ii)] for every independent broadcast $f'$ on $CT$ with $|V^+_{f'}|\neq 2$ or $f'^*\neq f_c^*$,
$\cost(f)\ge\cost(f')$ ($f$ is optimal among all non-canonical independent broadcasts).
\end{itemize}

%%%%%%%%%%%%%%%%%%%%%%%%%%%%%%%%%%%%%%%%%%%%%%%%%%%%%%%%%%%%%%%%

Let $CT=CT(\lambda_0,\dots,\lambda_k)$ be a caterpillar of length $k\ge 1$ with no pair of adjacent trunks.
We denote by 
$$\lambda(CT)=\sum_{i=0}^{i=k}\lambda_i$$
the number of leaves of $CT$, and by 
$$\tau(CT)=|\{i\ |\ 1\le i\le k-1\ \mbox{and}\ \lambda_i=0\}|$$
the number of trunks of $CT$.

We will compute the broadcast independence number
of a caterpillar with no pair of adjacent trunks by counting the number of some specific \emph{patterns}. 
More precisely,
we say that a pattern of length $p+1$, $\Pi=\pi_0\dots\pi_p$, $p\ge 0$, $\pi_i\in\NNN$ for every $i$, $0\le i\le p$, 
\emph{occurs} in a caterpillar $CT=CT(\lambda_0,\dots,\lambda_k)$
if there exists an index $i_0$,
$0\le i_0\le k-p$, such that $CT[i_0,i_0+p]=CT(\pi_0,\dots,\pi_p)$, that is,
$\lambda_{i_0+j}=\pi_j$ for every $j$, $0\le j\le p$.
We will also say that the caterpillar $CT$ \emph{contains} the pattern~$\Pi$
and that the subgraph $CT(\lambda_{i_0},\dots,\lambda_{i_0+p})$ of $CT$ is an \emph{occurrence}
of the pattern~$\Pi$.
For instance, the caterpillar $CT(1,0,2,1,1,2,1,0,3)$, depicted on Figure~\ref{fig:caterpillar}, 
contains once the pattern 211 and twice the pattern 10.

We now extend the notation for patterns as follows:
\begin{itemize}
\item By $\pi_i^+$, we mean a spine vertex having at least $\pi_i$ pendent neighbours;
\item By $\pi_i^-$, we mean a spine vertex having at most $\pi_i$ pendent neighbours;
%\item By $\{\pi_i,\pi_j\}$, we mean a spine vertex having either $\pi_i$ or $\pi_j$ pendent neighbours;
\item By $[\Pi$, we mean that the pattern $\Pi$ occurs and starts at the leftmost stem $v_0$,
\item By $\Pi]$, we mean that the pattern $\Pi$ occurs and ends at the rightmost stem $v_k$,
%\item By $[\pi_i$, we mean that the leftmost stem,  $v_0$, has $\pi_i$ pendent neighbours (therefore,
%a pattern starting with this symbol must occur on the left end of a caterpillar);
%\item By $\pi_i]$, we mean that the rightmost stem, $v_k$, has $\pi_i$ pendent neighbours (therefore,
%a pattern ending with this symbol must occur on the right end of a caterpillar);
%\item By $\overline{[}\pi_i$, we mean a spine vertex $v_i$, with $i\neq 0$, having $\pi_i$ pendent neighbours;
%\item By $\pi_i\overline{]}$, we mean a spine vertex $v_i$, with $i\neq k$, having $\pi_i$ pendent neighbours;
\item By $\{\Pi,\Pi'\}$, we mean either the pattern $\Pi$ or the pattern $\Pi'$.
%\item By $\{\pi_i,[\}\Pi$ (resp. $\Pi\{\pi_i,]\}$), we mean either the pattern $\pi_i\Pi$ 
%(resp. $\Pi\pi_i$)
%or the pattern $[\Pi$ (resp. $\Pi]$),
\item By $\pi_0(\pi_1\pi_2)^{+r}\pi_3$, we mean a \emph{maximal} pattern of the form 
$$\pi_0\pi_1\pi_2\pi_3\ \ \mbox{or}\ 
\ \pi_0\underbrace{\pi_1\pi_2\dots\pi_1\pi_2}_{\mbox{{\footnotesize $r$ times, $r\ge 2$}}}\pi_3,$$
where maximal here means that the subpattern $\pi_1\pi_2$ is repeated at least once and as many times as possible.
\item By $\pi_0(\pi_1\pi_2)^{*r}\pi_3$, we mean a \emph{maximal} pattern of the form 
$$\pi_0\pi_3,\ \pi_0\pi_1\pi_2\pi_3\ \ \mbox{or}\ 
\ \pi_0\underbrace{\pi_1\pi_2\dots\pi_1\pi_2}_{\mbox{{\footnotesize $r$ times, $r\ge 2$}}}\pi_3,$$
where maximal here means that the subpattern $\pi_1\pi_2$ is repeated %at least once and 
as many times as possible.
\end{itemize}
We can also combine these notations, so that, for instance, $\pi_i^+]$ denotes that the rightmost stem
 $v_k$ has at least $\pi_i$ pendent neighbours,
and $\{\pi_i,[\}\Pi$ denotes either the pattern $\pi_i\Pi$ or the pattern~$[\Pi$.

One can check that the caterpillar $CT(1,0,2,1,1,2,1,0,3)$, depicted on Figure~\ref{fig:caterpillar},
contains once each of the four patterns
$[1$, $3]$, $2^+]$ and $2111^+$, twice the pattern $0\{2,3\}$,
and thrice the pattern $1^+1^+1^+$.
On the other hand, the caterpillar $CT(1,0,2,0,2,0,2,1,0,3)$ contains only once the pattern $1^+0(20)^{+r}1^+$,
namely on the sub-caterpillar $CT(1,0,2,0,2,0,2)$ with explicit pattern $1020202$.

For any pattern $\Pi$ and any caterpillar $CT$, we will denote by $\#_{CT}(\Pi)$
the number of occurrences of the pattern $\Pi$ in $CT$.
Moreover, if $M$ is an occurrence of $\Pi$ in $CT$, we define the value
$$\alpha_1(M)=\max\{0,\#_{M}(1)-1\},$$
that is, the number of stems $v_i$ in $M$ with $\lambda_i=1$ minus 1---or 0 if $M$ contains no such stem---,
and the value
$$\alpha_2(M)=\alpha_1(M)+\#_{M}([1^+)+\#_{M}(1^+]),$$
that is, $\alpha_1(M)$
plus 0, 1 or 2, depending on whether $M$ contains no end-vertex of $CT$, one end-vertex of $CT$ or both end-vertices
of $CT$, respectively.

We then extend the functions $\alpha_1$ and $\alpha_2$ to the whole caterpillar $CT$ by setting
$$\alpha_1(CT;\Pi)=\sum_{\mbox{\footnotesize $M$ occurrence of $\Pi$}}\alpha_1(M)$$
and
$$\alpha_2(CT;\Pi)=\sum_{\mbox{\footnotesize $M$ occurrence of $\Pi$}}\alpha_2(M).$$

\begin{figure}
\begin{center}
\begin{tikzpicture}[scale=0.9,x=0.7cm,y=0.7cm]
\CTun{0}{0}
\CTunplus{2}{0}
\draw[thick] (0,0) to (2,0);
\draw[thick,dashed] (2,0) to (3,0);
\node at (1.5,-3) {$[11^+$};
\end{tikzpicture}
\hskip 2cm
\begin{tikzpicture}[scale=0.9,x=0.7cm,y=0.7cm]
\CTunplus{0}{0}
\CTun{2}{0}
\CTunplus{4}{0}
\draw[thick] (0,0) to (4,0);
\draw[thick,dashed] (-1,0) to (0,0);
\draw[thick,dashed] (4,0) to (5,0);
\node at (2,-3) {$1^+11^+$};
\end{tikzpicture}
\mbox{}\\
\vskip 0.5cm
\begin{tikzpicture}[scale=0.9,x=0.7cm,y=0.7cm]
\CTunplus{0}{0}
\CTdeuxmoins{2}{0}
\CTzero{4}{0}
\CTdeuxmoins{6}{0}
\CTzero{10}{0}
\CTdeuxmoins{12}{0}
\CTunplus{14}{0}
\draw[thick,dashed] (-1,0) to (0,0);
\draw[thick] (0,0) to (6,0);
\draw[thick,dotted] (6,0) to (10,0);
\draw[thick] (10,0) to (14,0);
\draw[thick,dashed] (14,0) to (15,0);
\node at (7,-3) {$1^+2^-(02^-)^{+r}1^+$};
\end{tikzpicture}
\mbox{}\\
\vskip 0.5cm
\begin{tikzpicture}[scale=0.9,x=0.7cm,y=0.7cm]
\CTzero{0}{0}
\CTdeuxmoins{2}{0}
\CTzero{4}{0}
\CTdeuxmoins{6}{0}
\CTzero{8}{0}
\CTdeuxmoins{10}{0}
\CTzero{14}{0}
\CTdeuxmoins{16}{0}
\draw[thick] (-1,0) to (10,0);
\draw[thick,dotted] (10,0) to (14,0);
\draw[thick] (14,0) to (16,0);
\node at (8,-3) {$02^-(02^-)^{*r}]$};
\end{tikzpicture}

\caption{\label{fig:patterns}Sample patterns involved in the definition of $\beta^*(CT)$}
\end{center}
\end{figure}
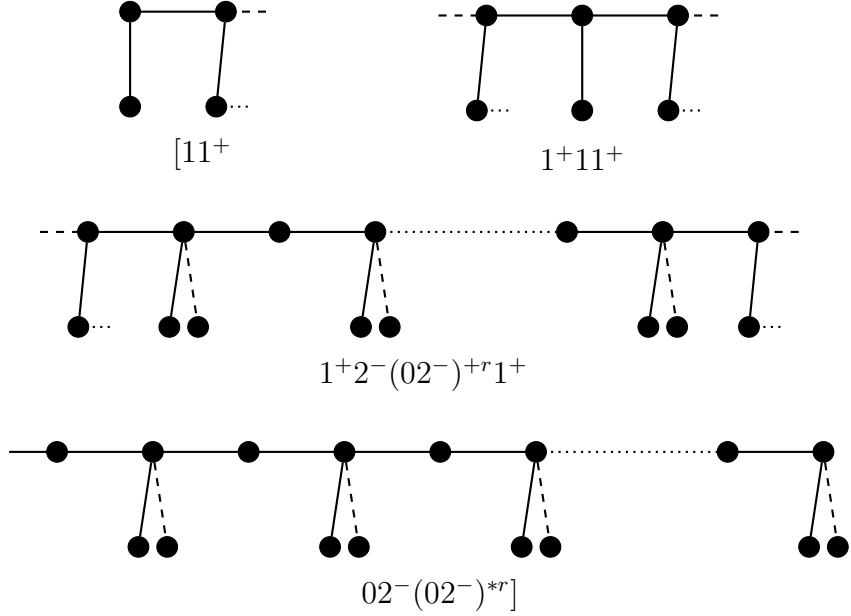

Finally, for any caterpillar $CT$, we define the value $\beta^*(CT)$ as follows:
%$$\beta^*(CT) =  \lambda(CT) + \tau(CT) + \#_{CT}(1^+11^+) + \alpha_1(CT,1^+2^-(02^-)^r1^+) + \alpha_2(CT,\{0,[\}2^-(02^-)^r\{0,]\}).$$
$$\begin{array}{rcl}
\beta^*(CT) & = & \lambda(CT)\ +\ \tau(CT)\ +\ \#_{CT}(\{1^+,[\}1\{1^+,]\})\ +\ \alpha_1(CT;1^+2^-(02^-)^{+r}1^+)\\ 
& & +\ \alpha_2(CT;02^-(02^-)^{*r}0)\ +\ \alpha_2(CT;[2^-(02^-)^{*r}0)\ +\ \alpha_2(CT;02^-(02^-)^{*r}]).
\end{array}
$$
%$$\begin{array}{rcl}
%\beta^*(CT) & = & \lambda(CT) +\ \tau(CT) +\ \#_{CT}(\{1^+,[\}1\{1^+,]\})\\
%& & +\ \alpha_1(CT;1^+2^-(02^-)^{+r}1^+) +\ \alpha_2(CT;02^-(02^-)^{*r}0)\\
%& & +\ \alpha_2(CT;[2^-(02^-)^{*r}0) +\ \alpha_2(CT;02^-(02^-)^{*r}]).
%\end{array}
%$$

Sample patterns involved in the above formula
 are illustrated in Figure~\ref{fig:patterns}.
 In the figure,
a pattern with a line to the left or right hand side of its spine cannot occur
at the left or right end of the caterpillar, respectively.
A pattern with a dashed line to the left or right hand side of its spine can occur 
at the left or right end of the caterpillar, respectively, or in the middle of the caterpillar.
A dashed edge is an optional edge (used for pattern $2^-$, corresponding to a spine vertex with either
one or two pendent neighbours).

Let us say that two \emph{distinct} occurrences of patterns \emph{overlap} if they share a common vertex.
Due to the specific structure of the patterns used in the above formula
(and, in particular, of the maximality of the number of repetitions of subpatterns
of the form $\Pi^{+r}$ or $\Pi^{*r}$), 
we have the following:

\begin{observation}
In every caterpillar $CT$ of length $k\ge 1$,  
\begin{enumerate}
\item 
no occurrence of the pattern $02^-(02^-)^{*r}0$
can overlap with an occurrence of a pattern
$\{1^+,[\}1\{1^+,]\}$, $1^+2^-(02^-)^{+r}1^+$, $02^-(02^-)^{*r}0$,
$[2^-(02^-)^{*r}0$ or $02^-(02^-)^{*r}]$,
\item 
no occurrence of the pattern $[2^-(02^-)^{*r}0$
can overlap with an occurrence of a pattern
$\{1^+,[\}1\{1^+,]\}$, or $1^+2^-(02^-)^{+r}1^+$,
\item 
no occurrence of the pattern $02^-(02^-)^{*r}]$
can overlap with an occurrence of a pattern
$\{1^+,[\}1\{1^+,]\}$ or $1^+2^-(02^-)^{+r}1^+$, 
\item if two occurrences of the patterns $[2^-(02^-)^{*r}0$ and $02^-(02^-)^{*r}]$ overlap,
then $CT$ is a caterpillar with pattern $[2^-(02^-)^{*r}]$. 
%
%\item 
%no occurrence of the pattern $\{0,[\}2^-(02^-)^{*r}\{0,]\}$
%can overlap with an occurrence of a pattern
%$\{1^+,[\}1\{1^+,]\}$, $1^+2^-(02^-)^{+r}1^+$ or $\{0,[\}2^-(02^-)^{*r}\{0,]\}$,
\end{enumerate}
\label{obs:overlap}
\end{observation}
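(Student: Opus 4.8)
The plan is to exploit two rigid features of the patterns occurring in the definition of $\beta^*(CT)$. First, the number $\lambda_i$ of pendent neighbours of a spine vertex $v_i$ is fixed, so if two occurrences of patterns share $v_i$ then the conditions they impose on $\lambda_i$ must be mutually compatible; in particular $v_i$ cannot be a trunk ($\lambda_i=0$) for one occurrence and a stem ($\lambda_i\ge 1$) for the other. Second, the operators $(\cdot)^{+r}$ and $(\cdot)^{*r}$ are \emph{maximal}: the repeated block cannot be prolonged inside a longer occurrence, either to the left or to the right. I shall also use freely that $\lambda_0\ge 1$ and $\lambda_k\ge 1$, so that $v_0$ and $v_k$ are stems and never trunks.

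Write $P_0$, $P_{[}$, $P_{]}$ for the patterns $02^-(02^-)^{*r}0$, $[2^-(02^-)^{*r}0$, $02^-(02^-)^{*r}]$, and note that the word $2^-(02^-)^{+r}$ also occurs inside $1^+2^-(02^-)^{+r}1^+$. Since $CT$ has no pair of adjacent trunks, any ``$2^-$'' position sitting next to a ``$0$'' position is a genuine stem with one or two pendent neighbours; hence every occurrence of $P_0$, $P_{[}$, $P_{]}$, $[2^-(02^-)^{*r}]$, or of the alternating word inside $1^+2^-(02^-)^{+r}1^+$, is a path of consecutive spine vertices strictly alternating between trunks and $\{1,2\}$-stems. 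Call a maximal such path a \emph{light segment}. The first step is then an elementary lemma: each of the above occurrences lies inside a unique light segment of $CT$; two distinct light segments are vertex-disjoint (merging two overlapping ones would contradict maximality of a light segment); and the maximality clause of the repetition operators, together with $\lambda_0,\lambda_k\ge 1$, determines the occurrence and its boundary markers from the light segment alone---$P_0$ records a light segment with a trunk at each end (so avoiding $v_0$ and $v_k$), $P_{[}$ one starting at $v_0$ and ending at a trunk, $P_{]}$ one starting at a trunk and ending at $v_k$, $[2^-(02^-)^{*r}]$ the light segment that is the whole spine, and $1^+2^-(02^-)^{+r}1^+$ a light segment with $\{1,2\}$-stems at both ends, flanked by two further stems. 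Finally, in $\{1^+,[\}1\{1^+,]\}$ every spine vertex is a stem.

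Granting this lemma, items~1--4 each reduce to a finite case analysis on a common vertex $v_i$ of two overlapping occurrences $M$ and $M'$, where for all the pairs considered $M$ is one of $P_0$, $P_{[}$, $P_{]}$. If $v_i$ is a trunk of $M$, then $\lambda_i=0$, so $M'$ is not of type $\{1^+,[\}1\{1^+,]\}$ and $v_i$ is not a ``$1^+$''-end of a $1^+2^-(02^-)^{+r}1^+$-occurrence; in every remaining case $M'$ also carries an alternating word through $v_i$, so $M$ and $M'$ lie on the \emph{same} light segment $S$. Comparing the boundary markers of $M$ and $M'$ along $S$ then yields the contradiction in items~1--3; this is exactly where ``$v_0,v_k$ are stems'' and maximality are used, for instance to exclude a short $P_0$-occurrence lying just to the right of $v_0$ inside a longer $P_{[}$-occurrence. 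For item~4, the $P_{[}$- and $P_{]}$-occurrences forced onto a common $S$ must be coinitial and cofinal with $S$, so $S$ runs from $v_0$ to $v_k$, $S$ is the whole spine, and $CT=[2^-(02^-)^{*r}]$ as claimed. If instead $v_i$ is a $\{1,2\}$-stem of $M$, then either $M'$ again carries an alternating word through $v_i$, and one is back in the previous case, or $v_i$ is a central ``$1$'' or a flanking ``$1^+$'' of $M'$; in the latter case a spine-neighbour of $v_i$ is forced to be a stem, whereas $v_i$, being a $\{1,2\}$-stem interior to the light segment of $M$, has trunks as its spine-neighbours---a contradiction. (When $v_i$ is instead an endpoint of that light segment, it is $v_0$ or $v_k$, and the same clash occurs at its unique light-segment neighbour.)

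I expect the difficulty to be organisational rather than conceptual: one must keep track, for each of $M$ and $M'$, of whether the common vertex is an interior vertex of an alternating word, a trunk endpoint, a $v_0$- or $v_k$-endpoint, or a flanking ``$1^+$'' vertex, and check in each borderline configuration that the maximality of $(\cdot)^{+r}$ and $(\cdot)^{*r}$ genuinely forbids it. The cleanest route is to establish the light-segment lemma sketched above once and for all; items~1--4 then follow from ``distinct light segments are vertex-disjoint'' together with the trunk/stem dichotomy at the common vertex.
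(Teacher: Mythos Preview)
The paper does not give a proof of this observation; it is stated as a direct consequence of ``the specific structure of the patterns used in the above formula (and, in particular, of the maximality of the number of repetitions of subpatterns of the form $\Pi^{+r}$ or $\Pi^{*r}$)'', and left to the reader. Your light-segment lemma is precisely a clean formalisation of that one-line justification: once one observes that each of the alternating patterns occupies a maximal alternating trunk/$\{1,2\}$-stem segment and that the boundary data (trunk vs.\ $v_0$ vs.\ $v_k$ vs.\ flanking $1^+$) determines which pattern it is, the four items are immediate. So your approach is the paper's approach, made explicit.

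One small point worth tightening: your correspondence ``$P_0$ records a light segment with a trunk at each end'' silently uses that maximality of the repetition in $02^-(02^-)^{*r}0$ is to be read as maximality of the whole alternating block, not merely maximality of $r$ for a fixed left endpoint. Under the latter, weaker reading one could in principle have a $P_0$-occurrence sitting strictly inside a $P_{[}$-occurrence on the same light segment (e.g.\ in $CT(1,0,2,0,3)$), which would contradict item~1. The intended reading---and the one under which the observation holds and under which it is used in Lemma~\ref{lem:beta-star}---is the stronger one you implicitly adopt; it would be worth saying so explicitly when you state your light-segment lemma.
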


%%%%%%%%%%%%%%%%%%%%%%%%%%%%%%%%%%%%%%%%%%%%%%%%%%%%%%%%%%%%%%%%
\begin{figure}
\begin{center}

\begin{tikzpicture}[scale=0.9,x=0.7cm,y=0.7cm]
\CTunL{0}{0}{1}
\CTunplusL{2}{0}{1}
\draw[thick] (0,0) to (2,0);
\draw[thick,dashed] (2,0) to (3,0);
%\node at (1.5,-3) {$[11^+$};
\node at (5,-1) {$\longrightarrow$};
\CTunL{7}{0}{2}
\CTunplusL{9}{0}{1}
\draw[thick] (7,0) to (9,0);
\draw[thick,dashed] (9,0) to (10,0);
\end{tikzpicture}
\mbox{}\\
\vskip 0.5cm
\begin{tikzpicture}[scale=0.9,x=0.7cm,y=0.7cm]
\CTunplusL{0}{0}{1}
\CTunL{2}{0}{1}
\CTunplusL{4}{0}{1}
\draw[thick] (0,0) to (4,0);
\draw[thick,dashed] (-1,0) to (0,0);
\draw[thick,dashed] (4,0) to (5,0);
%\node at (2,-3) {$1^+11^+$};
\node at (7,-1) {$\longrightarrow$};
\CTunplusL{9}{0}{1}
\CTunL{11}{0}{2}
\CTunplusL{13}{0}{1}
\draw[thick] (9,0) to (13,0);
\draw[thick,dashed] (8,0) to (9,0);
\draw[thick,dashed] (13,0) to (14,0);
\node at (7,-4) {(a) From $f_1$ to $f_2$};
\end{tikzpicture}
\mbox{}\\
\vskip 0.5cm
\begin{tikzpicture}[scale=0.9,x=0.7cm,y=0.7cm]
\CTunplus{0}{0}
\CTdeuxL{2}{0}{1}{1}
\CTzeroL{4}{0}{1}
\CTunL{6}{0}{1}
\CTzeroL{8}{0}{1}
\CTdeuxL{10}{0}{1}{1}
\CTunplus{12}{0}
\draw[thick] (0,0) to (12,0);
\end{tikzpicture}
\mbox{}\\
\vskip 0.5cm
\begin{tikzpicture}[scale=0.9,x=0.7cm,y=0.7cm]
\node at (-2,-1) {$\longrightarrow$};
\CTunplus{0}{0}
\CTdeuxL{2}{0}{1}{1}
\CTzeroL{4}{0}{0}
\CTunL{6}{0}{3}
\CTzeroL{8}{0}{0}
\CTdeuxL{10}{0}{1}{1}
\CTunplus{12}{0}
\draw[thick] (0,0) to (12,0);
\node at (5,-4) {(b) From $f_2$ to $f_3$, pattern $1^+201021^+$, $\cost(f'_3)=\cost(f'_2)+(1-1)$};
\end{tikzpicture}
\mbox{}\\
\vskip 0.5cm
\begin{tikzpicture}[scale=0.9,x=0.7cm,y=0.7cm]
\CTunplus{0}{0}
\CTdeuxL{2}{0}{1}{1}
\CTzeroL{4}{0}{1}
\CTunL{6}{0}{1}
\CTzeroL{8}{0}{1}
\CTdeuxL{10}{0}{1}{1}
\CTzeroL{12}{0}{1}
\CTunL{14}{0}{1}
\CTzeroL{16}{0}{1}
\CTunL{18}{0}{1}
\CTunplus{20}{0}
\draw[thick] (0,0) to (20,0);
\end{tikzpicture}
\mbox{}\\
\vskip 0.5cm
\begin{tikzpicture}[scale=0.9,x=0.7cm,y=0.7cm]
\node at (-2,-1) {$\longrightarrow$};
\CTunplus{0}{0}
\CTdeuxL{2}{0}{1}{1}
\CTzeroL{4}{0}{0}
\CTunL{6}{0}{3}
\CTzeroL{8}{0}{0}
\CTdeuxL{10}{0}{3}{0}
\CTzeroL{12}{0}{0}
\CTunL{14}{0}{3}
\CTzeroL{16}{0}{0}
\CTunL{18}{0}{2}
\CTunplus{20}{0}
\draw[thick] (0,0) to (20,0);
\node at (9,-4) {(c) From $f_2$ to $f_3$, pattern $1^+2010201011^+$, $\cost(f'_3)=\cost(f'_2)+(3-1)$};
\end{tikzpicture}

\caption{\label{fig:f1tof4a}Proof of Lemma~\ref{lem:beta-star}: from $f_1$ to $f_3$}
\end{center}
\end{figure}

We first prove that every caterpillar with no pair of adjacent trunks
admits an independent broadcast $f$ with $\cost(f)=\beta^*(CT)$.

\begin{lemma}
Every caterpillar $CT=CT(\lambda_0,\dots,\lambda_k)$ of length $k\ge 1$, with no pair of adjacent trunks,
admits an independent broadcast $f$ with $\cost(f)=\beta^*(CT)$.
\label{lem:beta-star}
\end{lemma}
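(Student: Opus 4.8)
The plan is to construct an explicit independent broadcast $f$ on $CT$ whose cost is exactly the value $\beta^*(CT)$, by building it incrementally from the canonical broadcast and tracking how each term of $\beta^*(CT)$ corresponds to a concrete gain in cost. First I would start from the canonical broadcast $f_c$, which puts $f$-value $k+1$ on one leaf at each end and $0$ elsewhere, giving cost $2(k+1)$. The idea is then to perform a sequence of local modifications, each replacing one broadcast vertex of large value by several broadcast vertices of smaller value on carefully chosen leaves, so that the total cost strictly increases while independence and domination are preserved. Concretely, one wants: a $1$ on (a leaf of) every stem, contributing $\lambda(CT)$ once one also absorbs the ``+1'' bonuses; an extra $+1$ for each trunk that can be ``covered'' by pushing value from its neighbours, contributing $\tau(CT)$; and additional bonuses coming from the special configurations $\{1^+,[\}1\{1^+,]\}$, $1^+2^-(02^-)^{+r}1^+$, and the three ``end'' patterns $02^-(02^-)^{*r}0$, $[2^-(02^-)^{*r}0$, $02^-(02^-)^{*r}]$, each of which allows a further redistribution increasing cost by $\alpha_1(M)$ or $\alpha_2(M)$. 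The figures (Figure~\ref{fig:f1tof4a} and the patterns of Figure~\ref{fig:patterns}) already show the atomic moves: e.g. turning $[11^+$ into a broadcast with values $2$ and $1$ on the two relevant leaves, or turning $1^+201\dots021^+$ into a configuration where an isolated $1$-stem is boosted while a neighbouring $0$-trunk is emptied.

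The key steps, in order, would be: (1) define the target broadcast $f$ on each ``block'' of the spine determined by the decomposition of $CT$ into maximal occurrences of the patterns appearing in $\beta^*(CT)$ plus the remaining ``plain'' parts (isolated stems, isolated trunks flanked by stems of degree $\ge 2$, etc.); by Observation~\ref{obs:overlap}, these occurrences essentially do not overlap, so the blocks form a genuine partition of the relevant structure and the contributions add up without double counting. (2) On a plain stem $v_i$ with $\lambda_i\ge 1$ not involved in any special pattern, set $f(\ell_i^1)=1$ and $0$ on its other pendent neighbours and on $v_i$; on a plain trunk $v_i$ set $f(v_i)=0$ and instead raise by $1$ the value carried by an appropriately chosen nearby leaf, so that $v_i$ is still dominated and the cost goes up by $1$ — this accounts for $\tau(CT)$. (3) For each special pattern occurrence $M$, apply the designated local redistribution (as in Figure~\ref{fig:f1tof4a}): for $1^+2^-(02^-)^{+r}1^+$ one empties the internal trunks and stems-of-degree-$2$ along the $(02^-)$ run and concentrates their total weight onto the degree-$1$ stems, gaining $\alpha_1(M)$; for the end patterns one does the analogous thing but picks up an extra $+1$ or $+2$ from the boundary because an end leaf can carry a larger value (eccentricity is bigger there), which is exactly the $\#_M([1^+)+\#_M(1^+])$ correction in $\alpha_2$. (4) Verify at each stage that the modified function is still a broadcast ($f(v)\le e_{CT}(v)$ — eccentricities in a caterpillar are easy to compute: a leaf at a spine vertex $v_i$ has eccentricity $\max\{i,k-i\}+2$), still independent (every broadcast vertex is dominated only by itself — the crucial check, since we are packing broadcast vertices more tightly), and still dominating. (5) Sum the gains: $2(k+1)$ from $f_c$, plus the per-stem and per-trunk increments, plus the per-pattern $\alpha_1/\alpha_2$ terms, and check the arithmetic collapses to the stated $\beta^*(CT)=\lambda(CT)+\tau(CT)+\#_{CT}(\{1^+,[\}1\{1^+,]\})+\alpha_1(CT;1^+2^-(02^-)^{+r}1^+)+\alpha_2(CT;02^-(02^-)^{*r}0)+\alpha_2(CT;[2^-(02^-)^{*r}0)+\alpha_2(CT;02^-(02^-)^{*r}])$.

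I expect the main obstacle to be bookkeeping rather than a single deep idea: one must set up the block decomposition so that (a) it is genuinely a partition — this is where Observation~\ref{obs:overlap} is indispensable, since the various patterns could a priori share vertices — and (b) the ``seams'' between adjacent blocks are handled so that the broadcast defined piecewise on the blocks is globally independent and dominating. The delicate case is a degree-$1$ stem sitting between two patterns, or a trunk on the boundary of a $(02^-)^{+r}$ run: one has to be careful which neighbouring leaf absorbs the $+1$ for covering that trunk and make sure no two boosted leaves end up within distance $\le\max$ of their values. The other fiddly point is the special patterns that reach an end of the caterpillar, where the larger eccentricity lets a leaf carry one or two extra units — the value of $\alpha_2$ versus $\alpha_1$ — and one must confirm the end leaf really does have enough eccentricity for every such extra unit claimed. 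Once the decomposition and the seam conventions are pinned down, each individual move is a routine check against Theorem~\ref{th:maximality} and the eccentricity formula, and the cost accounting is a straightforward telescoping sum; so I would organise the proof as: decomposition lemma (via Observation~\ref{obs:overlap}) $\to$ definition of $f$ block by block $\to$ three verification claims (broadcast / independent / dominating) $\to$ cost computation.
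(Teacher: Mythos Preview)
Your plan has the right overall architecture (build $f$ block by block, use Observation~\ref{obs:overlap} to avoid overlaps, then add up the gains), but the choice of starting broadcast and the accounting for the two ``base'' terms $\lambda(CT)$ and $\tau(CT)$ are off, and this is where the paper's proof is both different and simpler.

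You start from the canonical broadcast $f_c$ with cost $2(k+1)$ and propose to redistribute it. But $2(k+1)$ bears no direct relation to $\beta^*(CT)$, so you would have to simultaneously dismantle $f_c$ and rebuild something with a completely different cost; the promised ``cost strictly increases at each step'' cannot be guaranteed. The paper instead starts from the broadcast $f_1$ that assigns value~$1$ to \emph{every} leaf and to \emph{every} trunk, and $0$ to every stem. This is independent precisely because $CT$ has no pair of adjacent trunks (two leaves at the same stem are at distance~$2$; a trunk and any neighbouring leaf are at distance~$2$; two trunks are never adjacent), and its cost is exactly $\lambda(CT)+\tau(CT)$. This is the correct base.

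Your step~(2) says ``set $f(\ell_i^1)=1$ and $0$ on its other pendent neighbours'', and earlier ``a $1$ on (a leaf of) every stem, contributing $\lambda(CT)$''. Putting value~$1$ on \emph{one} leaf per stem contributes the number of stems, not $\lambda(CT)$; to get $\lambda(CT)$ you must put~$1$ on every leaf. Likewise, your handling of trunks (``set $f(v_i)=0$ and raise a nearby leaf by $1$'') is unnecessary and fragile: raising a neighbouring leaf to value~$2$ can break independence with leaves at the next stem. The paper simply keeps $f(v_i)=1$ on each trunk, which is exactly where the hypothesis ``no pair of adjacent trunks'' does its work.

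From $f_1$ the paper then performs three local passes: (i) for each occurrence of $\{1^+,[\}1\{1^+,]\}$, raise the unique pendent leaf from~$1$ to~$2$ (gain $+1$ each); (ii) for each occurrence of $1^+2^-(02^-)^{+r}1^+$ containing a stem with a single leaf, set the internal leaves to~$2$ or~$3$ and the internal trunks to~$0$ (gain $\alpha_1(M)$); (iii) for each occurrence of $02^-(02^-)^{*r}0$, $[2^-(02^-)^{*r}0$ or $02^-(02^-)^{*r}]$, set all leaves to~$3$ and all trunks to~$0$ (gain $\alpha_2(M)$). Each pass is purely local and its gain is exactly the corresponding summand of $\beta^*(CT)$; the non-overlap is handled by Observation~\ref{obs:overlap}, as you noted. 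Your description of the pattern moves in step~(3) is essentially right; it is only the starting point and the base accounting that need to be fixed.
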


\begin{proof}
We will construct a sequence of independent broadcasts $f_1$, $\dots$, $f_4$,
step by step, such that $\cost(f_4)=\beta^*(CT)$.
Each independent broadcast $f_i$, $2\le i\le 4$, is obtained by possibly modifying 
the independent broadcast $f_{i-1}$ and is such that $\cost(f_i)\ge\cost(f_{i-1})$.
Moreover, for each independent broadcast $f_i$, $1\le i\le 4$, we will have $f_i(v)=0$
whenever $v$ is a stem.
These modifications are illustrated in Figures~\ref{fig:f1tof4a} and~\ref{fig:f1tof4b}, using
the same drawing conventions as in Figure~\ref{fig:patterns}.
Only useful broadcast values are given in these figures.
These figures should help the reader to see that all the proposed modifications
lead to a new valid independent broadcast.

\medskip

\noindent{\em Step 1.}
Let $f_1$ be the mapping defined by
$f_1(v)=1$ if  $v$ is a pendent vertex or a trunk, and $f_1(v)=0$ otherwise.
Clearly, $f_1$ is an independent broadcast on $CT$ with 
$$\cost(f_1)=\lambda(CT)+\tau(CT).$$

\noindent{\em Step 2.}
Let $f_2$ be the mapping defined by
$f_2(v)=2$ if $v=\ell_i^1$ for some $i$, $0\le i\le k$, such that (i) $\lambda_i=1$,
(ii) $i=0$ or $\lambda_{i-1}\ge 1$, and
(iii) $i=k$ or $\lambda_{i+1}\ge 1$,
and $f_2(v)=f_1(v)$ otherwise (see Figure~\ref{fig:f1tof4a}(a)).
Again, $f_2$ is an independent broadcast on $CT$ with 
$$\cost(f_2)=\cost(f_1)+\#_{CT}(\{1^+,[\}1\{1^+,]\}).$$

\noindent{\em Step 3.}
Suppose that $CT$ contains the pattern $1^+2^-(02^-)^{+r}1^+$, of length $2r+3$,
and let $M=CT[i_0,i_0+2r+2]$ be the corresponding occurrence of this pattern.
We thus have $f_2(v)=1$ for every trunk of $M$ and for every pendent neighbour
of a stem vertex $v_j$ on $M$ with $i_0+1\le j\le i_0+2r+1$.
Hence, the cost of the restriction $f'_2$ of $f_2$ to $M$ is
$$\cost(f'_2)=f_2^*(v_{i_0})+\lambda(M[i_0+1,i_0+2r+1])+\tau(M)+f_2^*(v_{i_0+2r+2}).$$
%Let $f_3$ be the mapping first defined by $f_3(v)=f_2(v)$ for every vertex $v$.
%We then modify $f_3$ as follows.
We modify $f_2$ as follows, to obtain $f_3$.
%If the subgraph $M[i_0+1,i_0+2r+1]$ contains no stem vertex $v_i$ with $\lambda_i=1$,
%we keep $f_3=f_2$.
%Otherwise, we let 
If the subgraph $M[i_0+1,i_0+2r+1]$ contains a stem vertex $v_i$ with $\lambda_i=1$,
we let 
\begin{itemize}
\item $f_3(\ell_{i_0+1}^1)=2$ if $\lambda_{i_0+1}=1$,
\item $f_3(\ell_{i_0+2r+1}^1)=2$ if $\lambda_{i_0+2r+1}=1$,
\item $f_3(\ell_{i_0+2j+1}^1)=3$ (and $f_3(\ell_{i_0+2j+1}^2)=0$ if $\lambda_{i_0+2j+1}=2$)
for every $j$, $1\le j\le r-1$,
\item  $f_3(v_{i_0+2j})=0$ for every $j$, $1\le j\le r$,
\end{itemize}
 (see Figure~\ref{fig:f1tof4a}(b) and~(c)).
The cost of the restriction $f'_3$ of $f_3$ on $M$ is then
$$\cost(f'_3)=\cost(f'_2)+\max\{0,\#_{M[i_0+1,i_0+2r+1]}(1)-1\}=\cost(f'_2)+\alpha_1(M).$$
By Observation~\ref{obs:overlap}, two occurrences of the pattern $1^+2^-(02^-)^{+r}1^+$ can only overlap on their end-vertices.
Therefore, doing the above modification for every occurrence of the pattern $1^+2^-(02^-)^{+r}1^+$ in $M$, 
the so-obtained independent broadcast $f_3$ satisfies
$$\cost(f_3)=\cost(f_2)+\alpha_1(CT).$$

\noindent{\em Step 4.}
Suppose first that $CT$ contains the pattern $02^-(02^-)^{*r}0$, of length $2r+3$,
and let $M=CT[i_0,i_0+2r+2]$, $i_0\ge 1$, $i_0+2r+2\le k-1$, be the corresponding occurrence of this pattern.
We thus have $f_2(v)=1$ for every trunk of $M$ and for every pendent neighbour
of a stem vertex $v_j$ on $M$ with $i_0+1\le j\le i_0+2r+1$.
Hence, the cost of the restriction $f'_3$ of $f_3$ to $M$ is
$$\cost(f'_3)=f_3^*(v_{i_0})+\lambda(M)+\tau(M[i_0+1,i_0+2r+1])+f_3^*(v_{i_0+2r+2}).$$
%Let $f_4$ be the mapping first defined by $f_4(v)=f_3(v)$ for every vertex $v$.
%We then modify $f_4$ as follows.
We modify $f_3$ as follows, to obtain $f_4$.
%If the subgraph $M[i_0+1,i_0+2r+1]$ contains no stem vertex $v_i$ with $\lambda_i=1$,
%we keep $f_4=f_3$.
%Otherwise, we let 
If the subgraph $M[i_0+1,i_0+2r+1]$ contains a stem vertex $v_i$ with $\lambda_i=1$,
we let 
\begin{itemize}
%\item $f_4(\ell_{i_0+1}^1)=2$ if $\lambda_{i_0+1}=1$,
%\item $f_4(\ell_{i_0+2r+1}^1)=2$ if $\lambda_{i_0+2r+1}=1$,
\item $f_4(\ell_{i_0+2j+1}^1)=3$ (and $f_4(\ell_{i_0+2j+1}^2)=0$ if $\lambda_{i_0+2j+1}=2$)
for every $j$, $0\le j\le r$,
\item  $f_4(v_{i_0+2j})=0$ for every $j$, $0\le j\le r$,
\end{itemize}
 (see Figure~\ref{fig:f1tof4b}(a)).
The cost of the restriction $f'_4$ of $f_4$ on $M$ is then
$$\cost(f'_4)=\cost(f'_3)+\max\{0,\#_{M}(1)-1\}=\cost(f'_3)+\alpha_2(M).$$

Suppose now that $CT$ contains the pattern $[2^-(02^-)^{*r}0$, %
%(or, equivalently by symmetry,
%the pattern $02^-(02^-)^{*r}]$), 
of length $2r+2$,
and let $M=CT[0,2r+1]$ be the corresponding occurrence of this pattern.
Doing the same type of modification as above (see Figure~\ref{fig:f1tof4b}(b)),
the cost of the restriction $f'_4$ of $f_4$ on $M$ is then
$$\cost(f'_4)=\cost(f'_3)+\max\{0,\#_{M}(1)-1\}+1=\cost(f'_3)+\alpha_2(M).$$

Finally, if $CT$ contains the pattern $02^-(02^-)^{*r}]$ and $CT$ is not
a caterpillar with pattern $[2^-(02^-)^{*r}]$, the same type of modification
leads to the same property.

By Observation~\ref{obs:overlap}, no two occurrences of the patterns $02^-(02^-)^{*r}0$ and
$[2^-(02^-)^{*r}0$ (or $02^-(02^-)^{*r}0$ and $02^-(02^-)^{*r}]$)
can  overlap.
Therefore, doing the above modification for every occurrence of these patterns in $M$, 
the so-obtained independent broadcast $f_4$ satisfies
$$\cost(f_4)=\cost(f_3)+\alpha_2(CT)=\beta^*(CT).$$

%Finally, if $CT$ contains the pattern $[2^-(02^-)^{+r}]$, that is, 
%$CT=CT(\lambda_0,0,\lambda_2,\dots,0,\lambda_{2r})$ with $1\le\lambda_{2j}\le 2$
%for every $j$, $0\le j\le r$, doing the same type of modification as above (see Figure~\ref{fig:f1tof4b}(c))
%also gives 
%$$\cost(f_4)=\cost(f_3)+\alpha_2(CT)=\beta^*(CT).$$

%XXXXXXXXXXXXXXXXXXXXXXX 
This completes the proof.
\end{proof}

%%%%%%%%%%%%%%%%%%%%%%%%%%%%%%%%%%%%%%%%%%%%%%%%%%%%%%%%%%%%%%%%
\begin{figure}
\begin{center}

\begin{tikzpicture}[scale=0.9,x=0.7cm,y=0.7cm]
\CTzeroL{0}{0}{1}
\CTdeuxL{2}{0}{1}{1}
\CTzeroL{4}{0}{1}
\CTunL{6}{0}{1}
\CTzeroL{8}{0}{1}
\CTdeuxL{10}{0}{1}{1}
\CTzeroL{12}{0}{1}
\CTunL{14}{0}{1}
\CTzeroL{16}{0}{1}
\CTunL{18}{0}{1}
\CTzeroL{20}{0}{1}
\draw[thick] (-1,0) to (21,0);
\end{tikzpicture}
\mbox{}\\
\vskip 0.5cm
\begin{tikzpicture}[scale=0.9,x=0.7cm,y=0.7cm]
\node at (-2,-1) {$\longrightarrow$};
\CTzeroL{0}{0}{0}
\CTdeuxL{2}{0}{3}{0}
\CTzeroL{4}{0}{0}
\CTunL{6}{0}{3}
\CTzeroL{8}{0}{0}
\CTdeuxL{10}{0}{3}{0}
\CTzeroL{12}{0}{0}
\CTunL{14}{0}{3}
\CTzeroL{16}{0}{0}
\CTunL{18}{0}{3}
\CTzeroL{20}{0}{0}
\draw[thick] (-1,0) to (21,0);
\node at (9,-4) {(a) From $f_3$ to $f_4$, pattern $02010201010$, $\cost(f'_4)=\cost(f'_3)+(3-1)+0$};
\end{tikzpicture}

\mbox{}\\
\vskip 0.5cm
\begin{tikzpicture}[scale=0.9,x=0.7cm,y=0.7cm]
%\CTzeroL{0}{0}{1}
\CTdeuxL{2}{0}{1}{1}
\CTzeroL{4}{0}{1}
\CTunL{6}{0}{1}
\CTzeroL{8}{0}{1}
\CTdeuxL{10}{0}{1}{1}
\CTzeroL{12}{0}{1}
\CTunL{14}{0}{1}
\CTzeroL{16}{0}{1}
\CTdeuxL{18}{0}{1}{1}
\CTzeroL{20}{0}{1}
\draw[thick] (2,0) to (21,0);
\end{tikzpicture}
\mbox{}\\
\vskip 0.5cm
\begin{tikzpicture}[scale=0.9,x=0.7cm,y=0.7cm]
\node at (0,-1) {$\longrightarrow$};
%\CTzeroL{0}{0}{0}
\CTdeuxL{2}{0}{3}{0}
\CTzeroL{4}{0}{0}
\CTunL{6}{0}{3}
\CTzeroL{8}{0}{0}
\CTdeuxL{10}{0}{3}{0}
\CTzeroL{12}{0}{0}
\CTunL{14}{0}{3}
\CTzeroL{16}{0}{0}
\CTdeuxL{18}{0}{3}{0}
\CTzeroL{20}{0}{0}
\draw[thick] (2,0) to (21,0);
\node at (10.5,-4) {(b) From $f_3$ to $f_4$, pattern $[2010201020$, $\cost(f'_4)=\cost(f'_3)+(2-1)+1$};
\end{tikzpicture}

%
%\mbox{}\\
%\vskip 0.5cm
%\begin{tikzpicture}[scale=0.9,x=0.7cm,y=0.7cm]
%%\CTzeroL{0}{0}{1}
%\CTdeuxL{2}{0}{1}{1}
%\CTzeroL{4}{0}{1}
%\CTunL{6}{0}{1}
%\CTzeroL{8}{0}{1}
%\CTdeuxL{10}{0}{1}{1}
%\CTzeroL{12}{0}{1}
%\CTunL{14}{0}{1}
%\CTzeroL{16}{0}{1}
%\CTdeuxL{18}{0}{1}{1}
%%\CTzeroL{20}{0}{1}
%\draw[thick] (2,0) to (18,0);
%\end{tikzpicture}
%%
%\mbox{}\\
%\vskip 0.5cm
%\begin{tikzpicture}[scale=0.9,x=0.7cm,y=0.7cm]
%\node at (0,-1) {$\longrightarrow$};
%%\CTzeroL{0}{0}{0}
%\CTdeuxL{2}{0}{3}{0}
%\CTzeroL{4}{0}{0}
%\CTunL{6}{0}{3}
%\CTzeroL{8}{0}{0}
%\CTdeuxL{10}{0}{3}{0}
%\CTzeroL{12}{0}{0}
%\CTunL{14}{0}{3}
%\CTzeroL{16}{0}{0}
%\CTdeuxL{18}{0}{3}{0}
%%\CTzeroL{20}{0}{0}
%\draw[thick] (2,0) to (18,0);
%%
%\node at (9,-4) {(c) From $f_3$ to $f_4$, pattern $[201020102]$, $\cost(f'_4)=\cost(f'_3)+(2-1)+2$};
%\end{tikzpicture}

\caption{\label{fig:f1tof4b}Proof of Lemma~\ref{lem:beta-star}: from $f_3$ to $f_4$}
\end{center}
\end{figure}

The next lemma shows that if $f$ is an optimal non-canonical independent broadcast
on a caterpillar $CT$ with no pair of adjacent trunks, with $\cost(f)>2(\diam(CT)-1)$,
then there exists an optimal non-canonical independent broadcast $\tilde{f}$ on $CT$
such that the $\tilde{f}$-values of the pendent neighbours of $v_0$ and $v_k$
only depend on the values of $\lambda_0,\lambda_1$ and $\lambda_{k-1},\lambda_k$, respectively:

\begin{lemma}
Let $CT=CT(\lambda_0,\dots,\lambda_k)$ be a caterpillar of length $k\ge 1$,
with no pair of adjacent trunks.
If $f$ is an optimal non-canonical independent broadcast on $CT$ 
with $\cost(f)>2(\diam(CT)-1)$, then there exists an optimal non-canonical independent broadcast $\tilde{f}$ on $CT$, 
thus with $\cost(\tilde{f})=\cost(f)$,
such that, for every $i\in\{0,k\}$, we have
\begin{enumerate}
\item if $\lambda_i=1$ and $\lambda_{i'}\ge 1$, then $\tilde{f}(\ell_i^1)= 2$,
\item if $\lambda_i=1$ and $\lambda_{i'}=0$,
then $\tilde{f}(\ell_i^1)= 3$,
\item if $\lambda_i=2$ and $\lambda_{i'}\ge 1$, then $\tilde{f}(\ell_i^1)=\tilde{f}(\ell_i^2)= 1$,
%(or, equivalently, $f(\ell_i^1)=0$ and $f(\ell_i^2)=2$, or $f(\ell_i^1)=f(\ell_i^2)=1$),
\item if $\lambda_i=2$ and $\lambda_{i'}=0$,
then $\tilde{f}(\ell_i^1)= 3$ and $\tilde{f}(\ell_i^2)=0$,
%(or, equivalently, $f(\ell_i^1)=0$ and $f(\ell_i^2)=1$),
\item if $\lambda_i\ge 3$, then $\tilde{f}(\ell_i^j)= 1$ for every $j$, $1\le j\le\lambda_i$,
\end{enumerate}
where $i'=1$ if $i=0$, or $i'=k-1$ if $i=k$.
\label{lem:end-leaf-broadcast}
\end{lemma}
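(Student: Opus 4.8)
The plan is to normalise the two ends of the spine separately; since a caterpillar is symmetric under reversal of its spine, it suffices to produce, starting from $f$, an optimal non-canonical independent broadcast agreeing with the conclusion of the lemma at the pendent neighbours of $v_0$, and then repeat the argument at $v_k$. Throughout, the hypothesis $\cost(f)>2(\diam(CT)-1)=2(k+1)$ is used as follows. First, $f$ is maximal independent: any independent broadcast dominating $f$ pointwise would have strictly larger cost, hence would still be non-canonical (every broadcast of cost $>2(k+1)$ is non-canonical, a canonical one having cost exactly $2(k+1)$), contradicting the optimality of $f$ among non-canonical broadcasts. Thus Theorem~\ref{th:maximality} applies: $f$ is dominating and $f(v)=\min\{d_{CT}(v,u):u\in V_f^+\setminus\{v\}\}-1$ for every $v\in V_f^+$. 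The same "a cost‑increasing move stays non‑canonical" observation upgrades Lemmas~\ref{lem:stem-0} and~\ref{lem:end-leaves} to the present setting, so no stem is an $f$‑broadcast vertex (in particular $f(v_0)=0$), at least one pendent neighbour of $v_0$ lies in $V_f^+$, and $|V_f^+|\ge 3$ (a maximal independent broadcast with at most two broadcast vertices has cost at most $2(\diam(CT)-1)$).

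I would then analyse the broadcast near $v_0$. Let $u$ be a broadcast vertex outside $\{v_0,\ell_0^1,\dots,\ell_0^{\lambda_0}\}$ nearest to $v_0$; such a $u$ exists, since otherwise $v_1$ would be at distance at least $2$ from every broadcast vertex, each of value at most $1$, contradicting domination. Because no stem is a broadcast vertex and $CT$ has no pair of adjacent trunks, $u$ is a trunk or a leaf, and whenever $\lambda_{1}=0$ (so $v_1$ is a trunk) the vertex $v_2$ is a stem, hence not in $V_f^+$. This parity information, together with Erwin's formula, is what forces the $f$‑value of a broadcast pendent neighbour of $v_0$ to be at least the target value $t\in\{1,2,3\}$ prescribed by the lemma. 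Concretely: if two or more pendent neighbours of $v_0$ are broadcast vertices, each has value $1$ (they are pairwise at distance $2$) and a non‑broadcast pendent neighbour of $v_0$ could then be dominated neither by these nor (by Erwin's bound on $f(u)$) from the right, so \emph{all} of $\ell_0^1,\dots,\ell_0^{\lambda_0}$ are broadcast vertices of value~$1$ — this already settles cases~3 and~5, and in cases~2 and~4 I would show this configuration is impossible by exhibiting a cheaper‑to‑beat modification (replace the two unit values by a single value $3$, deleting $f(v_1)$ if necessary), contradicting optimality. Otherwise exactly one pendent neighbour $\ell_0^1$ of $v_0$ is a broadcast vertex, $f(\ell_0^1)=d_{CT}(\ell_0^1,u)-1$, and a short distance computation — using $u\neq v_1$ when $v_1$ is a stem, $u\notin\{v_1,v_2\}$ when $v_1$ is a trunk, and the domination of the remaining pendent neighbours of $v_0$ when $\lambda_0=2$ — gives $f(\ell_0^1)\ge t$ in each of the five cases.

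The heart of the proof is the surgery when $f(\ell_0^1)=\delta>t$. Setting $\delta=d_{CT}(v_0,u)$, the vertex $\ell_0^1$ alone dominates $CT[0,\delta-1]$ together with the pendent neighbours of $v_0,\dots,v_{\delta-2}$. I would replace $f$ by $f'$ equal to $t$ on $\ell_0^1$ and to the prescribed values on the other pendent neighbours of $v_0$, equal to $0$ on the spine vertices $\ell_0^1$ was covering, equal to $1$ on the pendent neighbours and trunks of $CT[1,\delta-1]$, and unchanged on the rest of $CT$ (in the trunk case the single extra unit is carried by $\ell_0^1$ itself, which is what accounts for the $+1$ in the target value~$3$). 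The no‑adjacent‑trunks hypothesis guarantees that $v_1,\dots,v_{\delta-1}$ carry at least $\delta-1$ pendent‑vertices‑plus‑trunks (each stem contributes a pendent neighbour, each trunk contributes itself, and trunks are non‑adjacent), so $\cost(f')\ge\cost(f)$; it also guarantees, via the trunk/stem parity, that $u$ cannot sit at distance exactly $\delta$ from $v_0$, so the rightmost newly created broadcast vertex is properly separated from $u$ (here one uses Erwin's formula once more to bound $f(u)$), and $f'$ is an independent broadcast. Since $\cost(f')\ge\cost(f)>2(k+1)$, $f'$ is non‑canonical, whence $\cost(f')\le\cost(f)$ by optimality of $f$ among non‑canonical broadcasts, so $\cost(f')=\cost(f)$. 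Applying the same construction at $v_k$ yields $\tilde f$.

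I expect the main obstacle to be this surgery step: checking, in each of the roughly five cases (times the stem/trunk possibilities for the neighbour), that the redistributed broadcast is genuinely independent at the seam between the modified left portion and the untouched middle, and that the cost does not drop. The delicate point is pinning down the value $f(u)$ and the exact position of $u$; this is precisely where the assumption that $CT$ has no pair of adjacent trunks is essential, since it both rules out the "bad" positions for $u$ and makes the count of unit values one may place on $CT[1,\delta-1]$ large enough to compensate for the lost value $\delta-t$.
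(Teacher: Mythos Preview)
Your approach is genuinely different from the paper's. The paper argues by a minimal counterexample: it cuts the caterpillar at $v_{x-1}$ (where $x=f(\ell_0^1)$), applies the lemma inductively to $CT_1=CT[x-1,k]$ to obtain a broadcast $\tilde f_1$ on $CT_1$ whose left-end values are already normalised to at most $3$, and only then stitches the two pieces together. Your proposal attempts the normalisation in one shot by flattening $f$ to unit values on $CT[1,\delta-1]$.

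There is a real gap in the surgery step. By Erwin's formula you correctly have $\delta=f(\ell_0^1)=d_{CT}(v_0,u)$, so your later sentence ``$u$ cannot sit at distance exactly $\delta$ from $v_0$'' is simply false: $u$ \emph{is} at distance $\delta$ from $v_0$, either as the trunk $v_\delta$ or as a leaf $\ell_{\delta-1}^{\,j}$. In either situation the rightmost of your newly created broadcast vertices (a leaf of $v_{\delta-1}$, or a leaf/trunk of $v_{\delta-2}$) lies at distance at most $3$ from $u$, whereas Erwin's formula only gives $f(u)\le \delta$; when $\delta\ge 3$ (which is the generic case once $f(\ell_0^1)>t$) this does \emph{not} force $f(u)\le 2$, and your $f'$ fails to be independent at the seam. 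Shrinking the modified zone to $CT[1,\delta-2]$ or further does not help: you would need to stop outside the ball of radius $f(u)$ around $u$, but $f(u)$ can be as large as $\delta$, and then the unit values you place on the remaining stub no longer compensate for the lost value $\delta-t$. There is also an ambiguity when $u=\ell_{\delta-1}^{\,j}$: your rule ``value $1$ on the pendent neighbours of $CT[1,\delta-1]$, unchanged on the rest'' gives two conflicting values to $u$ itself.

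This is exactly the difficulty the paper's inductive cut avoids: by first normalising the restriction to $CT_1$, it guarantees that the values near $v_{x-1}$ are at most $3$ before any stitching is attempted, and the seam analysis then reduces to the two short cases $\lambda_{x-2}>0$ and $\lambda_{x-2}=0$. Your redistribution idea can be made to work, but to control $f(u)$ you would effectively have to iterate the construction from the right end inward --- which is the induction the paper performs.
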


\begin{proof}
Note first that if such a broadcast $\tilde{f}$ exists, then, by Lemma~\ref{lem:stem-0}, 
$\tilde{f}(u)=0$ for every stem $u$ of $CT$.
Therefore, the value of $\sum_{1\le j\le\lambda_i}\tilde{f}(\ell_i^j)$
cannot be strictly less than the value claimed in the lemma since otherwise it would contradict the optimality
of $\tilde{f}$.

By symmetry, it is enough to prove the lemma for the pendent neighbours of $v_0$.
Let $CT_0=CT(\lambda_0,\dots,\lambda_k)$ be a minimal counterexample, with respect
to the subgraph order, to the lemma.
That is, every sub-caterpillar of $CT_0$ satisfies the statement of the lemma
and, for every optimal non-canonical independent broadcast $f$ on $CT_0$ with $\cost(f)>2(\diam(CT)-1)$,
there is a pendent neighbour, say $\ell_0^1$ without loss of generality, of $v_0$ such that
$f(\ell_0^1)=x$ and $x$ is strictly greater than the value claimed by the lemma
(note that, in case 3, if $f(\ell_0^1)=2$ (resp.~0) and $f(\ell_0^2)=0$ (resp.~2),
then we can equivalently assign the value~1 to both of them).
We will prove that such a minimal counterexample cannot exist.

Let $f_0$ be any such independent broadcast on $CT_0$ for which the value $f(\ell_0^1)=x$ is minimal.
We thus have $x\ge 3$ whenever $\lambda_1>0$ or $\lambda_0\ge 3$ (since in this latter case
we can assign value 1 to each of the at least three pendent neighbours of $v_0$, and thus
$x=2$ would imply that $f_0$ is not optimal),
and $x\ge 4$ whenever $\lambda_1=0$.

Since $f_0(\ell_0^1)=x>1$, we have $f_0^*(v_i)=0$ for every $i$, $1\le i\le x-2$,
and $f_0(v_{x-1})=0$.
Moreover, $x-1<k$ since $f_0$ is a non-canonical independent broadcast,
and $v_{x-1}$ cannot be a trunk, since otherwise we could set
$f_0(\ell_0^1)=x+1$ (recall that, by Lemma~\ref{lem:stem-0}, 
$f_0(v_i)=0$ for every stem $v_i$, and thus $f_0(v_x)=0$), 
contradicting the optimality of $f_0$.

Let now $CT_1=(\lambda_{x-1},\dots,\lambda_k)$ be the caterpillar obtained
from $CT_0$ by deleting vertices $v_0,\dots,v_{x-2}$ and their
pendent neighbours (see Figure~\ref{fig:lemme-grappes}(a)). Note that $f_0(u)=0$ for every such deleted vertex $u\neq\ell_0^1$.
Let $f_1$ denote the restriction of $f_0$ to $V(CT_1)$.
Since $f_0(\ell_0^1)=x$, we get
$$f_1(u)=f_0(u)\le\max\{e_{CT_1}(u),d_{CT_0}(u,\ell_0^1)\}\le e_{CT_1}(u)$$
for every vertex $u\in V(CT_1)$, so that $f_1$ is an independent broadcast on $CT_1$
by Observation~\ref{obs:subgraph}.
Moreover, since $\diam(CT_1)=\diam(CT_0)-x+1$, we have
$$\cost(f_1)=\cost(f_0)-x > 2(\diam(CT_0)-1)-x = 2(\diam(CT_1)-1)+x-2.$$
Since $x>1$, we thus have $\cost(f_1)\ge 2(\diam(CT_1)-1)$.
Therefore, since $CT_0$ is a minimal counterexample, we get that 
either $f_1$ is a canonical independent broadcast on $CT_1$ or
there exists an optimal non-canonical
independent broadcast $f'_1$ on $CT_1$ with $\cost(f'_1)\ge\cost(f_1)$
and $f'_1$ satisfies the statement of the lemma.

Suppose first that $f_1$ is a canonical independent broadcast.
This implies 
$$\cost(f_1)=2(\diam(CT_1)-1).$$
% and 
% $$f_1^*(v_{x-1})=f_1^*(v_k)=\diam(CT_1)-1.$$
Hence, % (recall that $\diam(CT_1)=\diam(CT_0)-x+1$), 
$$\cost(f_0)=\cost(f_1)+x=2(\diam(CT_1)-1)+x < 2(\diam(CT_0)-1),$$
which contradicts our assumption on $\cost(f_0)$.

Therefore, there exists an optimal non-canonical
independent broadcast $f'_1$ on $CT_1$ with $\cost(f'_1)\ge\cost(f_1)$
satisfying the statement of the lemma.
If $\cost(f'_1)>\cost(f_1)$, the mapping $f'_0$ %independent broadcast $f'_0$ on $CT_0$
given by $f'_0(u)=f'_1(u)$ for every vertex $u\in V(CT_1)$
and $f'_0(u)=f_0(u)$ for every vertex $u\in V(CT_0)\setminus V(CT_1)$,
is a non-canonical independent broadcast $f'_0$ on $CT_0$ (since $x\ge 3$)
that contradicts the optimality of $f_0$.
%, except when $x=2$ (which implies $\lambda_0\ge 3$),
%$\lambda_1\le 2$, $\lambda_2=0$ and $f'_1(\ell_1^1)=3$.
%But in that case, the mapping $f''_0$ given by
%$f''_0(\ell_0^1)=f''_0(\ell_0^2)=f''_0(\ell_0^3)=1$, $f''_0(\ell_0^j)=0$ for every $j$, $3<j\le\lambda_0$,
%$f''_0(\ell_1^1)=2$ and $f''_0(u)=f'_1(u)$ for every vertex $u\in V(CT_1)\setminus\{\ell_1^1\}$,
%is a non-canonical independent broadcast on $CT_0$ with $\cost(f''_0)=\cost(f'_1)-1+(x+1)>\cost(f_0)$,
%thus contradicting the optimality of $f_0$.

Hence, $f_1$ is optimal and thus satisfies the statement of the lemma.
Let $\tilde{f_1}$ be the non-canonical independent broadcast satisfying items~1 to~5 of the lemma,
and let 
$$m=\max\big\{\tilde{f_1}(\ell_{x-1}^j),\ 1\le j\le\lambda_{x-1}\big\}.$$

We consider two cases, depending on whether $v_{x-2}$ is a stem or not.
Recall that $v_{x-2}\neq v_0$, since $x\ge 3$.

\begin{enumerate}
\item $\lambda_{x-2}>0$.\\
Let $f'_0$ be the non-canonical independent broadcast on $CT_0$ given
by $f'_0(\ell_0^1)=x-1$, $f'_0(\ell_{x-2}^1)=2$,
$f'_0(u)=0$ for every vertex $u\in V(CT_0)\setminus(V(CT_1)\cup\{\ell_0^1,\ell_{x-2}^1\})$,
and either $f'_0(u)=\tilde{f_1}(u)$ for every vertex $u\in V(CT_1)$, if $m\le 2$
(see Figure~\ref{fig:lemme-grappes}(b)),
or $f'_0(\ell_{x-1}^1)=2$ and 
$f'_0(u)=\tilde{f_1}(u)$ for every vertex $u\in V(CT_1)\setminus\{\ell_{x-1}^1\}$, if $m=3$
(see Figure~\ref{fig:lemme-grappes}(c)).
We then get $\cost(f'_0)=\cost(f_0)+1$ if $m\le 2$, contradicting the optimality of $f_0$,
or $\cost(f'_0)=\cost(f_0)$ if $m=3$, 
in which case either $f'_0$ satisfies items~1 to~5 of the lemma
or contradicts the minimality of $x$.

\item $\lambda_{x-2}=0$.\\
If $x=3$, then $\lambda_1=0$ which implies $x\ge 4$, a contradiction.
Hence, we have $x\ge 4$, and thus $v_{x-3}\neq v_0$.
Let $f'_0$ be the non-canonical independent broadcast on $CT_0$ given
by $f'_0(\ell_0^1)=x-2$, $f'_0(\ell_{x-3}^1)=2$,
$f'_0(u)=0$ for every vertex $u\in V(CT_0)\setminus(V(CT_1)\cup\{\ell_0^1,\ell_{x-3}^1\})$,
and $f'_0(u)=\tilde{f_1}(u)$ for every vertex $u\in V(CT_1)$ 
(see Figure~\ref{fig:lemme-grappes}(d)).
We then get $\cost(f'_0)=\cost(f_0)$, and thus
either $f'_0$ satisfies items~1 to~5 of the lemma
or contradicts the minimality of $x$.

\end{enumerate}

This concludes the proof. 
\end{proof}

%%%% figure lemme grappe
\begin{figure}
\begin{center}

\begin{tikzpicture}[scale=0.9,x=0.7cm,y=0.7cm]
\CTunplusL{0}{0}{$x$}
\CTzeroplus{2}{0}
\CTunplus{4}{0}{0}
\CTunplus{8}{0}
\node[below] at (0,1) {$v_0$};
\node[below] at (2,1) {$v_1$};
\node[below] at (4,1) {$v_{x-1}$};
\node[below] at (8,1) {$v_k$};
\draw[thick] (0,0) to (2,0);
\draw[thick,dotted] (2,0) to (8,0);
% cadre CT_1
\draw[thick,dashed] (3,1) to (3,-3);
\draw[thick,dashed] (9,1) to (9,-3);
\draw[thick,dashed] (3,1) to (9,1);
\draw[thick,dashed] (3,-3) to (9,-3);

\node at (4.5,-4) {(a) The sub-caterpillar $CT_1$};
\end{tikzpicture}

\mbox{}\\
%\vskip 0.5cm

\begin{tikzpicture}[scale=0.9,x=0.7cm,y=0.7cm]
\CTunplusL{0}{0}{$x$}
\CTunplusL{4}{0}{0}
\CTunplusL{6}{0}{$m$}
\node[below] at (0,1) {$v_0$};
\node[below] at (4,1) {$v_{x-2}$};
\node[below] at (6,1) {$v_{x-1}$};
\draw[thick] (4,0) to (6,0);
\draw[thick,dotted] (0,0) to (4,0);
\draw[thick,dotted] (6,0) to (7,0);
% cadre CT_1
\draw[thick,dashed] (5,1) to (5,-3);
\draw[thick,dashed] (5,1) to (7,1);
\draw[thick,dashed] (5,-3) to (7,-3);

\node at (9,-1) {$\longrightarrow$};

\CTunplusL{11}{0}{$x-1$}
\CTunplusL{15}{0}{2}
\CTunplusL{17}{0}{$m$}
\node[below] at (11,1) {$v_0$};
\node[below] at (15,1) {$v_{x-2}$};
\node[below] at (17,1) {$v_{x-1}$};
\draw[thick] (15,0) to (17,0);
\draw[thick,dotted] (11,0) to (15,0);
\draw[thick,dotted] (17,0) to (18,0);
% cadre CT_1
\draw[thick,dashed] (16,1) to (16,-3);
\draw[thick,dashed] (16,1) to (18,1);
\draw[thick,dashed] (16,-3) to (18,-3);

\node at (9,-4) {(b) $\lambda_{x-2}>0$ and $m\le 2$};
\end{tikzpicture}

\mbox{}\\
%\vskip 0.5cm

\begin{tikzpicture}[scale=0.9,x=0.7cm,y=0.7cm]
\CTunplusL{0}{0}{$x$}
\CTunplusL{4}{0}{0}
\CTunplusL{6}{0}{$3$}
\node[below] at (0,1) {$v_0$};
\node[below] at (4,1) {$v_{x-2}$};
\node[below] at (6,1) {$v_{x-1}$};
\draw[thick] (4,0) to (6,0);
\draw[thick,dotted] (0,0) to (4,0);
\draw[thick,dotted] (6,0) to (7,0);
% cadre CT_1
\draw[thick,dashed] (5,1) to (5,-3);
\draw[thick,dashed] (5,1) to (7,1);
\draw[thick,dashed] (5,-3) to (7,-3);

\node at (9,-1) {$\longrightarrow$};

\CTunplusL{11}{0}{$x-1$}
\CTunplusL{15}{0}{2}
\CTunplusL{17}{0}{$2$}
\node[below] at (11,1) {$v_0$};
\node[below] at (15,1) {$v_{x-2}$};
\node[below] at (17,1) {$v_{x-1}$};
\draw[thick] (15,0) to (17,0);
\draw[thick,dotted] (11,0) to (15,0);
\draw[thick,dotted] (17,0) to (18,0);
% cadre CT_1
\draw[thick,dashed] (16,1) to (16,-3);
\draw[thick,dashed] (16,1) to (18,1);
\draw[thick,dashed] (16,-3) to (18,-3);

\node at (9,-4) {(c) $\lambda_{x-2}>0$ and $m=3$};
\end{tikzpicture}

\mbox{}\\
%\vskip 0.5cm

\begin{tikzpicture}[scale=0.9,x=0.7cm,y=0.7cm]
\CTunplusL{0}{0}{$x$}
\CTunplusL{4}{0}{0}
\CTzero{6}{0}
\CTunplusL{8}{0}{$m$}
\node[below] at (0,1) {$v_0$};
\node[below] at (4,1) {$v_{x-3}$};
\node[below] at (6,1) {$v_{x-2}$};
\node[below] at (8,1) {$v_{x-1}$};
\draw[thick] (4,0) to (8,0);
\draw[thick,dotted] (0,0) to (4,0);
\draw[thick,dotted] (8,0) to (9,0);
% cadre CT_1
\draw[thick,dashed] (7,1) to (7,-3);
\draw[thick,dashed] (7,1) to (9,1);
\draw[thick,dashed] (7,-3) to (9,-3);

\node at (11,-1) {$\longrightarrow$};

\CTunplusL{13}{0}{$x-2$}
\CTunplusL{17}{0}{2}
\CTzero{19}{0}
\CTunplusL{21}{0}{$m$}
\node[below] at (13,1) {$v_0$};
\node[below] at (17,1) {$v_{x-3}$};
\node[below] at (19,1) {$v_{x-2}$};
\node[below] at (21,1) {$v_{x-1}$};
\draw[thick] (17,0) to (21,0);
\draw[thick,dotted] (13,0) to (17,0);
\draw[thick,dotted] (21,0) to (22,0);
% cadre CT_1
\draw[thick,dashed] (20,1) to (20,-3);
\draw[thick,dashed] (20,1) to (22,1);
\draw[thick,dashed] (20,-3) to (22,-3);

\node at (11,-4) {(d) $\lambda_{x-2}=0$};
\end{tikzpicture}

\caption{\label{fig:lemme-grappes}Configurations for the proof of Lemma~\ref{lem:end-leaf-broadcast}}
\end{center}
\end{figure}

%%2016-1

We now consider the internal stems of a caterpillar.
Recall that, by Lemma~\ref{lem:stem-0},
$\tilde{f}(v_i)=0$ for every internal stem $v_i$ of $CT$, $1\le i\le k-1$.
The next lemma shows that if $f$ is an optimal non-canonical independent broadcast
on a caterpillar $CT$ with no pair of adjacent trunks, with $\cost(f)>2(\diam(CT)-1)$,
then there exists an optimal non-canonical independent broadcast $\tilde{f}$ on $CT$
such that $\tilde{f}^*(v_i)-\tilde{f}(v_i)=\tilde{f}^*(v_i)>0$ for every internal stem $v_i$ of $CT$, $1\le i\le k-1$.

\begin{lemma}
Let $CT=CT(\lambda_0,\dots,\lambda_k)$ be a caterpillar of length $k\ge 1$,
with no pair of adjacent trunks.
If $f$ is an optimal non-canonical independent broadcast on $CT$ 
with $\cost(f)>2(\diam(CT)-1)$, then there exists an optimal non-canonical 
independent broadcast $\tilde{f}$ on $CT$, 
thus with $\cost(\tilde{f})=\cost(f)$,
such that:
\begin{enumerate}
\item $\tilde{f}$ satisfies the five items of Lemma~\ref{lem:end-leaf-broadcast},
\item for every $i$, $1\le i\le k-1$, if $\lambda_i>0$,
then $\tilde{f}^*(v_i)>0$.
\end{enumerate} 
\label{lem:internal-stems}
\end{lemma}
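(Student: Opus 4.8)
The plan is to start from the broadcast $\tilde f$ supplied by Lemma~\ref{lem:end-leaf-broadcast}, so that item~1 is automatic and, by Lemma~\ref{lem:stem-0}, $\tilde f(u)=0$ for every stem $u$. Among all optimal non-canonical independent broadcasts on $CT$ that satisfy item~1 of Lemma~\ref{lem:end-leaf-broadcast}, I would pick one, still called $\tilde f$, for which the number $N(\tilde f)$ of \emph{empty internal stems} (internal stems $v_i$, $1\le i\le k-1$, with $\tilde f^*(v_i)=0$) is as small as possible, and then show $N(\tilde f)=0$. Assume not and let $v_i$ be an empty internal stem; by Lemma~\ref{lem:stem-0} we also have $\tilde f(\ell_i^j)=0$ for all $j$, $1\le j\le\lambda_i$. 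Since $k\ge 1$ we have $\diam(CT)=k+2\ge 3$, so $\cost(\tilde f)>2(\diam(CT)-1)\ge\diam(CT)$ forces $|V_{\tilde f}^{+}|\ge 2$; hence Theorem~\ref{th:maximality} applies: $\tilde f$ is dominating and $\tilde f(v)=\min\{d_{CT}(v,u):u\in V_{\tilde f}^{+}\setminus\{v\}\}-1$ for every $v\in V_{\tilde f}^{+}$.

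Next I would pin down the broadcast vertices flanking $v_i$. By Lemma~\ref{lem:end-leaves}, some pendent neighbour of $v_0$ and some pendent neighbour of $v_k$ lie in $V_{\tilde f}^{+}$, and since $1\le i\le k-1$ there is a broadcast vertex whose shortest path to $v_i$ passes through $v_{i-1}$ and one whose shortest path passes through $v_{i+1}$; let $w_L$ and $w_R$ be two such vertices at minimum distance from $v_i$. Then no broadcast vertex lies strictly between $w_L$ and $w_R$, so every spine vertex strictly between them is empty; moreover, by the maximality condition above, $\tilde f(w_L)$ and $\tilde f(w_R)$ are exactly one less than the distance from $w_L$, resp.\ $w_R$, to its nearest broadcast vertex. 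As $v_i$ is dominated, we may assume by left--right symmetry that $d_{CT}(w_L,v_i)\le\tilde f(w_L)$, and, since $\ell_i^1$ is dominated with $d_{CT}(w,\ell_i^1)=d_{CT}(w,v_i)+1$ for $w\in\{w_L,w_R\}$, that $d_{CT}(w_L,v_i)\le\tilde f(w_L)-1$ or $d_{CT}(w_R,v_i)\le\tilde f(w_R)-1$.

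The core of the argument is then a local exchange around $v_i$. I would turn $\ell_i^1$ into a broadcast vertex with a suitably chosen positive value, lower $\tilde f(w_L)$ --- and $\tilde f(w_R)$ if the right-hand alternative holds as well --- to the largest value still compatible with independence at $\ell_i^1$ (this value is $d_{CT}(w_L,v_i)\ge 1$, hence positive, so that when $w_L$ is a pendent neighbour $\ell_a^1$ the stem $v_a$ does not become empty), and, whenever this lowering leaves a vertex $z$ dominated only by $w_L$ (resp.\ $w_R$) uncovered, re-absorb $z$ either by raising the new value of $\ell_i^1$ or by switching on a pendent neighbour or a trunk in the empty block between $w_L$ and $v_i$ (resp.\ $v_i$ and $w_R$). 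The hypothesis that $CT$ has no pair of adjacent trunks is used precisely here: the empty block around $v_i$ alternates stems and trunks, which guarantees that a re-absorption site at the required distance always exists and that the chosen values fit together; the hypothesis $\cost(\tilde f)>2(\diam(CT)-1)$ is what guarantees that this block is rich enough for the exchange to be cost non-decreasing (otherwise an empty internal stem may indeed be unavoidable, as already happens for canonical broadcasts on caterpillars such as $CT(1,0,1,0,1)$, whose optimal cost is exactly $2(\diam-1)$). One then checks that the resulting $\tilde f'$ is an independent broadcast, is non-canonical (it has at least three broadcast vertices), still satisfies item~1 of Lemma~\ref{lem:end-leaf-broadcast} (the surgery touches only interior vertices, or leaves $w_L=\ell_0^1$ with value at least $2$), and has $\cost(\tilde f')\ge\cost(\tilde f)$. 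Optimality of $\tilde f$ then gives $\cost(\tilde f')=\cost(\tilde f)$, while $v_i$ has left the set of empty internal stems and no new such stem has been created, so $N(\tilde f')<N(\tilde f)$, contradicting the choice of $\tilde f$ and proving the lemma.

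The step I expect to be the main obstacle is the exchange itself: spelling out, for every configuration of $\lambda_{i-1},\lambda_i,\lambda_{i+1}$, of the trunk/stem status of $v_{i-1}$ and $v_{i+1}$, and of whether $w_L,w_R$ are leaves or trunks and how far they are, the exact broadcast values that keep independence, domination, non-canonicity, item~1 of Lemma~\ref{lem:end-leaf-broadcast}, and the cost all under control at once. I would organise this as a short case analysis driven by which of $w_L,w_R$ dominates $\ell_i^1$ and by the trunk/stem status of $v_{i\pm1}$, invoking the no-adjacent-trunks hypothesis to keep the number of cases down; the surgical moves and drawing conventions of Lemma~\ref{lem:end-leaf-broadcast} and its figures should transfer almost verbatim. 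Alternatively, the lemma can probably be obtained by a subgraph-minimal-counterexample argument in the style of Lemma~\ref{lem:end-leaf-broadcast}: split a minimal bad caterpillar at an empty internal stem $v_i$, apply the lemma to the two smaller pieces, and reassemble with a bounded amount of local surgery near $v_i$.
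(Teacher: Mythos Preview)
Your overall plan---start from the $\tilde f$ of Lemma~\ref{lem:end-leaf-broadcast}, pick an extremal such $\tilde f$, and then perform a local exchange that activates a pendent neighbour of an empty internal stem---is exactly the paper's strategy. Two organisational choices in the paper, however, differ from yours and are what make the case analysis you anticipate as ``the main obstacle'' actually short.

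First, the paper's extremal choice is not ``minimise the number of empty internal stems'' but ``maximise the number of pendent vertices in $V^+_{\tilde f}$''. The contradiction then comes either from strictly increasing the cost (when some vertex at the boundary of the ball of $y$ was dominated only by $y$), or, when the cost stays equal, from having added a pendent broadcast vertex. This avoids having to argue that no new empty stem is created by the surgery. Second, and more importantly, the paper does \emph{not} take an arbitrary empty internal stem: it takes the \emph{leftmost} one. Consequently every stem $v_j$ with $j<i$ already satisfies $\tilde f^*(v_j)>0$, which forces the left dominator of $\ell_i^1$ (your $w_L$) to be a pendent neighbour of $v_{i-1}$, or of $v_{i-2}$ when $\lambda_{i-1}=0$, with value at most~$3$. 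This pins $w_L$ down completely and is the reason the paper can write closed-form definitions of the modified broadcast $g$ in each of a small number of cases (position of $i$ near an end or not; dominator to the left only, to the right only, or on both sides). Your setup leaves $w_L$ unconstrained and therefore forces the broader configuration analysis you foresee.

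Two small corrections. Your appeal to Lemma~\ref{lem:stem-0} to get $\tilde f(\ell_i^j)=0$ is misplaced: that lemma concerns stems, and what you need is simply the definition of $\tilde f^*(v_i)=0$. More substantively, the hypothesis $\cost(\tilde f)>2(\diam(CT)-1)$ is not what makes the exchange cost non-decreasing; in the paper the local surgery always yields $\cost(g)\ge\cost(\tilde f)$ by construction, and the cost hypothesis is used only to invoke Lemma~\ref{lem:end-leaf-broadcast}. If you adopt the leftmost-empty-stem trick and the paper's extremal choice, your sketch becomes essentially the paper's proof and the case analysis collapses to the three cases treated there.
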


\begin{proof}
We know by Lemma~\ref{lem:end-leaf-broadcast} that there exists 
an optimal non-canonical independent broadcast $\tilde{f}$ on $CT$,
with $\cost(\tilde{f})=\cost(f)$, 
satisfying the five items of Lemma~\ref{lem:end-leaf-broadcast}.
Moreover, one suppose that $\tilde{f}$ has been chosen in such a way
%that $|V^+_{\tilde{f}}|$ is maximal and, among all such possible choices,
that $V^+_{\tilde{f}}$ contains the largest possible number of pendent vertices.

Suppose to the contrary that there exists a vertex $v_i$, $1\le i\le k-1$,
with $\lambda_i>0$ and $\tilde{f}^*(v_i)=0$, and that for every $j<i$,
$\tilde{f}^*(v_j)>0$ whenever $\lambda_j>0$.
We consider three cases. %, according to the value of $i$ and $\lambda_{i-1}$.

\begin{enumerate}
\item $i=1$ or $i=k-1$.\\
By symmetry, it suffices to consider the case $i=1$.
By Lemma~\ref{lem:end-leaf-broadcast}, we know that $\tilde{f}(\ell_0^j)\le 2$
for every $j$, $1\le j\le\lambda_0$.
Therefore, no pendent neighbour of $v_1$ is $\tilde{f}$-dominated by a pendent neighbour
of $v_0$. Let $y$ be the vertex of $CT$ that $\tilde{f}$-dominates the pendent neighbours of $v_1$
(note that $y$ is necessarily unique),
and $g$ be the mapping defined as follows.
For every vertex $u$ of $CT$, let
$$g(u)=\left\lbrace\begin{array}{ll}
\tilde{f}(y)-1 & \mbox{if $u=y$,}\\
1 & \mbox{if $u=\ell^1_1$,}\\
1 & \mbox{if $u\neq\ell^1_1$, $u$ is $\tilde{f}$-dominated only by $y$ and $d_{CT}(u,y)=\tilde{f}(y)$,}\\
\tilde{f}(u) & \mbox{otherwise.}\\
\end{array}\right.
$$
We claim that the mapping $g$ is  a non-canonical independent broadcast on $CT$
with $\cost(g)\ge\cost(\tilde{f})$.
Indeed, all vertices $x$ with $d_{CT}(x,y)<\tilde{f}(y)$ that were $\tilde{f}$-dominated by $y$
are still $g$-dominated by $y$,
and all vertices $x'\neq\ell^1_1$ with $d_{CT}(x',y)=\tilde{f}(y)$ that were $\tilde{f}$-dominated only by $y$
are now $g$-broadcast vertices with $g(x')=1$
(note that since every such $x'$ was $\tilde{f}$-dominated only by $y$,
we have $g(z)=\tilde{f}(z)=0$ for every neighbour $z$ of $x'$).

Now, if there exists a vertex $z$ which is $\tilde{f}$-dominated only by $y$,
we get $\cost(g)\ge\cost(\tilde{f})+1$, contradicting the optimality of $\tilde{f}$.
If no such vertex exists, we get $\cost(g)=\cost(\tilde{f})$ and 
$V^+_g$ contains more pendent vertices than $V^+_{\tilde{f}}$,
contrary to our assumption.

%$|V^+_g|>|V^+_{\tilde{f}}|$, contradicting our
%assumption on the maximality of $|V^+_{\tilde{f}}|$.

\item $i=2$ and $\lambda_1=0$, or $i=k-2$ and $\lambda_{k-1}=0$.\\
By symmetry, it suffices to consider the case $i=2$.
By Lemma~\ref{lem:end-leaf-broadcast}, we know that $\tilde{f}(\ell_0^j)\le 3$
for every $j$, $1\le j\le\lambda_0$.
Therefore, no pendent neighbour of $v_2$ is $\tilde{f}$-dominated by a pendent neighbour
of $v_0$. Let $y$ be the (unique) vertex of $CT$ that $\tilde{f}$-dominates the pendent neighbours of $v_2$
(note that we necessarily have $\tilde{f}(y)\ge 2$).

If $y=v_3$ and $\tilde{f}(v_3)=3$ (since $\tilde{f}^*(v_0)>0$, we necessarily
have $\tilde{f}(v_3)\le 3$), 
we define the mapping $g$ as follows.
For every vertex $u$ of $CT$, let
$$g(u)=\left\lbrace\begin{array}{ll}
0 & \mbox{if $u=v_3$,}\\
3 & \mbox{if $u=\ell^1_2$,}\\
1 & \mbox{if $u\neq\ell^1_2$, $u$ is $\tilde{f}$-dominated only by $v_3$ and $d_{CT}(u,y)=2$,}\\
\tilde{f}(u) & \mbox{otherwise.}\\
\end{array}\right.
$$

Otherwise (including the case $y=v_3$ and $\tilde{f}(v_3)=2$), the mapping $g$ is defined by
$$g(u)=\left\lbrace\begin{array}{ll}
\tilde{f}(y)-2 & \mbox{if $u=y$,}\\
2 & \mbox{if $u=\ell^1_2$,}\\
1 & \mbox{if $u\neq\ell^1_2$, $u$ is $\tilde{f}$-dominated only by $y$ and $d_{CT}(u,y)=\tilde{f}(y)-1$,}\\
\tilde{f}(u) & \mbox{otherwise,}\\
\end{array}\right.
$$
for every vertex $u$ of $CT$.

In both cases, the mapping $g$ is again a non-canonical independent broadcast on $CT$
with $\cost(g)\ge\cost(\tilde{f})$.
Indeed, all vertices $x$ with $d_{CT}(x,y)<\tilde{f}(y)-1$ that were $\tilde{f}$-dominated by $y$
are $g$-dominated by $\ell^2_1$ (if $y=v_3$)
or still $g$-dominated by $y$ (if $y\neq v_3$),
and all vertices $x'\neq\ell^1_2$ with $\tilde{f}(y)-1\le d_{CT}(x',y)\le \tilde{f}(y)$ that were $\tilde{f}$-dominated only by $y$
are now either $g$-broadcast vertices (if $d_{CT}(x',y)=\tilde{f}(y)-1$) or $g$-dominated
by a vertex $x''$ with $d_{CT}(x'',y)=\tilde{f}(y)-1$ and $g(x'')=1$.

We then get a contradiction as in Case~1.

\item $2<i<k-2$, or $i=2$ and $\lambda_1>0$, or $i=k-2$ and $\lambda_{k-1}>0$.\\
In this case, we have $\tilde{f}^*(v_j)>0$ for every vertex $v_j$ with $j<i$ and $\lambda_j>0$.
Note also that we have at least two such vertices $v_j$ with $j<i$ and $\lambda_j>0$.

By symmetry, it suffices to consider the cases $2<i<k-2$, and $i=2$ (with $\lambda_1>0$).
We consider three subcases.

\begin{enumerate}
\item 
Suppose first that the pendent neighbours of $v_i$ are $\tilde{f}$-dominated
only by a vertex $y=v_{j_0}$ or $y=\ell_{j_0}^{k_0}$ with $j_0<i$ and $1\le {k_0}\le\lambda_{j_0}$.
Observe that the pendent neighbours of $v_i$ cannot be $\tilde{f}$-dominated
by two such vertices, say $y$ and $y'$, since we would have $d_{CT}(y,y')<d_{CT}(y,\ell_i^1)$
so that $\tilde{f}$ would not be independent.
Since $\tilde{f}^*(v_j)>0$ for every $j<i$ such that $\lambda_j>0$, 
we necessarily have, by Lemma~\ref{lem:stem-0}, either 
$y$ is a pendent neighbour of $v_{i-1}$, if $\lambda_{i-1}>1$,
or a pendent neighbour of $v_{i-2}$, if $\lambda_{i-1}=0$.
%$y=\ell_{i-1}^{k_1}$, $1\le k_1\le\lambda_{i-1}$, if $\lambda_{i-1}>1$,
%or $y=\ell_{i-2}^{k_2}$, $1\le k_2\le\lambda_{i-2}$, if $\lambda_{i-1}=0$.
Moreover, since $\tilde{f}^*(v_j)>0$ for every $j<i$ such that $\lambda_j>0$,
and since we have at least two such vertices,
we necessarily have $\tilde{f}(y)\le 3$.
This implies in particular $\lambda_{i-1}>0$, as otherwise we would have 
$\tilde{f}(y)\le 3$ and $d_{CT}(y,\ell_i^1)=4$, contradicting the fact that
$y$ $\tilde{f}$-dominates $\ell_i^1$, and thus $y$ is a pendent neighbour of $v_{i-1}$.

Let now $g$ be the mapping defined as follows.
For every vertex $u$ of $CT$, let
$$g(u)=\left\lbrace\begin{array}{ll}
\tilde{f}(y)-1 & \mbox{if $u=y$,}\\
1 & \mbox{if $u=\ell_i^1$,}\\
1 & \mbox{if $u\neq\ell_i^1$, $u$ is $\tilde{f}$-dominated only by $y$ and $d_{CT}(u,y)=\tilde{f}(y)$,}\\
\tilde{f}(u) & \mbox{otherwise.}\\
\end{array}\right.
$$

Again, the mapping $g$ is a non-canonical independent broadcast on $CT$
with $\cost(g)\ge\cost(\tilde{f})$.
Indeed, all vertices $x$ with $d_{CT}(x,y)<\tilde{f}(y)$ that were $\tilde{f}$-dominated by $y$
are still $g$-dominated either by $y$,
and all vertices $x'\neq\ell_i^1$ with $d_{CT}(x',y)=\tilde{f}(y)$ that were $\tilde{f}$-dominated only by $y$
are now $g$-broadcast vertices.

We then get a contradiction as in Cases 1 and~2.

\item Suppose now that the pendent neighbours of $v_i$ are $\tilde{f}$-dominated
only by a vertex $y=v_{j_0}$ (with $\lambda_{j_0}=0$)
 or $y=\ell_{j_0}^{k_0}$ ($1\le {k_0}\le\lambda_{j_0}$), with $j_0>i$.
Observe that, using the same argument as in Case~(a), such a vertex $y$ must be unique.
%
%Since $f^*(v_j)>0$ for every $j<i$ such that $\lambda_j>0$, 
%we necessarily have, by Lemma~\ref{lem:stem-0}, either 
%$y=v_{i+1}$ and $\lambda_{i+1}=0$, if $\lambda_{i-1}>0$
%(otherwise the pendent neighbours of $v_i$ would also be $f$-dominated
%by some pendent neighbour of $v_{i-1}$),
%or $y$ is a neighbour of $v_{i+1}$ (either $v_{i+2}$ or a pendent neighbour), if $\lambda_{i+1}=0$
%(which implies $\lambda_{i-1}=0$, for the same reason).
%
Moreover, we necessarily have $\tilde{f}(y)\ge 2$.

If $\lambda_{i-1}=0$, we consider two cases, as we did in Case 2.
If $y=v_{i+1}$ and $\tilde{f}(v_{i+1})=3$, 
we define the mapping $g$ by
$$g(u)=\left\lbrace\begin{array}{ll}
0 & \mbox{if $u=v_{i+1}$,}\\
3 & \mbox{if $u=\ell_i^1$,}\\
1 & \mbox{if $u\neq\ell_i^1$, $u$ is $\tilde{f}$-dominated only by $y$ and $d_{CT}(u,y)=2$,}\\
\tilde{f}(u) & \mbox{otherwise,}\\
\end{array}\right.
$$
for every vertex $u$ of $CT$. Otherwise, the mapping $g$ is defined by
$$g(u)=\left\lbrace\begin{array}{ll}
\tilde{f}(y)-2 & \mbox{if $u=y$,}\\
2 & \mbox{if $u=\ell_i^1$,}\\
1 & \mbox{if $u\neq\ell_i^1$, $u$ is $\tilde{f}$-dominated only by $y$ and $d_{CT}(u,y)=\tilde{f}(y)-1$,}\\
\tilde{f}(u) & \mbox{otherwise,}\\
\end{array}\right.
$$
for every vertex $u$ of $CT$. 

Otherwise, that is, $\lambda_{i-1}>0$,
we define the mapping $g$ as follows.
For every vertex $u$ of $CT$, let
$$g(u)=\left\lbrace\begin{array}{ll}
\tilde{f}(y)-1 & \mbox{if $u=y$,}\\
1 & \mbox{if $u=\ell_i^1$,}\\
1 & \mbox{if $u\neq\ell_i^1$, $u$ is $\tilde{f}$-dominated only by $y$ and $d_{CT}(u,y)=\tilde{f}(y)$,}\\
\tilde{f}(u) & \mbox{otherwise.}\\
\end{array}\right.
$$

Again, using similar arguments,
in each case the above-defined mapping is a non-canonical independent broadcast on $CT$
with $\cost(g)\ge\cost(\tilde{f})$ and the contradiction arises as in Cases 1 and~2.

\item Suppose finally that the pendent neighbours of $v_i$ are $\tilde{f}$-dominated
both by a vertex $y_1=v_{j_1}$ or $y_1=\ell_{j_1}^{k_1}$ with $j_1<i$ and $1\le k_1\le\lambda_{j_1}$,
and
by a vertex $y_2=v_{j_2}$ or $y_2=\ell_{j_2}^{k_2}$ with $j_2>i$ and $1\le k_2\le\lambda_{j_2}$
(again, both $y_1$ and $y_2$ must be unique).
In that case, as discussed in Case~(a) above, we necessarily have $\lambda_{i-1}>0$.
Moreover, we necessarily have $\tilde{f}(y_1)= 3$ and $\tilde{f}(y_2)\ge 2$.

Let now $g$ be the mapping defined as follows.
For every vertex $u$ of $CT$, let
$$g(u)=\left\lbrace\begin{array}{ll}
\tilde{f}(y_1)-1 & \mbox{if $u=y_1$,}\\
\tilde{f}(y_2)-1 & \mbox{if $u=y_2$,}\\
2 & \mbox{if $u=\ell_i^1$,}\\
1 & \mbox{if $u\neq\ell_i^1$, $u$ is $\tilde{f}$-dominated only by $y_2$ and $d_{CT}(u,y_2)=\tilde{f}(y_2)$,}\\
\tilde{f}(u) & \mbox{otherwise.}\\
\end{array}\right.
$$

Note here that no vertex at distance $\tilde{f}(y_1)$ from $y_1$ can be $\tilde{f}$-dominated only by $y_1$.
Indeed, suppose that such a vertex, say $w$, exists.
Clearly, $w$ cannot be ``to the left of $v_i$''
since this would imply  $w=v_{i-3}$ and $\lambda_{i-2}=0$, but
in that case $w$ is also $\tilde{f}$-dominated by at least one of its pendent neighbours.
On the other hand, $w$ cannot be ``to the right of $v_i$''
since in that case $w$ would also be $\tilde{f}$-dominated by $y_2$.

Again, using similar arguments,
the above-defined mapping is a non-canonical independent broadcast on $CT$
with $\cost(g)\ge\cost(\tilde{f})$ and the contradiction arises as in Cases 1 and~2.

\end{enumerate}

\end{enumerate}

We thus get a contradiction in each case. This completes the proof.
\end{proof}

%%%%%%%%%%%%%%%%%%%%%%%%

%%2016-2

Our aim now is to prove that 
if $f$ is an optimal non-canonical independent broadcast
on a caterpillar $CT$ with no pair of adjacent trunks, with $\cost(f)>2(\diam(CT)-1)$,
then $\cost(f)=\cost(\beta^*)$ (Lemma~\ref{lem:beta-star-is-the-best} below).
We first prove that for every such broadcast $f$, 
$f(v_i)\le 1$ for every trunk $v_i$. 
This easily follows from 
Lemma~\ref{lem:internal-stems}.

\begin{lemma}
Let $CT=CT(\lambda_0,\dots,\lambda_k)$ be a caterpillar of length $k\ge 1$,
with no pair of adjacent trunks.
If $f$ is an optimal non-canonical independent broadcast on $CT$ 
with $\cost(f)>2(\diam(CT)-1)$, then there exists an optimal non-canonical 
independent broadcast $\tilde{f}$ on $CT$, 
thus with $\cost(\tilde{f})=\cost(f)$,
such that:
\begin{enumerate}
\item $\tilde{f}$ satisfies the two items of Lemma~\ref{lem:internal-stems},
\item for every $i$, $1\le i\le k-1$, if $\lambda_i=0$,
then $\tilde{f}^*(v_i)\le 1$.
\end{enumerate} 
\label{lem:internal-trunks}
\end{lemma}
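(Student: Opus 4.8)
The plan is to observe that the broadcast $\tilde{f}$ already produced by Lemma~\ref{lem:internal-stems} satisfies the new condition automatically, so that no further modification of the broadcast is required. Concretely, I would take $\tilde{f}$ to be any optimal non-canonical independent broadcast satisfying the two items of Lemma~\ref{lem:internal-stems} (such an $\tilde{f}$ exists by that lemma, with $\cost(\tilde{f})=\cost(f)$), and then show that $\tilde{f}^*(v_i)=\tilde{f}(v_i)\le 1$ for every trunk $v_i$, $1\le i\le k-1$, is forced by independence, so that item~2 of the present lemma holds, while item~1 is inherited from Lemma~\ref{lem:internal-stems}.

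First I would record a structural fact about trunks. Let $v_i$, $1\le i\le k-1$, be a trunk, so $\lambda_i=0$; since $CT$ has no pair of adjacent trunks, $v_{i-1}$ is not a trunk. As $v_0$ is a stem by definition ($\lambda_0\ge 1$) and every internal non-trunk spine vertex is a stem, we get $\lambda_{i-1}\ge 1$ in all cases. If $i\ge 2$, then item~2 of Lemma~\ref{lem:internal-stems} gives $\tilde{f}^*(v_{i-1})>0$; if $i=1$, then item~1 of Lemma~\ref{lem:internal-stems}, that is, the fact that $\tilde{f}$ satisfies items~1--5 of Lemma~\ref{lem:end-leaf-broadcast}, gives $\tilde{f}(\ell_0^1)\ge 1$ and hence $\tilde{f}^*(v_0)>0$. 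In either case, since $\tilde{f}(v_{i-1})=0$ by Lemma~\ref{lem:stem-0}, there is a pendent neighbour $\ell$ of $v_{i-1}$ with $\tilde{f}(\ell)>0$; in particular $\ell$ is an $\tilde{f}$-broadcast vertex at distance $d_{CT}(v_i,\ell)=2$ from $v_i$.

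I would then finish by contradiction. Suppose $\tilde{f}(v_i)\ge 2$ for some trunk $v_i$, $1\le i\le k-1$. Then $v_i$ is an $\tilde{f}$-broadcast vertex and $d_{CT}(v_i,\ell)=2\le\tilde{f}(v_i)$, so $v_i$ $\tilde{f}$-dominates $\ell$. Since $\ell\neq v_i$ and $\ell$ is itself an $\tilde{f}$-broadcast vertex, $\ell$ is $\tilde{f}$-dominated by a vertex other than itself, contradicting the independence of $\tilde{f}$. Hence $\tilde{f}(v_i)\le 1$, and since $\lambda_i=0$ we have $\tilde{f}^*(v_i)=\tilde{f}(v_i)\le 1$, which is item~2; item~1 holds because $\tilde{f}$ was chosen to satisfy the two items of Lemma~\ref{lem:internal-stems}.

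There is essentially no real obstacle here: the statement is a direct consequence of Lemma~\ref{lem:internal-stems} together with the definition of an independent broadcast. The only points requiring a little care are the boundary case $i=1$, where one appeals to Lemma~\ref{lem:end-leaf-broadcast} rather than to item~2 of Lemma~\ref{lem:internal-stems}, and the elementary observation that the absence of a pair of adjacent trunks, combined with $\lambda_0,\lambda_k\ge 1$, guarantees that every trunk has a positively weighted pendent vertex at distance exactly~$2$.
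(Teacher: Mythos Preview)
Your proof is correct and follows essentially the same approach as the paper: take the $\tilde{f}$ supplied by Lemma~\ref{lem:internal-stems}, use the fact that every trunk has an adjacent stem with $\tilde{f}^*>0$, and conclude from independence that $\tilde{f}(v_i)\le 1$. The paper's proof is simply a terser version of what you wrote, compressing your case analysis ($i=1$ versus $i\ge 2$) into the single remark that $\tilde{f}^*(v_j)>0$ for every stem $v_j$.
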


\begin{proof}
We know by Lemma~\ref{lem:internal-stems} that
there exists an optimal non-canonical independent broadcast $\tilde{f}$ on $CT$ 
satisfying the two items of Lemma~\ref{lem:internal-stems},
so that, in particular, $\tilde{f}^*(v_j)>0$ for every stem $v_j$, $0\le j\le k$.
Since $CT$ has no pair of adjacent trunks, and $\tilde{f}$ is independent,
we thus necessarily have $\tilde{f}^*(v_i)\le 1$ for every trunk $v_i$, $1\le i\le k-1$.
\end{proof}

%%%%%%%%%%%%%%%%%%

%%2016-3

Finally, the next lemma will show that the cost of any optimal non-canonical independent broadcast on
a caterpillar $CT$ of length $k\ge 1$ with no pair of adjacent trunks cannot exceed the value $\beta^*(CT)$.

We first introduce more notation.
Let $CT$ be a caterpillar of length $k\ge 1$, with no pair of adjacent trunks.
We denote by $\sigma$ a sequence of $\ell$ consecutive spine vertices in $CT$,
that is, $\sigma=v_i\dots v_{i+\ell-1}$, with $\ell\le k+1$ and $0\le i\le k-\ell+1$.
For such a given sequence $\sigma=v_i\dots v_{i+\ell-1}$, we denote by $t_\sigma$ the number of
trunks in $\sigma$, that is,
$$t_\sigma=\left|\{v_j\ |\ i\le j\le i+\ell-1\ \mbox{and}\ \lambda_j=0\}\right|.$$
If $f$ is an independent broadcast on $CT$, we then denote by $f^*(\sigma)$
the {\it weight of $\sigma$}, that is, 
$$f^*(\sigma)=\sum_{0\le j\le\ell-1}f^*(v_{i+j}).$$

%Suppose that $v_i$ and $v_{i+1}$ are two vertices in $CT$
%with $\lambda{i+1}=0$ and $\lambda_i\le 2$, and that
%$f$ is an optimal non-canonical independent broadcast on $CT$ with $\cost(f)>2(\diam(CT)-1)$.
%By Lemma~\ref{lem:internal-trunks}, we know that $f(v_{i+1}\le 1$.
%If $f(v_{i+1}=1$, we then have $f(\ell_i^j)\le 1$ for every pendent neighbour $\ell_i^j$
%of $v_i$, and thus $f^*(v_i)\le\lambda_i$.
%On the other hand, if $f(v_{i+1}=0$, we have $f^*(v_i)\le\lambda_i$

\begin{lemma}
Let $CT=CT(\lambda_0,\dots,\lambda_k)$ be a caterpillar of length $k\ge 1$, with no pair of adjacent trunks,
and $f$ be an optimal non-canonical independent broadcast on $CT$ with $\cost(f)>2(\diam(CT)-1)$.
Then there exists an optimal non-canonical independent broadcast $\tilde{f}$ on $CT$,
thus with $\cost(\tilde{f})=\cost(f)$, such that:
\begin{enumerate}
\item $\tilde{f}$ satisfies the two items of Lemma~\ref{lem:internal-trunks}.
%\item If $\lambda_0\le 2$ and $\lambda_1>0$, then $\tilde{f}^*(v_0)\le 2$;
%If $\lambda_k\le 2$ and $\lambda_{k-1}>0$, then $\tilde{f}^*(v_k)\le 2$;
%If $\lambda_0\le 2$ and $\lambda_1=0$, then $\tilde{f}^*(v_0)\le 3$;
%If $\lambda_k\le 2$ and $\lambda_{k-1}=0$, then $\tilde{f}^*(v_k)\le 3$.
\item For every $i$, $0\le i\le k$, if $\lambda_i\ge 3$, then $\tilde{f}^*(v_i)\le\lambda_i$.
\item If $v_av_{a+1}$, $0\le a<k$, is an occurrence of the pattern $1^+2^-$ 
(resp. of the pattern $2^-1^+$), 
then $\tilde{f}^*(v_{a+1})\le 2$ (resp. $\tilde{f}^*(v_{a})\le 2$).
\item If $v_a\sigma v_b$ is an occurrence of the pattern $1^+2^-(02^-)^{+r}1^+$,
then $\tilde{f}^*(\sigma)\le 3t_\sigma +2$ 
if $v_a\sigma v_b$ is an occurrence of the pattern $1^+2(02)^{+r}1^+$,
and $\tilde{f}^*(\sigma)\le 3t_\sigma +1$ otherwise.
\item If $\sigma$ is an occurrence of the pattern $02^-(02^-)^{*r}0$,
then $\tilde{f}^*(\sigma)\le 3t_\sigma -2$ 
if $v_a\sigma v_b$ is an occurrence of the pattern $02(02)^{*r}0$,
and $\tilde{f}^*(\sigma)\le 3t_\sigma -3$ otherwise.
\item If $\sigma$ is an occurrence of the pattern $[2^-(02^-)^{*r}0$
or of the pattern $02^-(02^-)^{*r}]$,
then $\tilde{f}^*(\sigma)\le 3t_\sigma$.
%\item If $\sigma$ is an occurrence of the pattern $02^-(02^-)^{*r}]$,
%then $\tilde{f}^*(\sigma)\le 3t_\sigma -1$.
%\item $CT$ cannot be an occurrence of the pattern $[2^-(02^-)^{*r}]$.
%If $\sigma$ is an occurrence of the pattern $[2^-(02^-)^{*r}]$,
%then $\tilde{f}^*(\sigma)\le 3(t_\sigma +1)$.
\end{enumerate}
\label{lem:not-greater-than-beta-star}
\end{lemma}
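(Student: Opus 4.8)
The plan is to take the broadcast $\tilde f$ furnished by Lemma~\ref{lem:internal-trunks} — which already provides item~1 — and show that after at most a few further cost-preserving local modifications it can be assumed to satisfy items~2--6 as well. In fact items~3--6 will turn out to be automatic consequences of item~1 together with independence, so the only genuine surgery is needed for item~2.

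For item~2, suppose $\lambda_i\ge 3$ and $\tilde f^*(v_i)>\lambda_i$. Since $\tilde f(v_i)=0$ (Lemma~\ref{lem:stem-0}) and two pendent neighbours of $v_i$ lie at distance~$2$, only one of them, say $\ell_i^1$, can carry a value exceeding~$1$; hence $\tilde f(\ell_i^1)=\tilde f^*(v_i)\ge\lambda_i+1\ge 4$. I would then replace $\tilde f$ on the ball dominated by $\ell_i^1$ by the broadcast that puts value~$1$ on each of the $\lambda_i$ pendent neighbours of $v_i$ and relocates the ``reach'' of $\ell_i^1$ to a pendent neighbour of a spine vertex sitting at the boundary of that ball — this is exactly the type of surgery used repeatedly in the proof of Lemma~\ref{lem:end-leaf-broadcast}, and it produces a broadcast of cost at least $\cost(\tilde f)+(\lambda_i-2)>\cost(\tilde f)$, contradicting optimality; the only situation in which the relocation fails is when the ball reaches an end of the caterpillar, and that is excluded by the hypothesis $\cost(f)>2(\diam(CT)-1)$. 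Item~3 is then immediate: if $v_av_{a+1}$ realises $1^+2^-$ then by item~1 $v_a$ carries a broadcast pendent neighbour, which lies at distance~$3$ from every pendent neighbour of $v_{a+1}$; independence forces each of those to have value $\le 2$, and when $\lambda_{a+1}=2$ the distance-$2$ constraint between the two pendent neighbours of $v_{a+1}$ makes their sum $\le 2$ as well; the pattern $2^-1^+$ is symmetric, using Lemma~\ref{lem:end-leaf-broadcast} in the sub-case $v_a=v_0$ or $v_{a+1}=v_k$.

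For items~4--6 I would use a \emph{stem--trunk pairing} along the segment $\sigma$: list the vertices of $\sigma$ as an alternating sequence of stems and trunks and pair each trunk with an adjacent stem, leaving unpaired either one boundary stem (item~4), the two boundary stems (item~6), or one end trunk (item~5). Two inequalities drive everything. First, each trunk $v_i$ satisfies $\tilde f^*(v_i)\le 1$ (item~1). Second, for a paired trunk--stem block one has $\tilde f^*(s)+\tilde f^*(t)\le 3$: if $t$ is a broadcast vertex then its value is~$1$ and, being at distance~$2$ from every pendent neighbour of $s$, it forces each such neighbour to value~$\le 1$, so $\tilde f^*(s)\le\lambda_s\le 2$; if $t$ is silent then $\tilde f^*(s)\le 3$ for an interior stem (its at most two pendent neighbours lie at distance~$4$ from the pendent neighbours of the next stem), and $\tilde f^*(s)\le 2$ for a boundary stem by item~3, while for item~6 the end stem of $CT$ has $\tilde f^*=3$ by Lemma~\ref{lem:end-leaf-broadcast}, which itself forces the adjacent trunk to be silent. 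Summing the blocks and the unpaired terms gives the ``all-$2$'' bounds $3t_\sigma+2$ (item~4), $3t_\sigma$ (item~6) and $3t_\sigma-2$ (item~5). For the refined bounds, observe that a stem $s$ with $\lambda_s=1$ contributes to its block either $\le 2$ — and then, combined with $\tilde f^*(t)\le 1$, actually makes that block $\le 2$ (when $\tilde f^*(s)=2$ comes from a single value-$2$ pendent, $t$ is forced silent; when $\tilde f^*(t)=1$ then necessarily $\tilde f^*(s)\le 1$) — yielding one unit of slack; or else it contributes exactly~$3$ via a single value-$3$ pendent neighbour, which forces \emph{both} adjacent trunks silent, and one propagates along the resulting chain of silent trunks: since the boundary stems of $\sigma$ have $\tilde f^*\le 2$ by item~3 (recall a value-$3$ pendent at distance~$3$ from a broadcast pendent of a $1^+$-stem is impossible by item~1), the chain cannot span all of $\sigma$, so it must run into a block of weight $\le 2$ and again gives one unit of slack. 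This produces the ``$+1$'' and ``$-3$'' refinements.

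The main obstacle is precisely this slack-propagation bookkeeping: one must check, for every pattern in items~4--6 and every position of the offending $\lambda_s=1$ stem (interior, adjacent to a boundary stem, coinciding with the unpaired stem, or next to the unpaired end trunk), that the chain of forced-silent trunks indeed terminates inside $\sigma$ at a block of weight at most~$2$; the several boundary cases (patterns touching $v_0$ or $v_k$, where Lemma~\ref{lem:end-leaf-broadcast} pins the end values) together with the need to invoke Observation~\ref{obs:overlap} so that pairings of distinct pattern occurrences do not conflict make the argument long, though each individual step is elementary. For item~2, the only subtlety is making the relocation of $\ell_i^1$'s reach precise, which again follows the template of Lemma~\ref{lem:end-leaf-broadcast}.
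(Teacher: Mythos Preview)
Your overall plan --- start from the $\tilde f$ supplied by Lemma~\ref{lem:internal-trunks} and deduce items~2--6 --- is the paper's plan too, and your stem--trunk block inequality $\tilde f^*(s)+\tilde f^*(t)\le 3$ is exactly what the paper uses for item~6. There are, however, two places where the paper's argument is shorter and avoids the obstacles you identified.

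\emph{Item~2.} No surgery is required. Once $\tilde f$ satisfies item~1 (hence Lemma~\ref{lem:internal-stems}), the hypothesis $\tilde f(\ell_i^1)\ge\lambda_i+1\ge 4$ forces $\tilde f^*(v_j)=0$ for every $j$ with $1\le|j-i|\le 2$, since the pendent neighbours of such $v_j$ lie within distance~$4$ of $\ell_i^1$. As $CT$ has no pair of adjacent trunks, at least one of these $v_j$ is a stem, contradicting Lemma~\ref{lem:internal-stems}. Your relocation manoeuvre is therefore unnecessary, and the cost gain $+(\lambda_i-2)$ you claim for it is not justified as stated.

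\emph{Items~4 and~5.} Instead of your slack-propagation along chains of forced-silent trunks, the paper makes one further normalisation of $\tilde f$: among all optimal non-canonical broadcasts satisfying item~1, choose one maximising the number of trunks of $\sigma$ with $\tilde f$-value~$0$. If $\sigma$ then contained two consecutive trunks with values $0$ and $1$, the local replacement setting both to~$0$ and giving the intermediate stem's pendent the value~$3$ would preserve cost and increase the count of $0$-valued trunks, a contradiction. Hence either every trunk in $\sigma$ has value~$1$, in which case every pendent in $\sigma$ has value~$1$ and $\tilde f^*(\sigma)=\lambda(\sigma)+\tau(\sigma)$ (which drops by~$1$ as soon as some stem has $\lambda=1$), or every trunk has value~$0$, in which case the crude bounds $\tilde f^*\le 3$ on interior stems and $\tilde f^*\le 2$ on boundary stems already give the refined inequality directly. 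This two-case split eliminates precisely the boundary bookkeeping you flagged as the main obstacle; your propagation argument can be made to work, but the paper's maximality trick is both shorter and cleaner.
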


\begin{proof}
We consider the six items of the lemma.
\begin{enumerate}
%cas 0
\item We know by Lemma~\ref{lem:internal-trunks} that
there exists an optimal non-canonical independent broadcast $\tilde{f}$ on $CT$ 
satisfying the two items of Lemma~\ref{lem:internal-trunks},
so that, in particular, $\tilde{f}^*(v_i)>0$ for every stem $v_i$, $0\le i\le k$
and $\tilde{f}^*(v_j)\le 1$ for every trunk $v_j$, $1\le j\le k-1$.
We thus assume for all following items that such 
an optimal non-canonical independent broadcast $\tilde{f}$ on $CT$ 
has been chosen.
% cas 1
%\item %If $\lambda_0\le 2$ and $\lambda_1>0$, then $\tilde{f}^*(v_0)\le 2$;
%%If $\lambda_k\le 2$ and $\lambda_{k-1}>0$, then $\tilde{f}^*(v_k)\le 2$;
%%If $\lambda_0\le 2$ and $\lambda_1=0$, then $\tilde{f}^*(v_0)\le 3$;
%%If $\lambda_k\le 2$ and $\lambda_{k-1}=0$, then $\tilde{f}^*(v_k)\le 3$.
%This directly follows from Lemma~\ref{lem:end-leaf-broadcast}.

% cas 2
\item %For every $i$, $0\le i\le k$, if $\lambda_i\ge 3$, then $\tilde{f}^*(v_i)\le\lambda_i$.
Suppose to the contrary that there exists some $i$, $0\le i\le k$,
with $\tilde{f}^*(v_i)>\lambda_i\ge 3$.
This implies that $v_i$ has exactly one pendent neighbour,
say $\ell_i^1$ without loss of generality, which is an
$\tilde{f}$-broadcast vertex.
Since $\tilde{f}(\ell_i^1)\ge 4$, we necessarily have a stem
$v$ with $d_{CT}(v_i,v)\le 2$ 
%$v\in\{v_{i-2},v_{i-1}, v_{i+1}, v_{i+2}\}$ 
and $\tilde{f}^*(v)=0$,
contradicting our assumption that $\tilde{f}$ satisfies Lemma~\ref{lem:internal-stems}.

\item %If $v_av_{a+1}$ is an occurrence of the pattern $1^+2^-$, then $\tilde{f}^*(v_{a+1})\le 2$;
%If $v_av_{a+1}$ is an occurrence of the pattern $2^-1^+$, then $\tilde{f}^*(v_{a})\le 2$.
Let $v_av_{a+1}$, $0\le a<k$,  be an occurrence of the pattern $1^+2^-$
(the case $2^-1^+$ is similar, by symmetry).
By Lemmas \ref{lem:stem-0} and~\ref{lem:internal-stems},
we know that $\tilde{f}^*(v_a)>0$ and $\tilde{f}(v_a)=0$.
This clearly implies $\tilde{f}^*(v_{a+1})\le 2$.

% cas 4
\item %If $\v_a\sigma v_b$ is an occurrence of the pattern $1^+2^-(02^-)^{+r}1^+$,
%then $\tilde{f}^*(\sigma)\le 3t_\sigma +2$ 
%if $v_a\sigma v_b$ is an occurrence of the pattern $1^+2(02)^{+r}1^+$,
Let $v_a\sigma v_b=v_iv_{i+1}\dots v_{i+2r+2}$ be
an occurrence of the pattern $1^+2(02)^{+r}1^+$,
for some $i$, $0\le i\le k-2r-2$.
We thus have $t_\sigma=r$.
Since $\tilde{f}$ satisfies Lemma~\ref{lem:internal-trunks},
we have $\tilde{f}^*(v_i)>0$, $\tilde{f}^*(v_{i+2r+2})>0$,
$\tilde{f}^*(v_{i+2j+1})>0$ for every $j$, $0\le j\le r$,
and $\tilde{f}^*(v_{i+2j})\le 1$ for every $j$, $1\le j\le r$.
This implies
\begin{equation}
\tilde{f}^*(v_{i+1})\le 2,\ \tilde{f}^*(v_{i+2r+1})\le 2,\ 
\mbox{and}\ \tilde{f}^*(v_{i+2j+1})\le 3\ \mbox{ for every}\ j,\ 1\le j\le r-1.
\label{eq:cas3}\end{equation}

%%%%

We consider three subcases, according to the number of trunks in $\sigma$ that
are broadcast vertices. 

\begin{enumerate}
\item $\tilde{f}(v_{i+2j})=1$ for every $j$, $1\le j\le r$.\\
In that case, every pendent vertex in $\sigma$ is an
$\tilde{f}$-broadcast vertex, with $\tilde{f}$-value 1. 
%More precisely, we necessarily have $\tilde{f}^*(\ell_a^b)=1$
%for every $a$ and $b$, $i\le a\le i+2r$, $1\le b\le\lambda_a$.
This gives
$$\tilde{f}^*(\sigma)=\lambda(\sigma)+\tau(\sigma)\le 2(r+1)+r=3r+2=3t_\sigma+2,$$
if $v_a\sigma v_b$ is an occurrence of the pattern $1^+2(02)^{+r}1^+$,
and
$$\tilde{f}^*(\sigma)=\lambda(\sigma)+\tau(\sigma)\le 1+2r+r=3r+1=3t_\sigma+1,$$
otherwise (since we have at least one stem in $\sigma$ with $\tilde{f}$-value 1).

\item $\tilde{f}(v_{i+2j})=0$ for every $j$, $1\le j\le r$.\\
In that case, by~(\ref{eq:cas3}), we get
$$\tilde{f}^*(\sigma)\le 2 + 3(r-1) + 2=3r+1=3t_\sigma+1.$$

\item Not all trunks in $\sigma$ have the same $\tilde{f}$-value.\\
Suppose that $\tilde{f}$ has been chosen in such a way
that the number of trunks in $\sigma$ with $\tilde{f}$-value 0 is maximal.
In that case, $\sigma$ contains two consecutive trunks, say $v_{i+2j_0}$
and $v_{i+2j_0+2}$, $1\le j_0\le r-1$, 
with $\tilde{f}(v_{i+2j_0})=0$ and $\tilde{f}(v_{i+2j_0+2})=1$, without
loss of generality (by symmetry).
This implies $\tilde{f}^*(v_{i+2j_0+1})=\lambda_{i+2j_0+1}\le 2$.
We can then modify $\tilde{f}$ by setting
$\tilde{f}(v_{i+2j_0})=\tilde{f}(v_{i+2j_0+2})=0$,
$\tilde{f}(\ell_{i+2j_0+1}^1)=3$ (and $\tilde{f}(\ell_{i+2j_0+1}^2)=0$ if $\lambda_{i+2j_0+1}=2$),
contradicting our assumption on the maximality of the number of trunks with
$\tilde{f}$-value 0.
Therefore, this case cannot occur and we are done.
\end{enumerate}

% cas 5
\item %If $\sigma$ is an occurrence of the pattern $02^-(02^-)^{*r}0$,
%then $\tilde{f}^*(\sigma)\le 3t_\sigma -2$ 
%if $v_a\sigma v_b$ is an occurrence of the pattern $02(02)^{+r}0$,
The proof uses the same ideas as the proof of the previous case.
%%%
Let $\sigma=v_iv_{i+1}\dots v_{i+2r+2}$ be
an occurrence of the pattern $02^-(02^-)^{*r}0$,
for some $i$, $1\le i\le k-2r-3$.
We thus have $t_\sigma=r+2$.
Since $\tilde{f}$ satisfies Lemma~\ref{lem:internal-trunks},
we have %$\tilde{f}^*(v_i)>0$, $\tilde{f}^*(v_{i+2r+2})>0$,
\begin{equation}
0<\tilde{f}^*(v_{i+2j+1})\le 3\ \mbox{ for every}\ j,\ 0\le j\le r,
\label{eq:cas4-1}\end{equation}
and 
\begin{equation}
\tilde{f}^*(v_{i+2j})\le 1\ \mbox{ for every}\ j,\ 0\le j\le r+1.
\label{eq:cas4-2}\end{equation}
%This implies
%\begin{equation}
%\tilde{f}^*(v_{i+1})\le 2,\ \tilde{f}^*(v_{i+2r+1})\le 2,\ 
%\mbox{and}\ \tilde{f}^*(v_{i+2j+1})\le 3\ \mbox{ for every}\ j,\ 0\le j\le r.
%\label{eq:cas3}\end{equation}

%%%%

We consider three subcases, according to the number of trunks in $\sigma$ that
are broadcast vertices. 

\begin{enumerate}
\item $\tilde{f}(v_{i+2j})=1$ for every $j$, $0\le j\le r+1$.\\
In that case, every pendent vertex in $\sigma$ is an
$\tilde{f}$-broadcast vertex, with $\tilde{f}$-value 1. 
%More precisely, we necessarily have $\tilde{f}^*(\ell_a^b)=1$
%for every $a$ and $b$, $i\le a\le i+2r$, $1\le b\le\lambda_a$.
This gives
$$\tilde{f}^*(\sigma)=\lambda(\sigma)+\tau(\sigma)\le 2(r+1)+r+2=3r+4=3t_\sigma-2,$$
if $\sigma$ is an occurrence of the pattern $02(02)^{*r}0$,
and
$$\tilde{f}^*(\sigma)=\lambda(\sigma)+\tau(\sigma)\le 1+2r+r+2=3r+3=3t_\sigma-3,$$
otherwise (since we have at least one stem in $\sigma$ with $\tilde{f}$-value 1).

\item $\tilde{f}(v_{i+2j})=0$ for every $j$, $0\le j\le r+1$.\\
In that case, by (\ref{eq:cas4-1}) and~(\ref{eq:cas4-2}), we get
$$\tilde{f}^*(\sigma)\le 3(r+1)=3r+3=3t_\sigma-3.$$

\item Not all trunks in $\sigma$ have the same $\tilde{f}$-value.\\
Suppose that $\tilde{f}$ has been chosen in such a way
that the number of trunks in $\sigma$ with $\tilde{f}$-value 0 is maximal.
In that case, $\sigma$ contains two consecutive trunks, say $v_{i+2j_0}$
and $v_{i+2j_0+2}$, $0\le j_0\le r$, 
with $\tilde{f}(v_{i+2j_0})=0$ and $\tilde{f}(v_{i+2j_0+2})=1$, without
loss of generality (by symmetry).
This implies $\tilde{f}^*(v_{i+2j_0+1})=\lambda_{i+2j_0+1}\le 2$.
We can then modify $\tilde{f}$ by setting
$\tilde{f}(v_{i+2j_0})=\tilde{f}(v_{i+2j_0+2})=0$,
$\tilde{f}(\ell_{i+2j_0+1}^1)=3$ (and $\tilde{f}(\ell_{i+2j_0+1}^2)=0$ if $\lambda_{i+2j_0+1}=2$),
contradicting our assumption on the maximality of the number of trunks with
$\tilde{f}$-value 0.
Therefore, this case cannot occur and we are done.
\end{enumerate}

% cas 6
\item %If $\sigma v_b$ is an occurrence of the pattern $[2^-(02^-)^{*r}0$,
%then $\tilde{f}^*(\sigma)\le 3(t_\sigma +1)$.
Let $v_0\dots v_{2r+1}$ be an occurrence of the pattern $[2^-(02^-)^{*r}0$
(the case $02^-(02^-)^{*r}]$ is similar, by symmetry).
We first prove that for every $i$, $0\le i\le r$,
$\tilde{f}^*(v_{2i})+\tilde{f}^*(v_{2i+1})\le 3$.
By Lemma~\ref{lem:internal-trunks}, we know that $\tilde{f}(v_{2i+1})\le 1$.
If $\tilde{f}(v_{2i+1})=1$, we then have $\tilde{f}(\ell_{2i}^j)\le 1$ for 
every pendent neighbour $\ell_{2i}^j$
of $v_{2i}$, and thus $\tilde{f}^*(v_{2i})\le\lambda_{2i}\le 2$.
On the other hand, if $\tilde{f}(v_{2i+1})=0$, we have $\tilde{f}^*(v_{2i})\le 3$
(which implies $\tilde{f}(\ell_{2i}^j)=3$ for a pendent neighbour $\ell_{2i}^j$
of $v_{2i}$)
since otherwise we would have $\tilde{f}^*(v_{2i+2})=0$,
contradicting Lemma~\ref{lem:internal-stems}.
In both cases, we thus get the desired inequality.

Since $\sigma$ contains exactly $r+1=t_\sigma$ distinct
pairs of vertices of the form $(v_{2i},v_{2i+1})$, we get
$$\tilde{f}^*(\sigma)=\sum_{i=0}^{i=r}\left(\tilde{f}^*(v_{2i})+\tilde{f}^*(v_{2i+1})\right)\le 3(r+1)=3t_\sigma.$$

\end{enumerate}
This completes the proof.
\end{proof}

The following lemma states that Lemma~\ref{lem:not-greater-than-beta-star}
covers all possible caterpillars that admit a non-canonical independent broadcast
with sufficiently large cost.

\begin{lemma}
If $CT=CT(\lambda_0,\dots,\lambda_k)$ is a caterpillar of length $k\ge 1$, with no pair of adjacent trunks, such that there exists
an optimal non-canonical independent broadcast $f$ on $CT$ with $\cost(f)>2(\diam(CT)-1)$,
then Lemma~\ref{lem:not-greater-than-beta-star} gives an upper bound on $\cost(f)$.
%For every spine vertex $v_i$ of $CT$, $0\le i\le k$, Lemma~\ref{lem:not-greater-than-beta-star}
%gives an upper bound on $f^*(v_i)$.
\label{lem:all-but-special-case}
\end{lemma}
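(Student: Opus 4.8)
The plan is to deduce from Lemma~\ref{lem:not-greater-than-beta-star}, by a structural analysis of the spine, that $\cost(f)\le\beta^*(CT)$ whenever an $f$ as in the hypothesis exists; this is exactly the upper bound the main theorem needs, matched to the lower bound of Lemma~\ref{lem:beta-star}. First I would replace $f$ by the broadcast $\tilde f$ furnished by Lemma~\ref{lem:not-greater-than-beta-star}, so that $\cost(f)=\cost(\tilde f)=\sum_{i=0}^{k}\tilde f^*(v_i)$, every stem $v_i$ satisfies $\tilde f(v_i)=0$ and $\tilde f^*(v_i)>0$, every trunk $v_i$ satisfies $\tilde f^*(v_i)\le 1$, and items~2--6 of that lemma hold.

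The argument then proceeds by cutting the spine $v_0\dots v_k$ into consecutive blocks, each of which is an occurrence of a pattern appearing in the definition of $\beta^*(CT)$ — namely a single trunk, a single stem with $\lambda_i\ge 3$, a short block $1^+2^-$ or $2^-1^+$ carrying a degree-$\le2$ stem, or a maximal occurrence of $1^+2^-(02^-)^{+r}1^+$, $02^-(02^-)^{*r}0$, $[2^-(02^-)^{*r}0$ or $02^-(02^-)^{*r}]$ — and such that the weight of each block under $\tilde f$ is bounded by the corresponding term of Lemma~\ref{lem:not-greater-than-beta-star}. The structural facts that make such a cut exist are that $CT$ has no pair of adjacent trunks, so that two trunks whose only common stem-neighbour has at most two pendent neighbours chain into one of the maximal ``alternating'' occurrences, and that by Observation~\ref{obs:overlap} such occurrences are pairwise disjoint except when $CT$ itself has pattern $[2^-(02^-)^{*r}]$, which I treat separately. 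Every spine vertex is then either a trunk (bounded by item~1), a stem with $\lambda_i\ge3$ (item~2), a degree-$\le2$ stem inside an alternating block (items~4--6), or a degree-$\le2$ stem adjacent to a larger stem but not inside an alternating block (item~3).

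Given the cut, I would sum the bounds and compare, term by term, with $\beta^*(CT)=\lambda(CT)+\tau(CT)+\#_{CT}(\{1^+,[\}1\{1^+,]\})$ plus the three $\alpha$-bonuses: each block contributes its share of $\lambda(CT)+\tau(CT)$ together with exactly its own $\#$- or $\alpha$-bonus (so that each bonus of the formula is charged to one block only), whence the grand total is at most $\beta^*(CT)$. For a caterpillar with pattern $[2^-(02^-)^{*r}]$ I would instead argue directly, using Lemmas~\ref{lem:stem-0}, \ref{lem:internal-stems} and~\ref{lem:internal-trunks} and a short count on the alternating spine, that every non-canonical independent broadcast on it has cost at most $2(\diam(CT)-1)$; hence no $f$ as in the hypothesis exists and the statement holds vacuously for such $CT$.

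The main obstacle is the accounting in the third step. Taken in isolation the per-vertex bounds of Lemma~\ref{lem:not-greater-than-beta-star} are slightly too generous — for instance a degree-$1$ stem next to a trunk is only constrained to $\tilde f^*(v_i)\le 2$ and the trunk to $\le 1$, a total of $\lambda_i+2$, whereas $\beta^*(CT)$ pays only $\lambda_i+1$ for that pair — and the extra slack must be reclaimed by the elementary but ubiquitous remark that a vertex loaded with $\tilde f$-value at least $2$ $\tilde f$-dominates its adjacent trunk, which therefore carries weight $0$. Carrying this out consistently across all block types, checking that the cut is well defined and exhaustive (especially at the ends $v_0$ and $v_k$), and verifying the one-to-one charging of bonuses to blocks, is where essentially all the work lies; the degenerate pattern $[2^-(02^-)^{*r}]$ is the single configuration where this accounting breaks down and has to be handled by a separate direct argument.
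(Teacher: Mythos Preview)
Your approach is correct, and the degenerate case $[2^-(02^-)^{*r}]$ is exactly the one the paper isolates (it appears there as the subcase $a=0$, $b=k$). You are, however, doing more than the paper does at this step: the paper's proof of Lemma~\ref{lem:all-but-special-case} only establishes \emph{coverage} --- for every spine vertex $v_i$ it exhibits an item of Lemma~\ref{lem:not-greater-than-beta-star} bounding $\tilde f^*(v_i)$ or $\tilde f^*(\sigma)$ for a maximal alternating block $\sigma\ni v_i$ --- and defers the comparison with $\beta^*(CT)$ to Lemma~\ref{lem:beta-star-is-the-best}, where it argues that the explicit broadcast $f_4$ of Lemma~\ref{lem:beta-star} attains each such bound, whence $\cost(\tilde f)\le\cost(f_4)=\beta^*(CT)$. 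Your route collapses these two lemmas into one and sums the item bounds directly against the formula for $\beta^*$; that is why you meet the slack head-on (a degree-$1$ stem next to a trunk is charged $2+1$ by items~3 and~1 while $\beta^*$ pays only $2$ for the pair). In fact the very configuration you flag also shows that $f_4$ does not always attain the item-3 bound when the stem borders a trunk, so the paper's detour through $f_4$ does not entirely escape the issue either; your reclamation observation --- a pendent $\tilde f$-value at least $2$ forces the adjacent trunk to weight $0$ --- is precisely the local fact that closes either accounting, and making it explicit is a genuine improvement over the paper's somewhat informal treatment.
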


\begin{proof}
Let $CT=CT(\lambda_0,\dots,\lambda_k)$ be a caterpillar of length $k\ge 1$, with no pair of adjacent trunks,
$f$ be an optimal non-canonical independent broadcast on $CT$ with $\cost(f)>2(\diam(CT)-1)$,
and $v_i$, $0\le i\le k$, a spine vertex of $CT$.
 
If $\lambda_i\ge 3$, then $f^*(v_i)=\lambda_i$ by item~5 of Lemma~\ref{lem:end-leaf-broadcast},
and thus by item~1 of Lemma~\ref{lem:not-greater-than-beta-star}.

If $\lambda_i=0$, then $f^*(v_i)\le 1$ by item~2 of Lemma~\ref{lem:internal-trunks},
and thus by item~1 of Lemma~\ref{lem:not-greater-than-beta-star}.

Suppose now that $1\le \lambda_i\le 2$.
If $i=0$ or $i=k$, then $f^*(v_i)\le 3$ by items 1 to~4 of Lemma~\ref{lem:end-leaf-broadcast},
and thus by item~1 of Lemma~\ref{lem:not-greater-than-beta-star}.
We assume now that $1\le i\le k-1$.
If $\lambda_{i-1}>0$ or $\lambda_{i+1}>0$,
then $f^*(v_i)\le 2$ by item~3 of Lemma~\ref{lem:not-greater-than-beta-star}.

The remaining case is thus $1\le i\le k-1$, $\lambda_{i-1}=0$ and $\lambda_{i+1}=0$.
We consider the set of all occurrences of a pattern, in which
$0$'s and $2^-$'s alternate, that contain vertices $v_{i-1}$, $v_i$ and $v_{i+1}$.
Let $\sigma=v_av_{a+1}\dots v_b$, $0\le a\le i-1 < i+1\le b\le k$ be such an occurrence
with maximal length.
Note here that we necessarily have $v_a\neq v_i$ and $v_b\neq v_i$.
We consider three cases.

\begin{enumerate}
\item $\lambda_a=\lambda_b=0$.\\
By the maximality of $\sigma$, we necessarily have
$\lambda_{a-1}\ge 3$ and $\lambda_{b+1}\ge 3$.
Therefore, the value of $f^*(\sigma)$ is bounded by item~5 of 
Lemma~\ref{lem:not-greater-than-beta-star}.

\item $\lambda_a=0$ and $\lambda_b>0$ 
(the case $\lambda_a>0$ and $\lambda_b=0$ is similar, by symmetry).\\
By the maximality of $\sigma$, we necessarily have
$\lambda_{a-1}\ge 3$ and either $b=k$, or $b<k$ and $\lambda_{b+1}\ge 1$.
If $b=k$, then the value of $f^*(\sigma)$ is bounded by item~6 of 
Lemma~\ref{lem:not-greater-than-beta-star}.
If $b<k$ and $\lambda_{b+1}\ge 1$,
then $f^*(v_a\dots v_{b-1})$ is bounded by item~5 of 
Lemma~\ref{lem:not-greater-than-beta-star}.

\item $\lambda_a>0$ and $\lambda_b>0$.\\
By the maximality of $\sigma$, we necessarily have
(i) either $a=0$, or $a>0$ and $\lambda_{a-1}\ge 1$, and 
(ii) either $b=k$, or $b<k$ and $\lambda_{b+1}\ge 1$.

If $a>0$ and $b<k$, then the value of $f^*(\sigma)$ is bounded by item~4 of 
Lemma~\ref{lem:not-greater-than-beta-star}.

If $a=0$ and $b<k$ (the case $a>0$ and $b=k$ is similar, by symmetry),
then the value of $f^*(v_a\dots v_{b-1})$ is bounded by item~6 of 
Lemma~\ref{lem:not-greater-than-beta-star}.

Finally, if $a=0$ and $b=k$, the caterpillar $CT$ has 
pattern $2^-(02^-)^{+r}$.
In that case, we have $\diam(CT)=2r+2$ and thus $2(\diam(CT)-1)=4r+2$.
But by Lemmas \ref{lem:internal-stems} and~\ref{lem:internal-trunks}
(as discussed in the proof of item~6 of Lemma~\ref{lem:not-greater-than-beta-star}),
we have $f^*(v_j)+f^*(v_{j+1})\le 3$ for every $j$, $0\le j\le 2r-2$.
Moreover, by item~2 of Lemma~\ref{lem:end-leaf-broadcast},
we have $f^*(v_{2r})=3$.
Therefore, $f^*(CT)\le 3r+3\le 4r+2=2(\diam(CT)-1)$.
This contradicts our assumption on the value of $\cost(f)$, and thus
this case cannot occur.

\end{enumerate}

Therefore, in all cases, either $f^*(v_i)$ or $f^*(\sigma)$ for an occurrence
$\sigma$ of a pattern containing $v_i$ is bounded by some item of
Lemma~\ref{lem:not-greater-than-beta-star}.
This concludes the proof.
\end{proof}

Using Lemmas \ref{lem:not-greater-than-beta-star} and~\ref{lem:all-but-special-case}, 
we can now prove that
no optimal non-canonical independent broadcast $f$ on $CT$ with $\cost(f)>2(\diam(CT)-1)$
and $\cost(f)>\beta^*(CT)$ exists.

\begin{lemma}
Let $CT=CT(\lambda_0,\dots,\lambda_k)$ be a caterpillar of length $k\ge 1$, with no pair of adjacent trunks,
and $f$ be an optimal non-canonical independent broadcast on $CT$ with $\cost(f)>2(\diam(CT)-1)$.
We then have $\cost(f)\le\beta^*(CT)$.
\label{lem:beta-star-is-the-best}
\end{lemma}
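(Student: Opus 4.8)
The plan is to pass to a convenient broadcast and then argue by a global accounting along the spine. First I would invoke Lemma~\ref{lem:not-greater-than-beta-star} to replace $f$ by an optimal non-canonical independent broadcast $\tilde{f}$ with $\cost(\tilde{f})=\cost(f)$ satisfying its six items; since $\cost(\tilde{f})=\sum_{i=0}^{k}\tilde{f}^*(v_i)$, it suffices to prove $\sum_{i=0}^{k}\tilde{f}^*(v_i)\le\beta^*(CT)$. Lemma~\ref{lem:all-but-special-case} already does most of the work: every spine vertex $v_i$ is accounted for either by an individual bound on $\tilde{f}^*(v_i)$ (item~2 of Lemma~\ref{lem:not-greater-than-beta-star} when $\lambda_i\ge 3$; Lemma~\ref{lem:internal-trunks} when $v_i$ is a trunk; Lemma~\ref{lem:end-leaf-broadcast} at the two ends; item~3 for an internal small stem with a stem neighbour) or by a bound on $\tilde{f}^*(\sigma)$ for an alternating $0/2^-$ segment $\sigma$ containing $v_i$ (items~4--6). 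Observation~\ref{obs:overlap} keeps the segments involved pairwise disjoint, the only possible overlap --- a caterpillar with pattern $[2^-(02^-)^{*r}]$ --- being excluded since, as in the last subcase of the proof of Lemma~\ref{lem:all-but-special-case}, it would force $\cost(f)\le 2(\diam(CT)-1)$.

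The core of the argument is then a block-by-block check that the bound so obtained does not exceed the contribution of the block to the defining sum $\beta^*(CT)=\lambda(CT)+\tau(CT)+\#_{CT}(\{1^+,[\}1\{1^+,]\})+\alpha_1(\cdot)+\alpha_2(\cdot)+\alpha_2(\cdot)+\alpha_2(\cdot)$. For a large stem this is immediate: $\tilde{f}^*(v_i)\le\lambda_i$ is exactly its contribution to $\lambda(CT)$. For an alternating segment $\sigma$, its trunks and its stems supply $t_\sigma$ to $\tau(CT)$ and $\lambda(\sigma)$ to $\lambda(CT)$, and the additive correction $c\in\{-3,-2,0,+1,+2\}$ in the bound $\tilde{f}^*(\sigma)\le 3t_\sigma+c$ of items~4--6 is matched, subcase by subcase, by the value of $\alpha_1$ or $\alpha_2$ on the corresponding occurrence --- recall $\alpha_1(M)=\max\{0,\#_{M}(1)-1\}$ and that $\alpha_2$ adds $1$ for each spine end inside $M$. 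For instance, for an occurrence $M$ of $02^-(02^-)^{*r}0$ item~5 gives $\tilde{f}^*(\sigma)\le 3t_\sigma-2$ when every stem of $\sigma$ has two pendent vertices and $\le 3t_\sigma-3$ otherwise, which in both cases equals $\lambda(\sigma)+t_\sigma+\alpha_2(M)$; the other patterns use the same identities already computed inside the proof of Lemma~\ref{lem:not-greater-than-beta-star}. For a cluster of consecutive small stems, each single-pendent stem whose two spine neighbours are stems yields exactly one occurrence of $1^+11^+$ (or, at a spine end, of $[11^+$, $1^+1]$ or $[1]$), which supplies the $+1$ over its base value $\lambda_i=1$ that item~3 permits, while a $2$-pendent stem gets no bonus but its base value $\lambda_i=2$ already covers $\tilde{f}^*(v_i)\le 2$. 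Summing these per-block inequalities yields $\cost(\tilde{f})\le\beta^*(CT)$, hence $\cost(f)\le\beta^*(CT)$.

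The step I expect to be the main obstacle is precisely what keeps this from being a one-line consequence of Lemmas~\ref{lem:not-greater-than-beta-star} and~\ref{lem:all-but-special-case}: at a junction where a single-pendent stem $v_a$ --- flanked on one side by a large stem or by the rest of a cluster --- sits next to a trunk that begins an alternating segment, or next to the lone trunk of a configuration $1^+101^+$, the item~3 bound $\tilde{f}^*(v_a)\le 2$ exceeds the available contribution $\lambda_a=1$ by one, and the bound for the segment sitting next to $v_a$ does not a priori make up the difference. The remedy is a joint use of independence: if $\tilde{f}^*(v_a)=2$, the unique pendent neighbour of $v_a$ has value $2$, hence $\tilde{f}$-dominates the trunk next to $v_a$, forcing that trunk's $\tilde{f}^*$-value to $0$; moreover no pendent neighbour of the stem at the far end of that trunk can carry value $\ge 3$, since it would then dominate the pendent neighbour of $v_a$ --- so the adjacent occurrence of $02^-(02^-)^{*r}0$ (or the lone trunk) loses exactly one unit, which balances the surplus at $v_a$. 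Carrying out this compensation systematically at every stem--trunk junction, together with the routine verification that the blocks genuinely partition $\{v_0,\dots,v_k\}$ and that every single-pendent stem's surplus over $\lambda_i$ is charged to exactly one of $\#_{CT}(\{1^+,[\}1\{1^+,]\})$, $\alpha_1(\cdot)$ or $\alpha_2(\cdot)$ --- never twice and never zero times, which the non-overlap clauses of Observation~\ref{obs:overlap} guarantee --- completes the proof.
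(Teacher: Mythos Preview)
Your proposal follows the same overall route as the paper --- pass to a $\tilde{f}$ satisfying Lemma~\ref{lem:not-greater-than-beta-star} and then compare the resulting piecewise bounds to $\beta^*(CT)$ --- but whereas the paper argues by showing that the broadcast $f_4$ built in Lemma~\ref{lem:beta-star} attains every bound of Lemma~\ref{lem:not-greater-than-beta-star} (so that $\cost(\tilde f)\le\cost(f_4)=\beta^*(CT)$), you do a direct block-by-block accounting against the defining sum for $\beta^*$. You are right that the junction configuration is the crux, and you are right to be uneasy about it: your compensation step contains a concrete error. You claim that if $\tilde{f}(\ell_a^1)=2$ then ``no pendent neighbour of the stem at the far end of that trunk can carry value $\ge 3$, since it would then dominate the pendent neighbour of $v_a$''. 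But $d_{CT}(\ell_a^1,\ell_{a+2}^j)=4$, so $\tilde{f}(\ell_{a+2}^j)=3$ is fully compatible with $\tilde{f}(\ell_a^1)=2$, and the adjacent block does not automatically give back the unit you need.

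This gap is not a matter of missing bookkeeping: the statement you are trying to prove is false as written. Take $CT=CT(5,1,0,1,0,5)$. Then $\lambda(CT)=12$, $\tau(CT)=2$; $v_1$ is not an occurrence of $\{1^+,[\}1\{1^+,]\}$ since $\lambda_2=0$; there is no occurrence of $1^+2^-(02^-)^{+r}1^+$ (such a pattern must end $2^-1^+$, not $01^+$); the sole occurrence of $02^-(02^-)^{*r}0$ is $v_2v_3v_4$ with $\alpha_2=0$; and $\lambda_0,\lambda_5\ge 3$ rule out the two end-patterns. Hence $\beta^*(CT)=14$ while $2(\diam(CT)-1)=12$. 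Now set $g(\ell_0^j)=g(\ell_5^j)=1$ for $1\le j\le 5$, $g(\ell_1^1)=2$, $g(\ell_3^1)=3$, and $g=0$ elsewhere: every pair of broadcast vertices is at distance strictly exceeding the larger of their values, so $g$ is an independent non-canonical broadcast of cost $15>\beta^*(CT)$ and $>2(\diam(CT)-1)$. The paper's own proof slips at exactly the same spot: in verifying item~3 it asserts that step~2 of the $f_4$-construction sets $f_4(\ell_{a+1}^1)=2$ whenever $\lambda_{a+1}=1$, but step~2 only fires when $\lambda_{a+2}\ge 1$, which fails here --- so $f_4$ does not attain the item~3 bound and the concluding inequality $\cost(\tilde f)\le\cost(f_4)$ is not justified.
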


\begin{proof}
Let us denote by $f_4$ the non-canonical independent broadcast on $CT$
constructed in the proof of Lemma~\ref{lem:beta-star}, thus with $\cost(f_4)=\beta^*(CT)$.
%We will prove that $f_4$ satisfies Lemma~\ref{lem:not-greater-than-beta-star}
%and attains all the upper bounds in five last items
%of Lemma~\ref{lem:not-greater-than-beta-star}. %, that implies the desired result.
%
By considering the four steps involved in the construction of $f_4$, it 
clearly appears that $f_4$ satisfies the five items of Lemma~\ref{lem:end-leaf-broadcast},
item~2 of Lemma~\ref{lem:internal-stems} and item~2 of Lemma~\ref{lem:internal-trunks}.
Therefore, $f_4$ satisfies item~1 of Lemma~\ref{lem:not-greater-than-beta-star}.
Moreover, if $v_i$ is a trunk that does not appear in any pattern
considered in Lemma~\ref{lem:not-greater-than-beta-star},
then $f_4(v_i)=1$. Indeed, the $f_4$-value of
$v_i$ is set to~1 in step~1 of Lemma~\ref{lem:beta-star}
and is not modified in steps 2 to~4.

We now prove that $f_4$ satisfies the five last items
of Lemma~\ref{lem:not-greater-than-beta-star} and that, in each case,
the upper bound is attained. We will refer to steps 1 to~4 of
the proof of Lemma~\ref{lem:beta-star} and to the corresponding
intermediate independent broadcasts $f_1$ to~$f_3$.
Recall first that in step~1, every trunk and every pendent vertex is assigned the value~1.

\begin{enumerate}
\item {\it Item~2 of Lemma~\ref{lem:not-greater-than-beta-star}}.\\
If $v_i$ is a stem with $\lambda_i\ge 3$, the value of
its pendent neighbours is not modified in steps 2 to~4.
Therefore, we get $f_4^*(v_i)=f_1^*(v_i)=\lambda_i$ for every such $v_i$.

\item {\it Item~3 of Lemma~\ref{lem:not-greater-than-beta-star}}.\\
Let $v_av_{a+1}$, $0\le a<k$,  be an occurrence of the pattern $1^+2^-$
(the case $2^-1^+$ is similar, by symmetry).
Note here that if $v_{a+1}$ is the leftmost vertex
of an occurrence of the pattern $1^+2(02)^{+r}1^+$,
then the value of its pendent
neighbours is not modified in step~3.

If $\lambda_{a+1}=1$, then, in step~2, the value of
$\ell_{a+1}^1$ is set to~2 and not modified in step~4.
If $\lambda_{a+1}=2$, then the value of the pendent
neighbours of $v_{a+1}$ is not modified in steps 2 and~4.
Therefore, $f_4^*(v_{a+1})=2$ in both cases.

\item {\it Item~4 of Lemma~\ref{lem:not-greater-than-beta-star}}.\\
Let $v_a\sigma v_b=v_iv_{i+1}\dots v_{i+2r+2}$ be
an occurrence of the pattern $1^+2^-(02^-)^{+r}1^+$,
for some $i$, $0\le i\le k-2r-2$.
In that case, we have $t_\sigma=r$.

If $v_a\sigma v_b$ is
an occurrence of the pattern $1^+2(02)^{+r}1^+$,
the value of the vertices of $\sigma$ are not modified in steps 2 to~4.
Therefore, we have $f_4^*(\sigma)=f_1^*(\sigma)=2(r+1)+r=3r+2=3t_\sigma+2$.

Suppose now that $\sigma$ contains at least one stem having only one pendent neighbour.
In step~3, the value of $\ell_{i+1}^1$ is set to~2 if $\lambda_{i+1}=1$,
the value of $\ell_{i+2r+1}^1$ is set to~2 if $\lambda_{i+2r+1}=1$,
 the value of $\ell_{i+2j+1}^1$, $1\le j\le r-1$, is set to~3 
(and the value of $\ell_{i+2j+1}^2$ is set to~0 if $\lambda_{i+2j+1}=2$), and the value of every trunk is set to~0.
We thus get
$$f_4^*(\sigma)=f_3^*(\sigma)=2+2+3(r-1)=3r+1=3t_\sigma+1.$$

\item {\it Item~5 of Lemma~\ref{lem:not-greater-than-beta-star}}.\\
Let $\sigma=v_iv_{i+1}\dots v_{i+2r+2}$ be
an occurrence of the pattern $02^-(02^-)^{*r}0$,
for some $i$, $1\le i\le k-2r-3$.
In that case, we have $t_\sigma=r+2$.

If $\sigma$ is
an occurrence of the pattern $02(02)^{*r}0$,
the value of the vertices of $\sigma$ are not modified in steps 2 to~4.
Therefore, we have $f_4^*(\sigma)=f_1^*(\sigma)=2(r+1)+r+2=3r+4=3t_\sigma-2$.

Suppose now that $\sigma$ contains at least one stem having only one pendent neighbour.
In step~3, 
 the value of $\ell_{i+2j+1}^1$, $0\le j\le r$, is set to~3 
(and the value of $\ell_{i+2j+1}^2$ is set to~0 if $\lambda_{i+2j+1}=2$), and the value of every trunk is set to~0.
We thus get
$$f_4^*(\sigma)=f_3^*(\sigma)=3(r+1)=3r+3=3t_\sigma-3.$$

\item {\it Item~6 of Lemma~\ref{lem:not-greater-than-beta-star}}.\\
Let $v_0\dots v_{2r+1}$ be an occurrence of the pattern $[2^-(02^-)^{*r}0$
(the case $02^-(02^-)^{*r}]$ is similar, by symmetry).
In that case, we have $t_\sigma=r+1$.

In step~3, 
 the value of $\ell_{2j}^1$, $0\le j\le r$, is set to~3 
(and the value of $\ell_{2j}^2$ is set to~0 if $\lambda_{2j}=2$), and the value of every trunk is set to~0.
We thus get
$$f_4^*(\sigma)=f_3^*(\sigma)=3(r+1)=3r+3=3t_\sigma.$$

\end{enumerate}

By Lemma~\ref{lem:not-greater-than-beta-star},
we know that there exists an optimal non-canonical independent broadcast
$\tilde{f}$  with $\cost(\tilde{f})=\cost(f)$ which satisfies all items
of Lemma~\ref{lem:not-greater-than-beta-star}.
We have proved that the non-canonical independent broadcast $f_4$
constructed in the proof of Lemma~\ref{lem:beta-star} also
satisfies all items
of Lemma~\ref{lem:not-greater-than-beta-star}.
Thanks to Lemma~\ref{lem:all-but-special-case}, we thus have
$$\cost(f)=\cost(\tilde{f})\le\cost(f_4)=\beta^*(CT),$$
which completes the proof.
\end{proof}

%%%%%%%%%%%%%%%%%%%%% MAIN THEOREM

We are now able to state our main result, which determines the 
broadcast independent number of any caterpillar with no pair of adjacent trunks.

\begin{theorem}
Let $CT=CT(\lambda_0,\dots,\lambda_k)$ be a caterpillar of length $k\ge 1$, with no pair of adjacent trunks.
The broadcast independence number of $CT$ is then given by:
$$\beta_b(CT)=\max\big\{2(\diam(CT)-1),\beta^*(CT)\big\}.$$
\label{th:main}
\end{theorem}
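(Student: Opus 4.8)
The plan is to combine the three structural facts already assembled into a clean two–sided inequality. On one side, Observation~\ref{obs:2(d-1)} gives $\beta_b(CT)\ge 2(\diam(CT)-1)$ via the canonical broadcast $f_c$, and Lemma~\ref{lem:beta-star} gives $\beta_b(CT)\ge\beta^*(CT)$ by exhibiting an explicit independent broadcast $f_4$ with $\cost(f_4)=\beta^*(CT)$. Hence $\beta_b(CT)\ge\max\{2(\diam(CT)-1),\beta^*(CT)\}$ is immediate, and I would dispose of it in one or two sentences.

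For the reverse inequality, let $f$ be an optimal independent broadcast on $CT$, so $\cost(f)=\beta_b(CT)$. By Lemma~\ref{lem:stem-0}, $f$ can be taken with $f(v_i)=0$ on every stem. Now split into two cases according to whether $f$ is canonical. If $f$ is canonical, then $|V^+_f|=2$ and $f^*=f_c^*$, so $\cost(f)=2(\diam(CT)-1)$ and we are done. If $f$ is non-canonical, then it is in particular an \emph{optimal non-canonical} independent broadcast (since any independent broadcast of larger cost would itself be optimal and either canonical---forcing $\cost(f)=2(\diam(CT)-1)$, a contradiction with it being larger---or non-canonical of cost $>\cost(f)$, contradicting optimality of $f$). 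Then either $\cost(f)\le 2(\diam(CT)-1)$, in which case there is nothing to prove, or $\cost(f)>2(\diam(CT)-1)$, in which case Lemma~\ref{lem:beta-star-is-the-best} applies directly and gives $\cost(f)\le\beta^*(CT)$. In all cases $\cost(f)\le\max\{2(\diam(CT)-1),\beta^*(CT)\}$, which is exactly what is needed.

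The only genuinely delicate point is the bookkeeping in the non-canonical case: one must be careful that "$f$ optimal and non-canonical" indeed makes $f$ an optimal \emph{non-canonical} broadcast in the sense of the definition preceding Lemma~\ref{lem:beta-star}, and that the hypothesis $\cost(f)>2(\diam(CT)-1)$ of Lemma~\ref{lem:beta-star-is-the-best} is harmless because its failure trivially yields the desired bound. There is also the borderline situation where $\beta^*(CT)<2(\diam(CT)-1)$: then the maximum is $2(\diam(CT)-1)$, an optimal broadcast must be canonical (any non-canonical one of cost $>2(\diam(CT)-1)$ would contradict Lemma~\ref{lem:beta-star-is-the-best}), and equality holds. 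So the main obstacle is not mathematical depth but making the case analysis airtight; every hard step has already been discharged in Lemmas~\ref{lem:stem-0}, \ref{lem:beta-star} and~\ref{lem:beta-star-is-the-best}.

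In summary, the proof is a short synthesis: (1) lower bound from Observation~\ref{obs:2(d-1)} and Lemma~\ref{lem:beta-star}; (2) upper bound by taking an optimal broadcast, reducing via Lemma~\ref{lem:stem-0}, and branching on canonical versus non-canonical, the latter branch invoking Lemma~\ref{lem:beta-star-is-the-best} after observing that a cost not exceeding $2(\diam(CT)-1)$ needs no argument. I would write it in well under half a page.
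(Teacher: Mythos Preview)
Your proposal is correct and follows essentially the same approach as the paper: both combine Observation~\ref{obs:2(d-1)} and Lemma~\ref{lem:beta-star} for the lower bound, and for the upper bound reduce to the case of an optimal non-canonical broadcast with cost exceeding $2(\diam(CT)-1)$, where Lemma~\ref{lem:beta-star-is-the-best} applies directly. Your invocation of Lemma~\ref{lem:stem-0} and the extra bookkeeping about ``optimal non-canonical'' are not needed (the paper's proof is even shorter), but they do no harm.
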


\begin{proof}
We know by Observation~\ref{obs:2(d-1)} that $\beta_b(CT)\ge 2(\diam(CT)-1)$
and we already observed that the canonical independent broadcast $f_c$ on $CT$
satisfies $\cost(f_c)=2(\diam(CT)-1)$.
According to Lemma~\ref{lem:beta-star}, it is thus enough to prove
that for any optimal non-canonical independent broadcast $f$ on $CT$
with $\cost(f) > 2(\diam(CT)-1)$, $\cost(f)\le\beta^*(CT)$, which directly follows
from Lemma~\ref{lem:beta-star-is-the-best}.
\end{proof}

%\attention{Corollaire: formule pour les chenilles sans troncs}

In several cases, the value of $\beta^*(CT)$ has a simple expression.
Consider for instance a caterpillar $CT$, of length $k\ge 1$, having no trunk.
We then have 
$\beta^*(CT)=\lambda(CT)+n_1(CT)$,
where $n_1$ stands for the number of spine vertices having exactly one pendent
vertex.
Since $\lambda(CT)\ge n_1(CT) + 2(k+1-n_1(CT))=2k+2-n_1(CT)$
(spine vertices have either one or at least two pendent neighbours),
we get $\beta^*(CT)\ge 2k+2$, with equality if and only if $CT$
contains no stem with at least three pendent neighbours.
Since $2(\diam(CT)-1)=2k+2$, we get the following corollary of Theorem~\ref{th:main}.

\begin{corollary}
Let $CT$ be a caterpillar of length $k\ge 1$ having no trunk.
We then have $\beta_b(CT)=2k+2=2(\diam(CT)-1)$ if $CT$ has no stem with 
at least three pendent neighbours,
and $\beta_b(CT)=\lambda(CT)+n_1(CT)$ otherwise. 
\label{cor:no-trunk}
\end{corollary}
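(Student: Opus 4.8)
The plan is to obtain the statement as a direct specialization of Theorem~\ref{th:main}. Since $CT$ has length $k$, we have $\diam(CT)=k+2$, so $2(\diam(CT)-1)=2k+2$; by Theorem~\ref{th:main} it therefore suffices to compute $\beta^*(CT)$ when $CT$ has no trunk and to compare it with $2k+2$.

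First I would simplify the defining expression for $\beta^*(CT)$. As $CT$ has no trunk, $\tau(CT)=0$. Every pattern occurring in that expression apart from $\lambda(CT)$ and $\#_{CT}(\{1^+,[\}1\{1^+,]\})$ contains the symbol $0$ (a trunk): this is immediate for $02^-(02^-)^{*r}0$, $[2^-(02^-)^{*r}0$ and $02^-(02^-)^{*r}]$, and also for $1^+2^-(02^-)^{+r}1^+$, whose superscript $+r$ forces at least one copy of the block $02^-$. Hence none of these patterns can occur in $CT$, so each of $\alpha_1(CT;1^+2^-(02^-)^{+r}1^+)$ and the three $\alpha_2$-terms is an empty sum and vanishes. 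Finally, in a trunk-free caterpillar every spine vertex is a stem with at least one pendent neighbour, so the side conditions recorded by the markers $1^+/[$ and $1^+/]$ hold automatically; hence $\#_{CT}(\{1^+,[\}1\{1^+,]\})$ counts exactly the spine vertices $v_i$ with $\lambda_i=1$, that is, it equals $n_1(CT)$. This yields $\beta^*(CT)=\lambda(CT)+n_1(CT)$.

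The last step is the elementary count already indicated before the statement. The $k+1$ spine vertices split into the $n_1(CT)$ stems with exactly one pendent neighbour and the remaining $k+1-n_1(CT)$ stems, each of the latter having at least two pendent neighbours; hence $\lambda(CT)\ge n_1(CT)+2(k+1-n_1(CT))=2k+2-n_1(CT)$, with equality precisely when each of those remaining stems has exactly two pendent neighbours, that is, when $CT$ has no stem with at least three pendent neighbours. Therefore $\beta^*(CT)=\lambda(CT)+n_1(CT)\ge 2k+2$, with equality in the same case. Substituting into $\beta_b(CT)=\max\{2k+2,\beta^*(CT)\}$ then gives $\beta_b(CT)=2k+2=2(\diam(CT)-1)$ when $CT$ has no stem with at least three pendent neighbours, and $\beta_b(CT)=\beta^*(CT)=\lambda(CT)+n_1(CT)$ otherwise.

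There is essentially no obstacle: the corollary is a bookkeeping consequence of Theorem~\ref{th:main}. The only points deserving a line of care are verifying that the trunk-dependent terms of $\beta^*$ really disappear (in particular that $1^+2^-(02^-)^{+r}1^+$ cannot occur in a trunk-free caterpillar), and checking the boundary conventions in $\{1^+,[\}1\{1^+,]\}$, so that short caterpillars, e.g.\ those of length $k=1$ where the markers $[$ and $]$ act on the two end stems, are counted correctly.
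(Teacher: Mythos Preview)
Your proof is correct and follows essentially the same route as the paper: the paper derives the corollary directly from Theorem~\ref{th:main} by observing that, in a trunk-free caterpillar, $\beta^*(CT)=\lambda(CT)+n_1(CT)$ and then performing the same elementary count $\lambda(CT)\ge 2k+2-n_1(CT)$, with equality precisely when no stem has at least three pendent neighbours. Your write-up is somewhat more explicit about why the trunk-dependent terms of $\beta^*$ vanish and why $\#_{CT}(\{1^+,[\}1\{1^+,]\})=n_1(CT)$, but the argument is the same.
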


Moreover, thanks to Observation~\ref{obs:subgraph}, we 
can also give the broadcast independent number of caterpillars
having adjacent trunks but no stem with 
at least three pendent neighbours.

\begin{corollary}
Let $CT$ be a caterpillar of length $k\ge 1$.
If $CT$ has no stem with at least three pendent neighbours,
then $\beta_b(CT)=2k+2=2(\diam(CT)-1)$. 
\label{cor:many-trunk}
\end{corollary}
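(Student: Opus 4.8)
The plan is to derive Corollary~\ref{cor:many-trunk} from Corollary~\ref{cor:no-trunk} (or directly from Theorem~\ref{th:main}) by a subgraph-monotonicity argument based on Observation~\ref{obs:subgraph}. First I would observe that the lower bound is immediate: since $\diam(CT)=k+2$, Observation~\ref{obs:2(d-1)} gives $\beta_b(CT)\ge 2(\diam(CT)-1)=2k+2$. So the whole content is the upper bound $\beta_b(CT)\le 2k+2$ when $CT$ has no stem with at least three pendent neighbours, \emph{without} the hypothesis on adjacent trunks.

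For the upper bound, let $f$ be an optimal independent broadcast on $CT$, and consider the caterpillar $CT'$ obtained from $CT$ by subdividing, or rather by ``expanding'', each maximal run of consecutive trunks: more precisely, I would build a caterpillar $CT'$ of the \emph{same length} $k$ with the same spine $v_0\dots v_k$, in which every trunk $v_i$ of $CT$ that is adjacent to another trunk is given exactly one extra pendent neighbour (so it becomes a stem with one pendent vertex), while all other spine vertices keep their pendent neighbours. Then $CT'$ is a caterpillar of length $k$ having no trunk and no stem with at least three pendent neighbours, so Corollary~\ref{cor:no-trunk} applies and gives $\beta_b(CT')=2k+2$. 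The point is that $CT$ is (isomorphic to) a subgraph of $CT'$: we only added pendent vertices. Hence, by Observation~\ref{obs:subgraph}, if $f$ is an independent broadcast on $CT$ with $f(u)\le e_{CT}(u)$ for every $u\in V(CT)$ — which holds by definition — and since $e_{CT}(u)\le e_{CT'}(u)$ for every $u\in V(CT)$ (adding vertices cannot decrease eccentricities), $f$ extended by $0$ on the new pendent vertices is an independent broadcast on $CT'$. Therefore $\cost(f)\le\beta_b(CT')=2k+2$, and combining with the lower bound yields $\beta_b(CT)=2k+2$.

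The delicate point — and the one I would write out carefully — is the direction of the eccentricity inequality together with the fact that the added vertices do not increase the diameter beyond $k+2$. Going from $CT$ to $CT'$ we add leaves hanging off internal spine vertices $v_i$ with $1\le i\le k-1$; such a new leaf is at distance $i+1$ from $v_0$ and $k-i+1$ from $v_k$, both at most $k+1<k+2$, so indeed $\diam(CT')=k+2=\diam(CT)$ and no eccentricity in $CT$ strictly increases relative to what is needed; more importantly, eccentricities computed \emph{inside} $CT'$ of vertices of $CT$ are at least those in $CT$ (this is the general ``subgraph'' inequality $d_H\ge d_G$ specialised here), so the broadcast condition $f(u)\le e_{CT'}(u)$ is satisfied automatically and the independence condition $d_{CT'}(u,v)\ge d_{CT}(u,v)>\max\{f(u),f(v)\}$ is preserved. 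Thus the only genuine obstacle is bookkeeping: checking that $CT'$ really has no trunk (every former trunk now has a pendent neighbour, and every former stem already had one) and no stem with three or more pendent neighbours (former stems had at most two by hypothesis; former trunks now have exactly one). Once this is in place, the corollary follows in two lines from Corollary~\ref{cor:no-trunk} and Observation~\ref{obs:subgraph}.

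Alternatively, and perhaps more cleanly for the write-up, I would avoid constructing $CT'$ and instead invoke Corollary~\ref{cor:no-trunk} on a slightly different auxiliary caterpillar, or argue directly: take an optimal independent broadcast $f$ on $CT$; by Lemma~\ref{lem:stem-0} we may assume $f$ vanishes on all stems, and if $|V_f^+|\le 1$ then $\cost(f)\le\diam(CT)-1<2k+2$, a contradiction to optimality via Observation~\ref{obs:2(d-1)}, while if $|V_f^+|\ge 2$ we could try to mimic the pattern-counting bound of Lemma~\ref{lem:not-greater-than-beta-star}. I expect the subgraph argument to be strictly shorter and more robust, so that is the route I would take; the main thing to be careful about is simply that adding pendent leaves to interior trunks produces a legitimate trunk-free caterpillar to which Corollary~\ref{cor:no-trunk} literally applies.
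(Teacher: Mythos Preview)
Your approach coincides with the paper's: add leaves to the trunks of $CT$ to obtain a caterpillar $CT'$ with no trunk (and still no stem with three or more leaves), apply Corollary~\ref{cor:no-trunk} to $CT'$, and transfer the upper bound back to $CT$ via the subgraph relation. Two corrections are needed. First, a small one: as written, you add a leaf only to trunks \emph{adjacent to another trunk}, so an isolated trunk of $CT$ would remain a trunk of $CT'$ and Corollary~\ref{cor:no-trunk} would not apply; simply add a leaf to \emph{every} trunk (your later ``bookkeeping'' check already assumes this).

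Second, and more substantively, your justification of the ``delicate point'' misapplies the subgraph inequality. With $CT$ a subgraph of $CT'$, the general fact is $d_{CT}\ge d_{CT'}$, not the reverse, and Observation~\ref{obs:subgraph} concerns \emph{restricting} a broadcast from the supergraph, not \emph{extending} one from the subgraph. Your extension argument survives only because here $CT$ is a \emph{subtree} of the tree $CT'$, so distances between vertices of $CT$ are actually equal in both graphs (the unique $CT'$-path between two vertices of $CT$ lies entirely in $CT$). With $d_{CT'}(u,v)=d_{CT}(u,v)$ in hand, independence of the extended broadcast is immediate, and $e_{CT'}(u)\ge e_{CT}(u)$ follows simply from $V(CT')\supsetneq V(CT)$, not from the (wrong-way) distance inequality. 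Rewrite this step to invoke equality of distances in subtrees; once that is done the proof is complete and matches the paper's intended argument.
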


Finally, note that if every stem in a caterpillar $CT$ of length $k\ge 1$
with no pair of adjacent trunks has at least three pendent neighbours,
then no pattern involved in the definition of $\beta^*(CT)$ can appear in $CT$.
In that case, since $\tau(CT)\le\left\lfloor \frac{k}{2}\right\rfloor$ and 
$\lambda(CT)\ge 3\left(\left\lceil \frac{k}{2}\right\rceil+1\right)$, we get 
$$\beta^*(CT)=\lambda(CT)+\tau(CT)>2k+2=2(\diam(CT)-1).$$
Therefore, we have:

\begin{corollary}
Let $CT$ be a caterpillar of length $k\ge 1$, with no pair of adjacent trunks.
If all stems in $CT$ have at least three pendent neighbours,
then $\beta_b(CT)=\lambda(CT)+\tau(CT)$. 
\label{cor:0-or-3+}
\end{corollary}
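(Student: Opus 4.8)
The plan is to read off the result from Theorem~\ref{th:main} after two short observations: under the stated hypothesis, $\beta^*(CT)$ collapses to $\lambda(CT)+\tau(CT)$, and this quantity already exceeds $2(\diam(CT)-1)$, so the maximum in Theorem~\ref{th:main} is attained by $\beta^*(CT)$.

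For the first observation, I would go through the five ``pattern'' terms occurring in the definition of $\beta^*(CT)$, namely $\#_{CT}(\{1^+,[\}1\{1^+,]\})$, $\alpha_1(CT;1^+2^-(02^-)^{+r}1^+)$, $\alpha_2(CT;02^-(02^-)^{*r}0)$, $\alpha_2(CT;[2^-(02^-)^{*r}0)$ and $\alpha_2(CT;02^-(02^-)^{*r}])$, and argue that none of the corresponding patterns can occur in $CT$. Each of these patterns forces some spine position to carry a symbol ``$1$'' or ``$2^-$'': in the first case that position is a stem with exactly one pendent neighbour, and in all the other cases the ``$2^-$'' symbol is either an end of the caterpillar (hence a stem, since $\lambda_0,\lambda_k\ge1$) or lies next to a ``$0$'', and since $CT$ has no pair of adjacent trunks such a ``$2^-$'' position cannot be a trunk, so it is a stem with one or two pendent neighbours. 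By hypothesis $CT$ has no stem with fewer than three pendent neighbours, so all five terms vanish and $\beta^*(CT)=\lambda(CT)+\tau(CT)$.

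For the second observation, recall $\diam(CT)=k+2$, so $2(\diam(CT)-1)=2k+2$. The trunks of $CT$ are internal spine vertices that are pairwise non-adjacent, so there are at most $\lfloor k/2\rfloor$ of them, giving $2\tau(CT)\le k$. The remaining $k+1-\tau(CT)$ spine vertices are stems, each with at least three pendent neighbours, so $\lambda(CT)\ge 3\bigl(k+1-\tau(CT)\bigr)$. Hence
$$\beta^*(CT)=\lambda(CT)+\tau(CT)\ \ge\ 3(k+1)-2\tau(CT)\ \ge\ 3(k+1)-k\ =\ 2k+3\ >\ 2(\diam(CT)-1),$$
and Theorem~\ref{th:main} yields $\beta_b(CT)=\beta^*(CT)=\lambda(CT)+\tau(CT)$.

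I expect no genuine difficulty: the whole argument is elementary counting on top of Theorem~\ref{th:main}. The only step requiring a little attention is the verification that every one of the listed patterns really does force a stem with at most two pendent neighbours, where one must use the no-adjacent-trunks hypothesis to rule out a ``$2^-$'' symbol that sits beside a ``$0$'' from being realized by a trunk.
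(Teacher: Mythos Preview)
Your proposal is correct and follows essentially the same approach as the paper: observe that the hypothesis kills all pattern terms in $\beta^*(CT)$, then check $\beta^*(CT)=\lambda(CT)+\tau(CT)>2(\diam(CT)-1)$ and invoke Theorem~\ref{th:main}. Your treatment is in fact more explicit than the paper's, which simply asserts that no pattern can occur and states the bounds $\tau(CT)\le\lfloor k/2\rfloor$, $\lambda(CT)\ge 3(\lceil k/2\rceil+1)$ without further detail; your inequality chain $\lambda(CT)+\tau(CT)\ge 3(k+1)-2\tau(CT)\ge 2k+3$ is a cleaner way to reach the same conclusion.
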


%%%%%%%%%%%%%%%%%%%%%%%%%%%%%%%%%%%%%%%%%%%%%%%%%%%%%%%%%%%%%%%%%%
%%%%%%%%%%%%%%%%%%%%%%%%%%%%%%%%%%%%%%%%%%%%%%%%%%%%%%%%%%%%%%%%%%
%%%%%%%%%%%%%%%%%%%%%%%%%%%%%%%%%%%%%%%%%%%%%%%%%%%%%%%%%%%%%%%%%%
%%%%%%%%%%%%%%%%%%%%%%%%%%%%%%%%%%%%%%%%%%%%%%%%%%%%%%%%%%%%%%%%%%
%%%%%%%%%%%%%%%%%%%%%%%%%%%%%%%%%%%%%%%%%%%%%%%%%%%%%%%%%%%%%%%%%%
%%%%%%%%%%%%%%%%%%%%%%%%%%%%%%%%%%%%%%%%%%%%%%%%%%%%%%%%%%%%%%%%%% DISCUSSION
\section{Concluding remarks}
\label{sec:discussion}

In this paper, we studied independent broadcasts of caterpillars and gave an explicit formula for the
broadcast independence number of caterpillars having no pair of adjacent trunks.

This result concerns a quite restricted subclass of the class of trees, but the broadcast independence
number is certainly a difficult parameter  to determine for trees, and probably even for caterpillars.
It should be noticed here that the computational complexity of the decision problem
associated with the broadcast independence number is not known yet, even for trees, although this
question was already posed in~\cite{DEHHH06} and~\cite{H06}.
(The only complexity result about a broadcast parameter, among those introduced in~\cite{DEHHH06}, 
is due to Heggernes and Lokshtanov~\cite{HL06}, who proved that computing the broadcast domination number
$\gamma_b(G)$ of any graph $G$ can be done in polynomial time.)

Finally, we consider that the following questions are of particular interest.

\begin{enumerate}
\item Can we determine the broadcast independence number of caterpillars?
(We should notice here that for caterpillars with adjacent trunks,
Lemmas~\ref{lem:stem-0} and~\ref{lem:end-leaves} still hold, 
while Lemma~\ref{lem:internal-stems} does not.
This explains why we think that this might be a not so easy question.)

\item Can we determine the broadcast independence number of other subclasses of
the class of trees?
In particular, what about $k$-ary trees?

\item Can we characterize the set of caterpillars $CT$ for which
$\beta_b(CT)=2(\diam(CT)-1)$? (Partial answers are given by Corollaries~\ref{cor:no-trunk}
and~\ref{cor:many-trunk}.)

\item More generally, can we characterize the set of trees $T$ for which
$\beta_b(T)=k(\diam(T)-1)$, where $k$ is the maximum size of a set of pairwise antipodal
vertices in $T$?
\end{enumerate}

%%%%%%%%%%%%%%%%%%%%%%%%%%%%%%%%% BIBLIOGRAPHY

\end{document}